\def \calf {{\cal F}}
\newtheorem{observation}[theorem]{Observation}
\title{Perfectly Matched Sets in Graphs: Parameterized and Exact Computation} 
\titlerunning{Perfectly Matched Sets in Graphs} 
\author{N.R. Aravind}
{Department of Computer Science \& Engineering, Indian Institute of Technology Hyderabad, India}
{aravind@cse.iith.ac.in}{https://orcid.org/0000-0002-6590-7592}
{}
\author{Roopam Saxena}
{Department of Computer Science \& Engineering, Indian Institute of Technology Hyderabad, India}{cs18resch11004@iith.ac.in}{}{}
\authorrunning{N.R. Aravind and R. Saxena} 
\titlerunning{Perfectly Matched Sets in Graphs}
\keywords{perfectly matched sets, \and  fixed parameter tractable, \and algorithms, \and perfect matching.}
\begin{document}

\maketitle
 
\begin{abstract}
In an undirected graph $G=(V,E)$, we say that $(A,B)$ is a pair of perfectly matched sets if $A$ and $B$ are disjoint subsets of $V$ and every vertex in $A$ (resp. $B$) has exactly one neighbor in $B$ (resp. $A$). The size of a pair of perfectly matched sets $(A,B)$ is $|A|=|B|$. 
The PERFECTLY MATCHED SETS problem is to decide whether a given graph $G$ has a pair of perfectly matched sets of size $k$.
We show that PMS is $W[1]$-hard when parameterized by solution size $k$ even when restricted to split graphs and bipartite graphs. We observe that PMS is FPT when parameterized by clique-width, and give FPT algorithms with respect to the parameters distance to cluster, distance to co-cluster and treewidth.
Complementing FPT results, we show that PMS does not admit a polynomial kernel when parameterized by vertex cover number unless $\text{NP}\subseteq \text{coNP/poly}$.
We also provide an exact exponential algorithm running in time $O^*(1.966^n)$. 
Finally, considering graphs with structural assumptions, we show that PMS remains NP-hard on planar graphs.

\keywords{perfectly matched sets, \and  fixed parameter tractable, \and algorithms, \and perfect matching.}
\end{abstract}

\section{Introduction}

Consider the following communication problem: we have an undirected graph each of whose nodes can send or receive messages. We wish to assign some nodes as transmitter or receiver, and then test the fidelity of transmission between transmitter-receiver pairs under the following assumptions: (a) there is no interference, i.e. a receiver doesn't get a message from more than one sender; (b) each sender can send at most one message at a time. What is the maximum number of pairs that can be tested simultaneously?
This question was first studied in \cite{SV82}, where the underlying abstract problem was called TR-matching and shown to be NP-complete on 3-regular graphs.

We first formally define the problem PERFECTLY MATCHED SETS.
\begin{tcolorbox}
{
PERFECTLY MATCHED SETS (PMS): \newline
\textit{Input:} An instance $I$ = $(G,k)$, where $G=(V,E)$ is an undirected graph, and $k\in \mathbb{N}$.\newline
\textit{Output:} YES, if $G$ contains two disjoint sets $A$ and $B$ of size $k$ each such that every vertex in $A$ (resp $B$) has exactly one neighbor in $B$ (resp $A$); NO otherwise.
}
\end{tcolorbox}
The above definition is same as that of the TR-matching problem, as introduced in \cite{SV82}; however we have renamed it because of its relation to two recently well-studied problems: MATCHING CUT and PERFECT MATCHING CUT.

\subsection{Our results}

We revisit this problem in the context of designing parameterized and exact algorithms.
In the context of parameterized complexity, the most natural parameter for PMS is the solution size. In Section \ref{W1-hard}, we start by showing that PMS is $W[1]$-hard when parameterized by solution size, even when restricted to split graph and bipartite graphs. This naturally motivates the study of PMS with respect to other structural parameters to obtain tractability. In Section \ref{clique_width}, we observe that PMS is FPT when parameterized by clique-width using Courcelle's Theorem \cite{CourcelleLinear}. This positive result motivated us to look for efficient FPT algorithms for PMS with other structural parameters. In Sections \ref{sec_cluster}, \ref{sec_co-cluster} and \ref{section-tw}, we obtain FPT algorithms for PMS with parameters distance to cluster, distance to co-cluster and tree-width, these parameters are unrelated to each other and are some of the most widely used structural parameters. On the kernelization side, in Section \ref{section-kernel-lb}, we show that there does not exist a polynomial kernel for PMS when parameterized by vertex cover unless $\text{NP}\subseteq \text{coNP/poly}$. This kernelization lower bound is in contrast to the PMS being FPT when parameterized by vertex cover, which is due to PMS being FPT by distance to cluster graph (a generalization of vertex cover). In Section \ref{sec_exact}, using a result of \cite{LT2020}, we present an exact algorithm for PMS which runs in time $O^*(1.966^n)$. Finally, focusing on restricted graph classes, in Section \ref{section-planar-hardness} we show that PMS remains NP-hard when restricted to planar graphs.

We remark that we are also interested in the optimization version of our problem, i.e. finding subsets $(A,B)$ of maximum size such that $A$ and $B$ are perfectly matched; our exact and FPT (distance to cluster, distance to co-cluster and tree-width) algorithms solve the optimization version.

\subsection{Related work}
Given a graph $G$, a partition $(A,B)$ of $V(G)$ is a {\it matching cut} if every vertex in $A$ (resp $B$) has at most one neighbor in $B$ (resp $A$). The MATCHING CUT  problem is then to decide whether a given graph has a matching cut or not.

The MATCHING CUT problem has been extensively studied. Graham \cite{graham1970primitive} discussed the graphs with matching cut under the name of decomposable graphs. Chv{\'{a}}tal \cite{chvatal1984recognizing} proved that MATCHING CUT is NP-Complete for graphs with maximum degree 4. Bonsma \cite{Bonsma} proved that MATCHING CUT is  NP-complete for planar graphs with maximum degree 4 and with girth 5. Kratsch and Le \cite{kratsch16} provided an exact algorithm with running time $O^*(1.4143^n)\footnote{We use $O^*$ notation which suppresses polynomial factors. }$. Komusiewicz, Kratsch and Le \cite{DBLP:journals/dam/KomusiewiczKL20} provided a deterministic exact algorithm with running time $O^*(1.328^n)$. MATCHING CUT problem is also studied in parameterized realm with respect to various parameters in \cite{kratsch16,DBLP:journals/dam/KomusiewiczKL20,aravind2017structural,Gomes-Sau}. Hardness and polynomial time results have also been obtained for various structural assumptions in \cite{DBLP:conf/isaac/LeL16,DBLP:journals/tcs/LeL19,DBLP:conf/cocoon/HsiehLLP19}. Recently, enumeration version of matching cut is also studied \cite{DBLP:conf/stacs/GolovachKKL21}.

A special case of matching cut where for the partition $(A,B)$, every vertex in $A$ (resp $B$) has exactly one neighbor in $B$ (resp $A$) called perfect matching cut was studied by Heggernes
and Telle \cite{DBLP:journals/njc/HeggernesT98}, where they proved that PERFECT MATCHING CUT problem is NP-complete. Recently, Le and Telle \cite{LT2020} revisited the PERFECT MATCHING CUT problem and showed that it remains NP-complete even when restricted to bipartite graphs with maximum degree $3$ and arbitrarily large girth. 
They  also obtained an exact algorithm running in time $O^*(1.2721^n)$ for PERFECT MATCHING CUT \cite{LT2020}.

Observe that the PERFECT MATCHING CUT problem is more closely related to the PERFECTLY MATCHED SETS problem; indeed the later (with inputs $G,k$) is equivalent to deciding whether the given graph $G$ contains an induced subgraph of size $2k$ that has a perfect matching cut.

 For a graph $G$, a matching $M\subseteq E(G)$ is an induced matching if $(V(M),M)$ is an induced subgraph of $G$.  
 The problem of finding maximum induced matching in a graph is INDUCED MATCHING, and it can also be considered a related problem to PMS. Stockmeyer and Vazirani \cite{SV82} discussed  INDUCED MATCHING under the 'Risk-free' marriage problem. Since then INDUCED MATCHING is extensively studied. Hardness and polynomial time solvable results have been obtained with various structural assumptions \cite{IMS1,IMS2,IMS3,IMS4,IMS5,IMS6,IMS7,IMS8}. Exact algorithms \cite{IMEX1,IMEX2}, and FPT and kernelization results \cite{IMP1,IMP2,IMP3} have also been obtained.

\section{Preliminaries}

We use $[n]$ to denote the set $\{1,2,....,n\}$, and $[0]$ denotes an empty set. For a function $f:X\to Y$, for an element $e\in Y$,  $f^{-1}(e)$ is defined to be the set of all elements of $X$ that map to $e$. Formally, $f^{-1}(e)=\{x\mid x\in X \land f(x)=e\} $. 

\subsection{Graph Notations} All the graphs that we refer to are simple and finite. We mostly use standard notations and terminologies. We use $G=(V,E)$ to denote a graph with vertex set $V$ and edge set $E$. $E(G)$ denotes the set of edges of graph $G$, and $V(G)$ denotes the set of vertices of $G$. For $E'\subseteq E$, $V(E')$ denotes the set of all vertices of $G$ with at least one edge in $E'$ incident on it. For a vertex set $X\subseteq V$, $G[X]$ denotes the induced subgraph of $G$ on vertex set $X$, and $G-X$ denotes the graph $G[V\setminus X]$.
 For an edge set $E'\subseteq E$, $G[E']$ denotes the subgraph of $G$ on edge set $E'$ i.e. $G[E']=(V(E'),E')$.

For disjoint vertex sets $A\subseteq V$ and $B \subseteq V$, $E(A,B)$ denotes the set of all the edges of $G$ with one endpoint in $A$ and another in $B$. For a vertex $v\in V$, we use $N(v)$ to denote the open neighborhood of $v$, i.e. set of all vertices adjacent to $v$ in $G$. We use $N[v]$ to denote the closed neighborhood of $v$, i.e. $N(v)\cup \{v\}$. 

A graph is a \textit{cluster graph} if it is a vertex disjoint union of cliques. The maximal cliques of a cluster graph are called cliques or clusters. A graph is a \textit{co-cluster graph} if it is a complement of a cluster graph or equivalently a complete multipartite graph.\\

\subsection{Parameterized Complexity} 

For details on parameterized complexity, we refer to \cite{DBLP:books/sp/CyganFKLMPPS15,DBLP:series/txcs/DowneyF13}, and recall some definitions here.

\begin{definition}[\cite{DBLP:books/sp/CyganFKLMPPS15}]
A \textit{parameterized problem} is a language $L \subseteq \Sigma^* \times \mathbb{N} $ where $\Sigma$ is a fixed and finite alphabet. For an instance $I=(x,k) \in \Sigma^* \times \mathbb{N} $, $k$ is called the parameter.  A parameterized problem is called \textit{fixed-parameter tractable} (FPT) if there exists an algorithm $\cal A$ (called a \textit{fixed-parameter algorithm} ), a computable function $f:\mathbb{N} \to \mathbb{N}$, and a constant $c$ such that, the algorithm $\cal A$ correctly decides whether $(x,k)\in L$ in time bounded by $f(k).|(x,k)|^c$. The complexity class containing all fixed-parameter tractable problems is called FPT.
\end{definition}
In the above definition, $|(x,k)|$ denotes the size of the input for a problem instance $(x,k)$.
Informally, a $W[1]$-hard problem is unlikely to be fixed parameter tractable, see \cite{DBLP:books/sp/CyganFKLMPPS15} for details on complexity class $W[1]$.
\begin{definition}[\cite{DBLP:books/sp/CyganFKLMPPS15}]
Let $P,Q$ be two parameterized problems. A parameterized reduction from $P$ to $Q$ is an algorithm which for an instance $(x,k)$ of $P$ outputs an instance $(x',k')$ of $Q$ such that:
\begin{itemize}
    \item $(x,k)$ is yes instance of $P$ if and only if $(x',k')$ is a yes instance of $Q$,
    \item $k'\leq g(k)$ for some computable function $g$, and
    \item the reduction algorithm takes time $f(k)\cdot |x|^{O(1)}$ for some computable function $f$
\end{itemize}
\end{definition}

\begin{theorem}[\cite{DBLP:books/sp/CyganFKLMPPS15}]\label{def-parameterized reduction}
If there is a parameterized reduction from $P$ to $Q$ and $Q$ is fixed parameter tractable then $P$ is also fixed parameter tractable.
\end{theorem}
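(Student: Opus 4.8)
The plan is to obtain a fixed-parameter algorithm for $P$ by composing the given parameterized reduction with the fixed-parameter algorithm guaranteed for $Q$. Since $Q \in \text{FPT}$, there is an algorithm $\mathcal{B}$ deciding membership in $Q$ in time $h(k')\cdot|x'|^{c}$ for some computable function $h$ and constant $c$. The algorithm for $P$ on input $(x,k)$ proceeds in two stages: first it runs the reduction to produce the instance $(x',k')$ of $Q$, and then it runs $\mathcal{B}$ on $(x',k')$ and returns $\mathcal{B}$'s answer. Correctness is immediate from the first property of a parameterized reduction, namely that $(x,k)$ is a yes-instance of $P$ if and only if $(x',k')$ is a yes-instance of $Q$; since $\mathcal{B}$ decides the latter correctly, the composed algorithm decides $P$ correctly.

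The heart of the argument is the running-time analysis, where the only point requiring care is keeping the parameter dependence cleanly separated from the polynomial dependence on the input size. The key observation is that the size of the produced instance is bounded by the running time of the reduction, so $|x'| \le f(k)\cdot|x|^{O(1)}$, while the third property gives the reduction time itself as $f(k)\cdot|x|^{O(1)}$ and the second gives $k' \le g(k)$. Substituting these bounds into the running time of $\mathcal{B}$ yields $h(k')\cdot|x'|^{c} \le h\bigl(g(k)\bigr)\cdot\bigl(f(k)\cdot|x|^{O(1)}\bigr)^{c} = h\bigl(g(k)\bigr)\cdot f(k)^{c}\cdot|x|^{O(1)}$, where I use that $h$ may be assumed non-decreasing after replacing it by $\bar{h}(n) := \max_{i \le n} h(i)$, which is still computable.

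Adding the reduction's own running time, the total is bounded by $f'(k)\cdot|x|^{O(1)}$ with $f'(k) := f(k) + h\bigl(g(k)\bigr)\cdot f(k)^{c}$, which is a computable function of $k$ alone since $f$, $g$, and $h$ are all computable. This is exactly the form demanded by the definition of FPT, so $P \in \text{FPT}$, completing the proof.

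The main (and essentially only) obstacle is ensuring that the exponential blow-up introduced by running $\mathcal{B}$ is driven solely by the reduced parameter $k' \le g(k)$ and not by the size of $x'$, which could itself grow polynomially in $|x|$. This is precisely what the bound $|x'| \le f(k)\cdot|x|^{O(1)}$ secures: raising a polynomial-in-$|x|$ quantity to the fixed power $c$ keeps it polynomial in $|x|$, so all genuinely super-polynomial growth remains confined to the factor $h(g(k))\cdot f(k)^{c}$ depending only on $k$.
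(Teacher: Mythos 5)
Your proof is correct and is the standard composition argument; the paper itself states this theorem as a cited fact from the textbook of Cygan et al.\ without reproducing a proof, and your argument matches the textbook proof, including the two points that actually need care (bounding $|x'|$ by the reduction's running time and making $h$ non-decreasing).
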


For details on \textit{kernelization}, we refer to \cite{DBLP:books/sp/CyganFKLMPPS15} and recall the basic definition of a kernel here. A \textit{kernel} for a parameterized problem $P$ is an algorithm $A$ that given an instance $(x,k)$ of $P$ takes polynomial time and outputs an instance $(x',k')$ of $P$, such that (i) $(x,k)$ is a yes instance of $P$ if and only if $(x',k')$ is a yes instance of $P$, (ii) $|x'|+k' \leq f(k)$ for some computable function $k$. If $f(k)$ is polynomial in $k$, then we call it a polynomial kernel.

\section{Parameterized lower bounds}\label{W1-hard}

\subsection{W[1]-Hardness for Split Graphs }\label{section-whard-split}

In this section, we will prove the following theorem.
\begin{theorem}\label{split_hardness}
PMS is W[1]-hard for split graphs when parameterized by solution size $k$.
\end{theorem}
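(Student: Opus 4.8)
The plan is to give a parameterized reduction from a known $W[1]$-hard problem into PMS on split graphs, with the solution size as the parameter. The natural source problem is $\textsc{Multicolored Clique}$ (equivalently $\textsc{Multicolored Independent Set}$), which is $W[1]$-hard parameterized by the number of color classes; its rigid color-class structure typically transfers well to gadget constructions where we want to force exactly one chosen element per class. So first I would fix an input to $\textsc{Multicolored Clique}$: a graph $H$ whose vertex set is partitioned into $k$ color classes $V_1,\dots,V_k$, and the question of whether $H$ has a clique using exactly one vertex from each class.

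Next I would build a split graph $G$ whose vertex set is divided into a clique part $C$ and an independent part $I$. The intended encoding is that one side of the perfectly matched pair, say $A$, will live in (or be forced into) the independent set and will ``select'' one vertex per color class, while the other side $B$ lives in the clique and certifies the adjacency/clique constraints. I would introduce, for each color class $V_i$, a selection gadget contributing to the target size, so that a perfectly matched pair of the appropriate size $k'$ (a function of $k$, e.g. $k'=2k$ or $k'=\binom{k}{2}+k$) is achievable if and only if the selection is consistent and the selected vertices are pairwise adjacent in $H$. The crucial feature to exploit is the \emph{exactly one neighbor} condition: because every vertex of $A$ must have precisely one neighbor in $B$ and vice versa, I can use this to enforce that each color class contributes exactly one selected vertex (over-selection creates a vertex with two neighbors across the cut, violating the perfect-matching condition, and the clique structure of $C$ limits how vertices on the $B$ side can be spread out).

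I would then prove the two directions of correctness. For the forward direction, from a multicolored clique in $H$ I would read off the selected vertices and place them, together with their certifying partners, into $A$ and $B$, checking that each chosen vertex has a unique cross neighbor and that no unwanted cross edges appear (this is where the clique/independent partition must be designed so that within-$C$ and within-$I$ edges never interfere with the matching condition). For the reverse direction, from a perfectly matched pair of size $k'$ in $G$ I would argue that the unique-neighbor constraints force a legal selection of one vertex per class and that the presence of the certifying edges forces pairwise adjacency in $H$, recovering a multicolored clique. Finally I would note that $k'$ depends only on $k$ and that $G$ is constructed in polynomial time, completing the parameterized reduction and hence establishing $W[1]$-hardness; since $G$ is split by construction, the hardness holds for split graphs.

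I expect the main obstacle to be controlling the clique side: in a split graph every two vertices of $C$ are adjacent, so any two chosen vertices on the $B$ side that both sit in $C$ are automatically adjacent, and more importantly each such vertex may pick up multiple neighbors across the cut. Designing the gadgets so that the \emph{exactly one} cross-neighbor requirement is simultaneously (i) forcing of a valid selection and (ii) not accidentally violated by the dense clique edges is the delicate part, and I would likely need careful degree bookkeeping (and possibly attaching private pendant-like clique vertices to individual selectors) to make the unique-neighbor accounting come out exactly right.
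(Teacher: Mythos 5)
Your write-up is a plan rather than a proof: the selection and verification gadgets are never specified, and you yourself flag the decisive difficulty (controlling cross-neighbor counts on the clique side) as unresolved, ending with ``I would likely need careful degree bookkeeping.'' That bookkeeping is exactly where the argument lives, so as it stands there is a genuine gap. The obstacle is structural: in a split graph $G$ with clique part $C$ and independent part $I$, if $(A,B)$ is a pair of perfectly matched sets of size at least $2$, then $A$ and $B$ cannot both intersect $C$ (two such vertices would be matched to each other and, since $C$ is a clique, no further vertex of $C$ could join either side), so essentially one of $A,B$ sits inside $C$ and the other inside $I$. This leaves no room for the usual \textsc{Multicolored Clique} architecture, which needs separate selection gadgets per color class plus $\binom{k}{2}$ edge-verification gadgets: the independent side has no internal edges to host verification, and the clique side forces all selectors to be mutually adjacent, so ``pairwise adjacency in $H$'' has no obvious place to be checked. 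You would have to invent something substantially new here, and nothing in your sketch indicates what.

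The paper sidesteps this entirely by choosing a source problem whose combinatorics already match the split structure: it reduces from \textsc{Irredundant Set} (W[1]-complete parameterized by solution size). The construction puts a clique vertex $y_i$ for each $v_i$ and an independent vertex $z_i$ for each $v_i$, joining $z_i$ to $y_j$ iff $v_j\in N[v_i]$. Then ``every $y_i\in A$ has exactly one neighbor $z_{i'}\in B$ and $z_{i'}$ has exactly one neighbor in $A$'' is precisely the statement that $v_{i'}$ is a private closed neighbor of $v_i$, so the equivalence is almost definitional and the parameter is preserved exactly ($k'=k$). If you want to salvage your approach, the honest advice is to switch source problems to one with a private-neighbor or unique-witness flavor; \textsc{Multicolored Clique} is the wrong shape for this target.
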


IRREDUNDANT SET is known to be $W[1]$-complete when parameterized by the number of vertices in the set  \cite{DOWNEY2000155}. We will give a parameterized reduction from IRREDUNDANT SET to PMS with solution size as the parameter. We also note that our construction in the reduction is similar to the one given in \cite{IMP1}.
\begin{definition}[\cite{DOWNEY2000155}]
 A set of vertices $I\subseteq V$ in a graph $G=(V,E)$ is called irredundant, if every vertex $v\in I$ has a private closed neighbor $p(v)$ in $V$ satisfying the following conditions:
 \begin{enumerate}
     \item $v=p(v)$ or $p(v)$ is adjacent to $v$,
     \item no other vertex in $I$ is adjacent to $p(v)$.
 \end{enumerate}
\end{definition}

 \begin{figure}[H]
   
    \centering
    \begin{tikzpicture}

    \node [ellipse,draw=black!10, minimum height=2cm,minimum width= 5cm, label ={180:$Y$}] (z) at (1.5,-0.8) {};

    \foreach \i in {1,2,...,3}{
     
            \node[shape=circle, draw, fill = black, scale= 0.4, font=\footnotesize,label={220:\scriptsize{$y_\i$}}] ({\i}) at (-0.5+\i/2,-1){};
   }
   \foreach \i in {4,5,...,6}{
     
            \node[shape=circle, draw, fill = black, scale= 0.4, font=\footnotesize,] ({\i}) at (-0.5+\i/2,-1){};
   }
    \foreach \i in {7}{
     
            \node[shape=circle, draw, fill = black, scale= 0.4, font=\footnotesize,label={0:\scriptsize{$y_n$}}] ({\i}) at (-0.5+\i/2,-1){};
   }
    
    \foreach \i in {3,...,7}{
     
           \draw[thin,black!20] (1) to[out=50,in=130] (\i);
   }
    \foreach \i in {4,...,7}{
     
           \draw[thin,black!20] (2) to[out=50,in=130] (\i);
   }
    \foreach \i in {5,...,7}{
     
           \draw[thin,black!20] (3) to[out=50,in=130] (\i);
   }
    \foreach \i in {6,...,7}{
     
           \draw[thin,black!20] (4) to[out=50,in=130] (\i);
   }
   \foreach \i in {7}{
     
           \draw[thin,black!20] (5) to[out=50,in=130] (\i);
   }


    \foreach \i in {1,2,...,3}{
     
            \node[shape=circle, draw, fill = black, scale= 0.4, font=\footnotesize,label={-10:\scriptsize{$z_\i$}}] ({\i+0}) at (-2.5+\i,-4){};
   }
   \foreach \i in {4,5,...,6}{
     
            \node[shape=circle, draw, fill = black, scale= 0.4, font=\footnotesize,] ({\i+0}) at (-2.5+\i,-4){};
   }
    \foreach \i in {7}{
     
            \node[shape=circle, draw, fill = black, scale= 0.4, font=\footnotesize,label={-10:\scriptsize{$z_n$}}] ({\i+0}) at (-2.5+\i,-4){};
   }
   
    \draw[thin,black!30] (1)--(2)--(3)--(4)--(5)--(6)--(7);
    \foreach \i in {1,2,...,7}{
     
            \draw[thin,black!40] (\i)--(\i+0);

   }
   \draw[thin,black!50] (5)-- (1+0) -- (3) (1+0)--(7);
   \draw[thin,black!50] (5)-- (3+0) -- (1) (3+0)--(7);
   \draw[thin,black!50] (5)-- (1+0) -- (3) (1+0)--(7);
   \draw[thin,black!50] (3)-- (5+0) -- (1) (5+0)--(7);
   \draw[thin,black!50] (5)-- (7+0) -- (3) (7+0)--(1);

  \draw[thin,black!50] (4)-- (2+0) -- (6);
  \draw[thin,black!50] (2)-- (4+0) -- (6);
  \draw[thin,black!50] (4)-- (6+0) -- (2);
  
  \node [ellipse,draw=black!20, minimum height=1.4cm,minimum width= 9cm, label ={180:$Z$}] (z) at (1.5,-4) {};

\end{tikzpicture}
\caption{reduction from IRREDUNDANT SET to PMS, vertex set $Y$ forms a clique and $Z$ forms an independent set. A vertex $z_i$ is connected to a vertex $y_j\in Y$ if $v_j\in N[v_i]$ in input graph $G$.}
\end{figure}
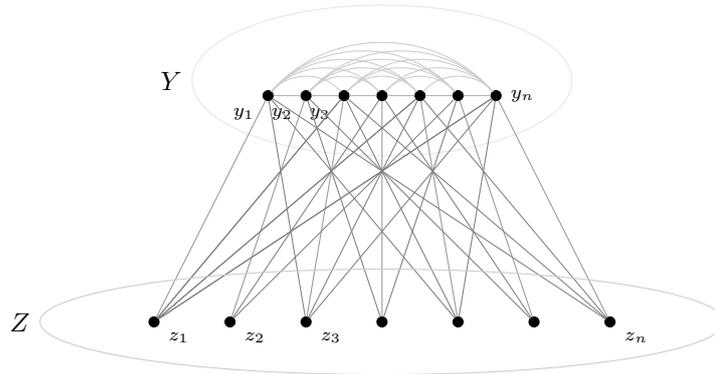

Let $(G=(V,E),k)$ be an instance of IRREDUNDANT SET and let $V=\{v_1,v_2,...,v_n\}$.
We construct a new graph $G'$ as follows.
\begin{itemize}
    \item Create a vertex set $Y=\{y_i\mid v_i\in V\}$ and connect every vertex of $y_i\in Y$ to $y_j\in Y$ when $y_i\neq y_j$, that is $Y$ forms a clique, and let $E_y$ be the set of these edges.
    \item Create a vertex set $Z=\{z_i\mid v_i\in V\}$ and connect every vertex $z_i\in Z$ to $y_j\in Y$ if $v_j\in N[v_i]$ in input graph $G$, let $E'$ be the set of these edges. 
\end{itemize}

We define $G'=(Y\cup Z, E_y\cup E')$, which can be constructed in time polynomial in $|V|$, and $G'$ is a split graph.

\begin{proposition}\label{B'inV'}
$G$ has an irredundant set of size $k$ if and only if $G'$ has a pair of perfectly matched sets of size $k$.
\end{proposition}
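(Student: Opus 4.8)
The plan is to prove both directions by exploiting the split structure of $G'$: $Y$ is a clique, $Z$ is an independent set, and $z_i \sim y_j$ precisely when $v_j \in N[v_i]$ (so in particular $z_i \sim y_i$ always holds). Throughout I will use the symmetry of the definition under swapping the two sets of a perfectly matched pair.

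For the forward direction, suppose $I \subseteq V$ is irredundant of size $k$ and fix a private closed neighbor $p(v) \in N[v]$ for each $v \in I$. First I would observe that the map $v \mapsto p(v)$ is injective on $I$: if $p(u) = p(v) = w$ for distinct $u,v \in I$, then $w \in N[u] \cap N[v]$, while privacy of $w$ with respect to $u$ forces $v = w$ and privacy with respect to $v$ forces $u = w$, a contradiction. Hence I can set $B = \{z_i : v_i \in I\}$ and $A = \{y_j : v_j = p(v_i) \text{ for some } v_i \in I\}$, both of size exactly $k$ and disjoint since $A \subseteq Y$, $B \subseteq Z$. The intended matching is $z_i \leftrightarrow y_j$ with $v_j = p(v_i)$, which is an edge because $p(v_i) \in N[v_i]$. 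To check that each $z_i$ meets $A$ in exactly one vertex, I would invoke condition (2): a second neighbor $y_{j'} \in A$ with $v_{j'} = p(v_{i'})$, $i' \neq i$, would give $p(v_{i'}) \in N[v_i]$, i.e. $v_i \in N[p(v_{i'})] \cap I$ with $v_i \neq v_{i'}$, violating privacy; the $A$-side check is symmetric.

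The reverse direction is the heart of the argument and is where I expect the main difficulty. Given perfectly matched sets $(A,B)$ of size $k$, I would first perform a structural reduction showing that, after possibly swapping $A$ and $B$, we have $A \subseteq Y$ and $B \subseteq Z$. Two facts drive this: (i) since $Z$ is independent, no matched pair is a $Z$--$Z$ edge, so every solution $Z$-vertex is matched to a $Y$-vertex; and (ii) since $Y$ is a clique, a solution vertex $y_p$ on one side is adjacent to every other solution $Y$-vertex, so it cannot have two solution $Y$-vertices on the opposite side. Pushing (ii), if both $A \cap Y$ and $B \cap Y$ are nonempty they must each be a single clique vertex matched to the other; but then any $Z$-vertex on either side would need its unique partner to be that already-saturated clique vertex, which is impossible unless $k = 1$. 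The degenerate case $k = 1$ I would dispose of directly, since singleton solutions correspond to the trivially existing size-$1$ irredundant sets. Otherwise one side, say $B$, contains no $Y$-vertex, so $B \subseteq Z$; combined with (i) this forces $A \cap Z = \emptyset$, yielding $A \subseteq Y$ and $B \subseteq Z$.

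Once in this clean configuration I would let $\sigma$ be the bijection induced by the perfect matching, so $A = \{y_{\sigma(i)} : z_i \in B\}$, and set $I = \{v_i : z_i \in B\}$ with $p(v_i) = v_{\sigma(i)}$. Condition (1) holds because $z_i \sim y_{\sigma(i)}$ means $v_{\sigma(i)} \in N[v_i]$, and condition (2), i.e. $N[v_{\sigma(i)}] \cap I = \{v_i\}$, follows from the ``exactly one neighbor'' requirement exactly as before: any $v_{i'} \in I \setminus \{v_i\}$ lying in $N[v_{\sigma(i)}]$ (including the case $v_{i'} = v_{\sigma(i)}$ itself) would make $z_{i'}$ a second $B$-neighbor of $y_{\sigma(i)}$, contradicting the matching. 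Thus $I$ is irredundant of size $k$, which completes the equivalence. The only delicate point is the structural reduction of the reverse direction; the rest is a direct translation between the privacy conditions and the ``exactly one neighbor'' condition.
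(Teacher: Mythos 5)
Your proof is correct and follows essentially the same route as the paper's: the forward direction translates private closed neighbors into unique cross-neighbors (you merely mirror the paper's assignment by placing the irredundant set in $Z$ and the private neighbors in $Y$, which is equivalent by the symmetry $v_j\in N[v_i]\iff v_i\in N[v_j]$), and the reverse direction uses the same clique/independent-set structure to force $A\subseteq Y$, $B\subseteq Z$ for $k\geq 2$ after disposing of $k=1$ separately. Your write-up is somewhat more detailed than the paper's (injectivity of $p$, the explicit bijection $\sigma$, and the argument that both sides meeting $Y$ forces $k=1$), but there is no substantive difference in approach.
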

\begin{proof}
For the forward direction, let $I\subseteq V$ be an irredundant set of size $k$ in $G$. For every vertex $v_i\in I$, pick exactly one private closed neighbor $v_{i'}$ of $v_i$ in $G$ and let $J$ be the set of these picked vertices. Let
$I_a= \{y_{i}\mid y_{i}\in Y\land v_{i}\in I \}$, and let
$J_b= \{z_{i'}\mid z_{i'}\in Z \land v_{i'}\in J\}$. We claim that $(I_a,J_b)$ is a pair of perfectly matched sets in $G'$. Since for every $v_i\in I$, $v_{i'}$ is a private closed neighbor of $v_i$ in $G$, by construction of $G'$, every $z_{i'}\in I_b$ is adjacent to only $y_i$ in $I_a$, and every $y_{i}\in I_a$ is adjacent to only $z_{i'}$ in $I_b$.

For the other direction, let $(A,B)$ be a pair of perfectly matched sets in $G'$ of size $k$. If $k = 1$, then let $y_i$ or $z_j$ be the only vertex in $A$, then $\{v_i\}$ or $\{v_j\}$ is an irredundant set of size one in $G$. If $k=|E(A,B)|\geq 2$, then either $A\subseteq Y$ and $B\subseteq Z$, or $A\subseteq Z$ and $B\subseteq Y$, this is due to the fact that $Y$ forms a clique and $Z$ forms an independent set in $G'$, and if both $A\cap Y$ and $B\cap Y$ are non-empty, then $|E(A,B)|$ must be 1. Let us assume that $A\subseteq Y$ and $B\subseteq Z$. In this case, let $I^*=\{v_i\mid v_i\in V\land y_i\in A\}$. Since $(A,B)$ is a pair of perfectly matched sets in $G'$, for every $y_i\in A$, there must be a vertex $z_{i'}$ in $B$ such that $y_i$ is the only neighbor of $z_{i'}$ in $A$. Therefore, every $v_i\in I^*$ has a private closed neighbor $v_{i'}$ in $G$, and $I^*$ is an irredundant set of size $k$ in $G$. This finishes the proof.
\end{proof}

\subsection{ W[1]-Hardness for Bipartite Graph }\label{section-whard-bipart}

In this section, we prove the following theorem.
\begin{theorem}\label{bi_hardness}
PMS is W[1]-hard for bipartite graphs when parameterized by solution size $k$.
\end{theorem}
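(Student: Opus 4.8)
The plan is to give a parameterized reduction from MULTICOLORED INDEPENDENT SET, which is $W[1]$-hard parameterized by the number $k$ of color classes, to PMS on bipartite graphs. Given a graph $G$ with color classes $V_1,\dots,V_k$, I would build a bipartite graph $G'$ with parts $L=\{\ell_v : v\in V(G)\}$ and $R=\{r_v : v\in V(G)\}$. For each vertex $v$ I add the \emph{selection edge} $\ell_v r_v$, and for each edge $uv\in E(G)$ I add the \emph{conflict edges} $\ell_u r_v$ and $\ell_v r_u$; equivalently, $\ell_v$ is adjacent to $r_u$ exactly when $u\in N[v]$. The intended solution takes a multicolored independent set $S$ and sets $A=\{\ell_v : v\in S\}$, $B=\{r_v : v\in S\}$: since $S$ is independent, the only neighbor of $\ell_v$ inside $B$ is its partner $r_v$, so $(A,B)$ is perfectly matched. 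To force exactly one selected vertex per color I would attach a small forcing gadget to each class and set the target size $k'$ accordingly (a function of $k$), keeping the parameter bounded.

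The forward direction is then immediate. The delicate part is the reverse direction, and here the bipartite case departs sharply from the split-graph case. In the proof of Theorem~\ref{split_hardness} the clique on $Y$ did the heavy lifting: whenever $A$ and $B$ both met $Y$, the number of cross edges collapsed to one, which forced $A$ and $B$ onto opposite sides. A bipartite graph offers no such global obstruction. This is exactly where PMS differs from INDUCED MATCHING: a perfectly matched pair is permitted to carry edges \emph{inside} $A$ and \emph{inside} $B$, and in a bipartite graph such internal edges appear precisely when a set straddles both sides of the bipartition. For instance, when $G$ is a triangle the construction yields $K_{3,3}$, and $(\{\ell_1,r_1\},\{\ell_2,r_2\})$ is a perfectly matched pair of size $2$ even though $G$ has independence number $1$. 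Such \emph{straddling} solutions are spurious and must be excluded.

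Consequently, the key step, and the main obstacle, is to design the gadgetry so that every perfectly matched pair of the target size $k'$ is forced to be \emph{un-straddled}, i.e. $A\subseteq L$ and $B\subseteq R$ (or vice versa), and moreover forced to use selection edges rather than conflict edges as its matching. Once un-straddling is guaranteed, a target-size perfectly matched pair is literally an induced matching of $G'$, and together with the per-color forcing it can be normalized to the canonical form and read off as a multicolored independent set, completing the equivalence. I would enforce this via a tight choice of $k'$ combined with local forcing gadgets (for example pendant attachments or twin copies of the selection edges) whose effect is that any straddling pair either loses cross edges or fails the exactly-one-neighbor condition, mirroring the collapsing role played by the clique $Y$ above. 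Proving that no straddling or conflict-edge configuration can reach the target size, against the extra freedom granted by within-set edges, is the crux; after that, the bound $k'\le g(k)$ and the polynomial running time of the reduction are routine.
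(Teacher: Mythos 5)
Your reduction is left with its central step unproved, and that step is exactly the hard part. You correctly diagnose that in a bipartite host graph a perfectly matched pair may \emph{straddle} both sides of the bipartition (your $K_{3,3}$ example is a genuine counterexample to the naive reverse direction), but the proposal then only gestures at a fix: ``a tight choice of $k'$ combined with local forcing gadgets (for example pendant attachments or twin copies).'' No such gadget is constructed, no target $k'$ is specified, and no argument is given that a straddling or conflict-edge solution cannot reach the target size. Since you yourself identify this as ``the crux,'' the proposal as written does not establish the reverse direction, and hence does not prove the theorem. The per-color forcing gadget needed to make the reduction from MULTICOLORED INDEPENDENT SET work is likewise only named, not built.

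It is worth seeing how the paper sidesteps this obstacle entirely. Rather than reducing from an independent-set-type problem, it reduces from PMS on general graphs (already shown $W[1]$-hard in Theorem~\ref{split_hardness}) to PMS on bipartite graphs via the bipartite double cover \emph{without} your selection edges: $v'_i$ is adjacent to $v''_j$ iff $v_iv_j\in E(G)$, and the target size is $2k$. The point is that straddling solutions need not be excluded, only exploited: in any perfectly matched pair $(A,B)$ of size $2k$ every matched edge joins $V'$ to $V''$, so $A\cup B$ contains exactly $2k$ vertices of $V'$, hence one of $A,B$ contains at least $k$ of them; taking those $k$ vertices together with their matched partners (which lie in $V''$) and projecting back to $G$ yields a perfectly matched pair of size at least $k$ in $G$. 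The forward direction simply doubles a solution of $G$ across the two copies. This replaces your missing un-straddling gadgetry with a short counting-and-projection argument, at the cost of relying on the split-graph hardness already in hand. If you want to persist with a direct reduction from MULTICOLORED INDEPENDENT SET, you must actually exhibit the forcing gadgets and prove the size bound against straddling configurations; until then the proof is incomplete.
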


We give a reduction from PMS on general graphs to PMS on bipartite graphs.

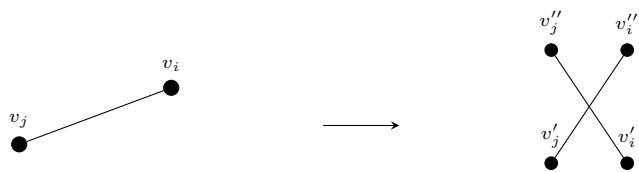
\begin{figure}[H]
    \centering
    
    \begin{tikzpicture}[]
\node [shape=circle, draw, fill=black,scale =0.6, label={\scriptsize{$v_i$}}] (v) at (0,1) {};
\node [shape=circle, draw, fill=black,scale =0.6, label={\scriptsize{$v_j$}}] (u) at (-2,0.25) {};

\draw (v)--(u);

\node [shape=circle, draw, fill=black,scale =0.5, label={\scriptsize{$v'_i$}}] (v1) at (6,0) {};
\node [shape=circle, draw, fill=black,scale =0.5, label={\scriptsize{$v''_i$}}] (v1') at (6,1.5) {};

\node [shape=circle, draw, fill=black,scale =0.5, label={\scriptsize{$v'_j$}}] (u1) at (5,0) {};
\node [shape=circle, draw, fill=black,scale =0.5, label={\scriptsize{$v''_j$}}] (u1') at (5,1.5) {};

\draw (v1)--(u1');
\draw (u1)--(v1');

\draw [->,>=stealth] (2,0.5) -- (3,0.5);

\end{tikzpicture}

    \caption{For every vertex $v_i$ create two  vertices $v'_i$ and $v''_i$, and connect $v'_i$ to $v''_j$ if $v_j\in N(v_i)$.}
    \label{fig:fig_bip_red}
\end{figure}

Let $(G=(V,E),k)$ be an instance PMS on general graphs, let $V=\{v_1,v_2,...,v_n\}$. We construct $G'$ as follows.
\begin{itemize}
    \item Create two copies of $V$ and call it $V'$ and $V''$, we refer copy of a vertex $v_i\in V$ in $V'$ as $v'_i$ and in $V''$ as $v''_i$.
    \item Connect $v'_i$ to $v''_j$ if $v_i$ is adjacent to $v_j$ in $G$. Let $E'$ be the set of all these edges.
\end{itemize}
We define $G'=(V'\cup V'',E')$, which can be constructed in time polynomial in $|V|$.
\begin{proposition}\label{bi_hard_correctness}
 $G$ has a pair of perfectly matched sets of size $k$ if and only if $G'$ has a pair of perfectly matched sets of size $2k$.
\end{proposition}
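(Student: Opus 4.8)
The plan is to prove both directions of the biconditional by exploiting the natural correspondence between a vertex $v_i \in V$ and its pair of copies $v'_i, v''_i \in V(G')$. The key structural fact driving the construction is that an edge $v_iv_j \in E$ gives rise to exactly two edges in $G'$, namely $v'_iv''_j$ and $v'_jv''_i$, so each original edge is ``doubled'' in a way that respects the bipartition $(V', V'')$.

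For the forward direction, suppose $(A,B)$ is a pair of perfectly matched sets of size $k$ in $G$, witnessed by a perfect matching $M = \{v_{a_1}v_{b_1}, \dots, v_{a_k}v_{b_k}\}$ between $A$ and $B$ where $a_t$ indexes $A$ and $b_t$ indexes $B$. I would set $A' = \{v'_{a_t} : t \in [k]\} \cup \{v''_{b_t} : t\in [k]\}$ and $B' = \{v''_{a_t} : t\in [k]\} \cup \{v'_{b_t} : t\in [k]\}$, each of size $2k$. The intended matching pairs $v'_{a_t}$ with $v''_{b_t}$ and $v'_{b_t}$ with $v''_{a_t}$, using precisely the two $G'$-edges arising from the matched edge $v_{a_t}v_{b_t}$. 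I would then verify that each vertex of $A'$ has exactly one neighbor in $B'$: a neighbor of $v'_{a_t}$ in $G'$ is some $v''_j$ with $v_j \in N(v_{a_t})$, and among such $v''_j$ the only one lying in $B'$ must correspond to the unique $B$-neighbor of $v_{a_t}$ under the perfect matching, which is $v_{b_t}$. The same argument applies symmetrically.

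For the reverse direction, suppose $(A',B')$ is a pair of perfectly matched sets of size $2k$ in $G'$, with matching $M'$ of size $2k$. The idea is to project back via $v'_i, v''_i \mapsto v_i$ and show the image carries a perfectly matched pair of size $k$ in $G$. I would define $A$ and $B$ from the endpoints of $M'$ after projecting, taking care to assign each original vertex to the correct side. The main obstacle I anticipate is precisely this projection step: the reverse map is two-to-one, so I must rule out or correctly handle the cases where both copies $v'_i$ and $v''_i$ of a single vertex participate in the solution, or where a copy of $v_i$ and a copy of $v_j$ project to a degenerate or non-matched configuration in $G$. In particular I must ensure the projected edge set forms a genuine \emph{induced} matching structure so that no projected vertex gains an extra neighbor across $(A,B)$ in $G$ that was invisible in the bipartite picture.

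To handle this cleanly, I expect to argue that the symmetry of the construction (the automorphism swapping $V'$ and $V''$) lets me normalize the solution, and that the ``exactly one neighbor'' condition in $G'$ forces the $2k$ matched edges to pair up into $k$ compatible pairs of the form $\{v'_av''_b, v'_bv''_a\}$, each descending to a single edge $v_av_b$ of $G$. Establishing this pairing and verifying that the resulting set has size exactly $k$ with the perfect-matching property—rather than accidentally collapsing or producing extra adjacencies—is the crux of the argument, after which the forward-direction verification template applies in reverse.
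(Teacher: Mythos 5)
Your forward construction is mis-assembled: you place $v'_{a_t}$ and $v''_{b_t}$ both in $A'$ (and $v''_{a_t}$, $v'_{b_t}$ both in $B'$), so the two edges $v'_{a_t}v''_{b_t}$ and $v'_{b_t}v''_{a_t}$ that you intend to use as matching edges each have both endpoints on the same side. Concretely, the neighbours of $v'_{a_t}$ that lie in your $B'$ are the vertices $v''_j$ with $v_j\in N(v_{a_t})\cap A$, and nothing in the definition of perfectly matched sets controls $|N(v_{a_t})\cap A|$ (vertices inside $A$ may be adjacent to one another arbitrarily); for instance, if $G$ is the single edge $v_1v_2$ with $A=\{v_2\}$ and $B=\{v_1\}$, your $v'_2\in A'$ has no neighbour at all in $B'=\{v''_2,v'_1\}$. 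The fix is exactly the paper's choice: put \emph{both} copies of $A$ on one side and both copies of $B$ on the other, i.e.\ take $(A'\cup A'',\,B'\cup B'')$; then the unique neighbour of $v'_{a}$ in $B'\cup B''$ is $v''_{b}$ where $v_b$ is the unique $B$-neighbour of $v_a$, and there are no $V'$--$V'$ or $V''$--$V''$ edges to interfere.

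For the converse, the structural claim you yourself identify as the crux --- that the $2k$ matched edges must pair up into $k$ compatible pairs $\{v'_av''_b,\ v'_bv''_a\}$ --- is false, so that route cannot be completed even after normalizing by the swap automorphism. For $G$ the path $v_1v_2v_3v_4$, the pair $A=\{v'_1,v''_4\}$, $B=\{v''_2,v'_3\}$ is perfectly matched in $G'$ with matched edges $v'_1v''_2$ and $v'_3v''_4$, which do not pair up. (The ``induced matching'' condition you worry about is also not needed: perfectly matched sets permit adjacencies inside $A$ and inside $B$.) The paper sidesteps all of this: since every edge of $G'$ joins $V'$ to $V''$, exactly $2k$ of the $4k$ solution vertices lie in $V'$, so one of $A\cap V'$, $B\cap V'$ has size at least $k$; take that set together with the $V''$-vertices matched to it and project by dropping primes. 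Each chosen $v'_i$ has exactly one neighbour in the projected partner set and vice versa, which yields a perfectly matched pair of size at least $k$ in $G$ with no pairing-up argument required.
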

\begin{proof}
For the forward direction, let $(A,B)$ be a pair of perfectly matched sets in $G$ such that $|E(A,B)|=k$. Let $A'=\{v'_i\ |\ v_i\in A\}$, $A''=\{v''_i\ |\ v_i\in A\}$, $B'=\{v'_i\ |\ v_i\in B\}$, and $B''=\{v''_i\ |\ v_i\in B\}$. Observe that both $(A',B'')$ and $(A'',B')$ are perfectly matched sets of $G'$. Further, $(A''\cup A', B''\cup B')$ are perfectly matched sets of $G'$ as no vertex in $A''$ has a neighbor in $B''$ in $G'$ and similarly no vertex in $A''$ has a neighbor in $B''$ in $G'$. We further have $ |E(A''\cup A', B''\cup B')| =2k$.

For the other direction, let $(A,B)$ be perfectly matched sets in $G'$ such that  $|E(A,B)|= 2k$.  Due to the construction of $G'$, any vertex in $V'$ can only be matched to a vertex in $V''$ and vice versa. Thus there are $2k$ vertices from $V'$ in $A\cup B$. Then, either $|A\cap V'|\geq k$  or $|B\cap V'|\geq k$. W.l.o.g let $|A\cap V'|\geq k$, let $A'=A\cap V'$ and $B''\subseteq B$ be the vertices that are matched to $A'$ in $(A,B)$, clearly $B''\subseteq V''$ due to the construction. Let $A^*=\{ v\ |v' \in A'\}$ and $B^*=\{ v\ |v'' \in B''\}$; then $(A^*, B^*)$ is a pair of perfectly matched sets in $G$ such that $|E(A^*, B^*)|\geq k$.
\end{proof}

\section{Kernelization Lower Bounds} \label{section-kernel-lb}

We refer to the work of Bodlaender, Thomassé and Yeo \cite{BODLAENDER_PPT} for the details on polynomial time parameter transformation, and recall its definition here.

\begin{definition}[\cite{BODLAENDER_PPT}]
For two parameterized problems $P$ and $Q$, we say that there exists a polynomial time parameter transformation (ppt) from $P$ to $Q$, denoted by $P\leq_{ppt }Q$, if there exists a polynomial time computable function $f: \{0,1\}^* \times \mathbb{N} \to \{0,1\}^* \times \mathbb{N}$, and a polynomial $p: \mathbb{N}\to \mathbb{N}$, and for all $x\in \{0,1\}^*$ and $k\in \mathbb{N}$, if $f((x,k))=(x',k')$, then the following hold.
\begin{itemize}
    \item $(x,k)\in P$ if and only if $(x',k')\in Q$, and
    \item $k'\leq p(k)$.
\end{itemize}
Here, $f$ is called the polynomial time parameter transformation.
\end{definition}

\begin{theorem}[\cite{BODLAENDER_PPT}]
For two parameterized problems $P$ and $Q$, let $P'$ and $Q'$ be their derived classical problems. Suppose that $P'$ is NP-complete, and $Q'\in \text{NP}$. Suppose that $f$ is a polynomial time parameter transformation from $P$ to $Q$. Then, if $Q$ has a polynomial kernel, then $P$ has a polynomial kernel.
\end{theorem}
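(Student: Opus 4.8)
The plan is to assume that $Q$ has a polynomial kernel and to build from it a polynomial kernel for $P$, using the polynomial time parameter transformation $f$ together with the hypotheses that $P'$ is NP-complete and $Q'\in \text{NP}$. Throughout, recall that the derived classical problem of a parameterized problem $R\subseteq \Sigma^*\times \mathbb{N}$ is the unparameterized language $R'=\{\,x\,\#\,1^k \mid (x,k)\in R\,\}$, where $\#$ is a fresh separator and $1^k$ is the unary encoding of the parameter; thus $x\#1^k\in R'$ if and only if $(x,k)\in R$.

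First I would assemble the two ingredients. Let $\mathcal{A}$ be a polynomial kernelization for $Q$: on input $(y,\ell)$ it outputs in polynomial time an instance $(y_2,\ell_2)$ with $(y,\ell)\in Q \iff (y_2,\ell_2)\in Q$ and $|y_2|+\ell_2\leq q(\ell)$ for some polynomial $q$. Since $Q'\in \text{NP}$ and $P'$ is NP-complete, $P'$ is NP-hard, so there is a polynomial time many-one reduction $g$ from $Q'$ to $P'$; that is, $w\in Q' \iff g(w)\in P'$ and $|g(w)|\leq r(|w|)$ for some polynomial $r$.

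The composite kernelization for $P$ then processes an input instance $(x,k)$ as follows.
\begin{enumerate}
    \item Apply $f$ to obtain an instance $(x_1,k_1)$ of $Q$ with $k_1\leq p(k)$ and $(x,k)\in P \iff (x_1,k_1)\in Q$.
    \item Apply $\mathcal{A}$ to $(x_1,k_1)$, obtaining $(x_2,k_2)$ with $|x_2|+k_2\leq q(k_1)\leq q(p(k))$.
    \item Form the string $w=x_2\#1^{k_2}$, an instance of $Q'$ whose length is polynomial in $k$, satisfying $w\in Q' \iff (x_2,k_2)\in Q$.
    \item Compute $z=g(w)$, an instance of $P'$ with $|z|\leq r(|w|)$ still polynomial in $k$ and $z\in P' \iff w\in Q'$.
    \item If $z$ parses as $x'\#1^{k'}$, output $(x',k')$; otherwise $z\notin P'$, and I would output a fixed trivial NO-instance of $P$.
\end{enumerate}
Chaining the equivalences across the five steps shows that $(x,k)\in P$ if and only if the emitted instance belongs to $P$, and in every case the emitted instance has total size at most $2|z|$, which is polynomial in $k$; since each step runs in polynomial time, the composition is a polynomial kernel for $P$.

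The step I expect to be the crux is the passage through $Q'$ and $P'$ in Steps~3--5: the kernelization $\mathcal{A}$ leaves us with a small instance \emph{of $Q$}, whereas a kernel for $P$ must return an instance \emph{of $P$}, and the only available bridge is the classical Karp reduction $g$, whose existence is exactly what NP-completeness of $P'$ and membership $Q'\in \text{NP}$ supply. Some care is needed because $g$ may output arbitrary strings rather than well-formed parameterized instances; the parsing convention for the derived classical problem in Step~5, together with the fact that a malformed string cannot lie in $P'$, is what keeps the round trip sound while bounding the final parameter by the instance size.
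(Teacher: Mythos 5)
Your proof is correct. The paper itself states this theorem only as a cited result from Bodlaender, Thomass\'e and Yeo and gives no proof, but your argument is precisely the standard one from that reference: compose the transformation $f$ with the kernelization for $Q$, then use the Karp reduction from $Q'$ to $P'$ (supplied by $Q'\in\text{NP}$ and NP-completeness of $P'$) to convert the small $Q$-instance back into a small $P$-instance, with the parsing convention for derived classical problems handling malformed outputs. No gaps.
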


In the remaining part of this section, we will prove the following theorem.

\begin{theorem}\label{theorem-kernel}
PMS does not admit a polynomial kernel parameterized by vertex cover number unless $\text{NP}\subseteq \text{coNP/poly}$.
\end{theorem}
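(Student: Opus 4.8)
The plan is to establish the no-polynomial-kernel result by exhibiting a polynomial time parameter transformation (ppt) from a problem that is known not to admit a polynomial kernel unless $\text{NP}\subseteq\text{coNP/poly}$, where the parameter we must bound is the vertex cover number of the constructed PMS instance. The natural source problem is one that does not compress under the parameter "number of terminals" or "solution/structure size," and whose standard OR-cross-composition or AND-composition framework already rules out polynomial kernels. Concretely, I would aim to ppt-reduce from a variant of PMS (or a SAT-like problem) parameterized so that the source parameter equals (up to a polynomial) a vertex cover of the target graph. Since we have already shown PMS is $W[1]$-hard and NP-hard, and since its classical version is in NP, the hypotheses of the ppt kernelization theorem quoted above are satisfied once we verify the source problem's classical version is NP-complete.

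\textbf{Key steps, in order.} First, I would fix a source problem $P$ whose derived classical problem is NP-complete and which is known to have no polynomial kernel under $\text{NP}\not\subseteq\text{coNP/poly}$; a clean choice is to OR-cross-compose $t$ independent PMS instances $(G_1,k),\dots,(G_t,k)$ (all with the same budget $k$, which is legitimate since instances can be padded) into a single PMS instance $G^\star$ whose vertex cover number is polynomially bounded in the individual instance sizes and in $k$, \emph{not} in $t$. The second step is the gadget design: I would introduce a small selection gadget, forcing any size-$k'$ perfectly matched pair in $G^\star$ to "commit" to exactly one of the $t$ instances, so that $G^\star$ has a perfectly matched pair of the target size if and only if at least one $G_i$ does. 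The third step is the vertex-cover bookkeeping: I must argue that $G^\star$ admits a vertex cover whose size is bounded by a polynomial in $\max_i |V(G_i)|$ and $k$ (independent of $t$), because a polynomial kernel in the vertex-cover parameter would then compress the OR of $t$ instances below the composition lower bound and collapse the polynomial hierarchy. The fourth step is the correctness proof of the biconditional, following the same style as Propositions~\ref{B'inV'} and \ref{bi_hard_correctness}: one direction lifts a solution from a single $G_i$ through the gadget, and the converse uses that a clique/selection structure forces the matched pair into a single component.

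\textbf{The main obstacle} is the vertex-cover accounting in step three: the whole point of a cross-composition is that the target parameter stays small while the instance count $t$ grows, so I must engineer the composed graph $G^\star$ so that \emph{all but a polynomial-in-$(\max_i|V(G_i)|,k)$ number of vertices lie in an independent set}. This is subtle for PMS specifically, because the "exactly one neighbor" condition is fragile: adding a clique-like selector to bound the vertex cover risks creating spurious matched pairs or destroying the forced commitment, while making the selector independent (to keep the cover small) risks losing the ability to force a single-instance choice. I expect the crux of the argument to be designing a selector/separator gadget that is simultaneously (i) covered by few vertices, (ii) rigid enough that any large perfectly matched pair cannot straddle two instances, and (iii) transparent enough that a genuine solution in some $G_i$ survives the embedding. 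Once this gadget is in place, the ppt conditions and the invocation of the quoted kernelization-lower-bound theorem are routine.
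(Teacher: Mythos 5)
Your proposal is a plan rather than a proof: the object you yourself identify as ``the crux of the argument'' --- a selector gadget that is covered by few vertices, forces any large perfectly matched pair into a single instance, and still lets a genuine solution survive --- is never constructed. Worse, the specific route you propose (OR-composing $t$ PMS instances $(G_1,k),\dots,(G_t,k)$ directly) has a structural obstacle that no selector gadget can fix: if the composed graph $G^\star$ contains the $t$ input graphs as (induced) subgraphs, then its vertex cover number is at least the sum of the vertex cover numbers of the $G_i$, which grows linearly in $t$ for arbitrary instances. To make ``all but polynomially many vertices lie in an independent set'' you would first have to normalize each instance so that almost all of its vertices are independent, and that normalization is itself the hard part --- it is essentially equivalent to finding a source problem that is already known to be incompressible under the vertex cover parameter. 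So the gap is not just the missing gadget; the composition framework as you have set it up cannot be instantiated with raw PMS (or SAT) instances.

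The paper sidesteps the entire composition machinery. It gives a polynomial time parameter transformation from CLIQUE parameterized by vertex cover number, which is already known not to admit a polynomial kernel unless $\text{NP}\subseteq \text{coNP/poly}$ \cite{Bodlaender_cross}. In that source problem the non-cover vertices already form an independent set, which is exactly the structure your plan needs and cannot manufacture. Concretely, the construction places two copies of the vertex cover into cliques $L$ and $R$ of size $O(k)$, attaches a path-on-four-vertices gadget for each non-edge inside the cover ($O(k^2)$ vertices in total), and attaches the $n-k$ independent vertices as pendant-like vertices adjacent to a single hub $u$ and to the $R$-copies of their non-neighbors; the set $V(G')\setminus F$ is then a vertex cover of size $O(k^2)$, and a sequence of exchange arguments (modifications M1--M5) shows that a perfectly matched pair of size $2+2|\bar{E}(V_c)|+l$ exists iff $G$ has an $l$-clique. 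If you want to salvage your approach, the honest fix is to replace your step one with exactly this: pick a source problem whose incompressibility under vertex cover is already established, rather than attempting a fresh cross-composition.
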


 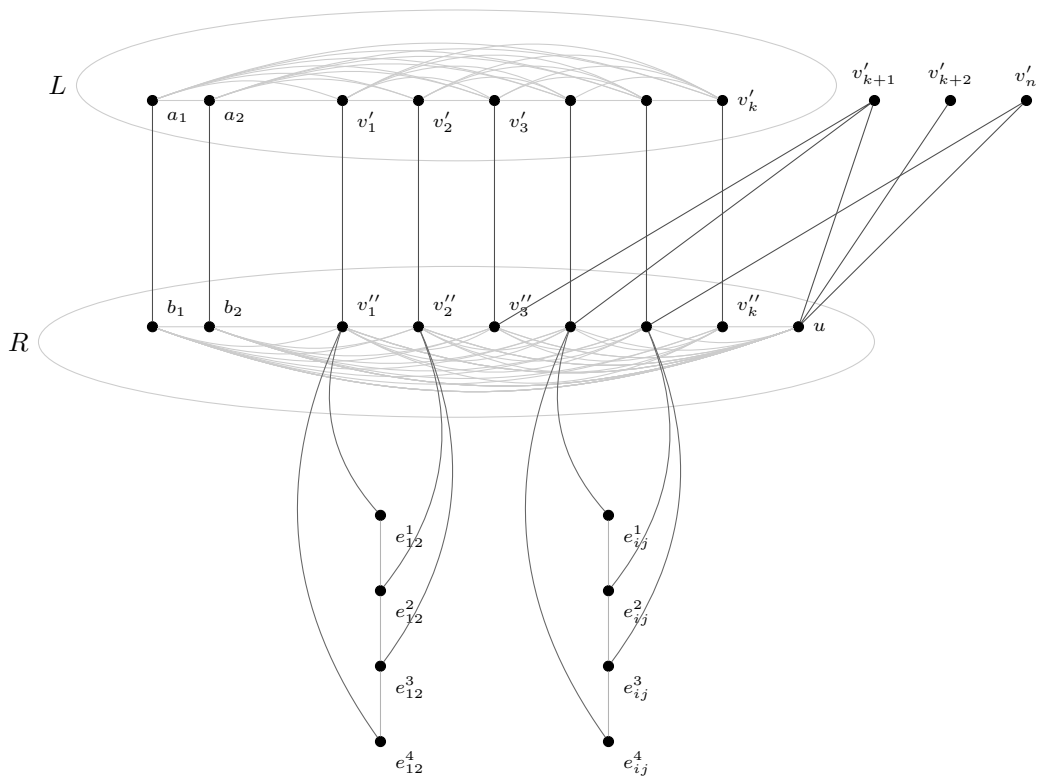
\begin{figure}[H]
   
    \centering
    \begin{tikzpicture}

    \node [ellipse,draw=black!20, minimum height=2cm,minimum width= 10cm, label ={180:$L$}] (z) at (2,-0.8) {};
   \node [ellipse,draw=black!20, minimum height=2cm,minimum width= 11cm, label ={180:$R$}] (z) at (2,-4.2) {};
   

    \foreach \i in {1,2,...,3}{
     
            \node[shape=circle, draw, fill = black, scale= 0.4, font=\footnotesize,label={-20:\scriptsize{$v'_\i$}}] ({\i}) at (-0.5+\i,-1){};
   }
   \foreach \i in {4,5}{
     
            \node[shape=circle, draw, fill = black, scale= 0.4, font=\footnotesize,] ({\i}) at (-0.5+\i,-1){};
   }
    \foreach \i in {6}{
     
            \node[shape=circle, draw, fill = black, scale= 0.4, font=\footnotesize,label={0:\scriptsize{$v'_k$}}] ({\i}) at (-0.5+\i,-1){};
   }
    
    \node[shape=circle, draw, fill = black, scale= 0.4, font=\footnotesize,label={-20:\scriptsize{$a_2$}}] ({a2}) at (-1.25,-1){};
    \node[shape=circle, draw, fill = black, scale= 0.4, font=\footnotesize,label={-20:\scriptsize{$a_1$}}] ({a1}) at (-2,-1){};
    
        \node[shape=circle, draw, fill = black, scale= 0.4, font=\footnotesize,label={20:\scriptsize{$b_2$}}] ({b2}) at (-1.25,-4){};
    \node[shape=circle, draw, fill = black, scale= 0.4, font=\footnotesize,label={20:\scriptsize{$b_1$}}] ({b1}) at (-2,-4){};
   
   
   \foreach \i in {3,...,6}{
           \draw[thin,black!20] (1) to[out=30,in=150] (\i);
   }
   \foreach \i in {4,...,6}{
           \draw[thin,black!20] (2) to[out=30,in=150] (\i);
   }
   \foreach \i in {5,...,6}{
           \draw[thin,black!20] (3) to[out=30,in=150] (\i);
   }
    
   \foreach \i in {6}{
           \draw[thin,black!20] (4) to[out=30,in=150] (\i);
   }
    \foreach \i in {1,...,6}{
           \draw[thin,black!20] (a1) to[out=20,in=160] (\i);
   }
      \foreach \i in {2,...,6}{
           \draw[thin,black!20] (a2) to[out=20,in=160] (\i);
   }

    \foreach \i in {1,2,...,3}{
     
            \node[shape=circle, draw, fill = black, scale= 0.4, font=\footnotesize,label={20:\scriptsize{$v''_\i$}}] ({\i+0}) at (-0.5+\i,-4){};
   }
   \foreach \i in {4,5}{
     
            \node[shape=circle, draw, fill = black, scale= 0.4, font=\footnotesize,] ({\i+0}) at (-0.5+\i,-4){};
   }
    \foreach \i in {6}{
     
            \node[shape=circle, draw, fill = black, scale= 0.4, font=\footnotesize,label={20:\scriptsize{$v''_k$}}] ({\i+0}) at (-0.5+\i,-4){};
   }
   
   \node[shape=circle, draw, fill = black, scale= 0.4, font=\footnotesize,label={0:\scriptsize{$u$}}] (u) at (6.5,-4){};


   \foreach \i in {3,...,6}{
           \draw[thin,black!20] (1+0) to[out=-30,in=-150] (\i+0);
           \draw[thin,black!20] (1+0) to[out=-20,in=-160] (u);
   }
   \foreach \i in {4,...,6}{
           \draw[thin,black!20] (2+0) to[out=-30,in=-150] (\i+0);
           \draw[thin,black!20] (2+0) to[out=-20,in=-160] (u);
   }
   \foreach \i in {5,...,6}{
           \draw[thin,black!20] (3+0) to[out=-30,in=-150] (\i+0);
           \draw[thin,black!20] (3+0) to[out=-20,in=-160] (u);
   }
    
   \foreach \i in {6}{
           \draw[thin,black!20] (4+0) to[out=-30,in=-150] (\i+0);
           \draw[thin,black!20] (4+0) to[out=-20,in=-160] (u);
            \draw[thin,black!20] (5+0) to[out=-20,in=-160] (u);
   }
    \foreach \i in {1,...,6}{
           \draw[thin,black!20] (b1) to[out=-20,in=-160] (\i+0);
           \draw[thin,black!20] (b1) to[out=-20,in=-160] (u);
   }
      \foreach \i in {2,...,6}{
           \draw[thin,black!20] (b2) to[out=-20,in=-160] (\i+0);
           \draw[thin,black!20] (b2) to[out=-20,in=-160] (u);
   }


    \node[shape=circle, draw, fill = black, scale= 0.4, font=\footnotesize,label={\scriptsize{$v'_{k+1}$}}] ({u1}) at (7.5,-1){};
    \node[shape=circle, draw, fill = black, scale= 0.4, font=\footnotesize,label={\scriptsize{$v'_{k+2}$}}] ({u2}) at (8.5,-1){};
    \node[shape=circle, draw, fill = black, scale= 0.4, font=\footnotesize,label={\scriptsize{$v'_n$}}] ({u3}) at (9.5,-1){};
    
     \draw[thin,black!70] (u)--(u1) (u)--(u2) (u)--(u3);
     
      \draw[thin,black!70] (3+0)--(u1) (4+0)--(u1) (5+0)--(u3);

   
    \draw[thin,black!20] (a1)--(a2)--(1)--(2)--(3)--(4)--(5)--(6);
     \draw[thin,black!20] (b1)--(b2)--(1+0)--(2+0)--(3+0)--(4+0)--(5+0)--(6+0)--(u);
     
      \draw[thin,black!70] (a1)--(b1)  (a2)--(b2);
    \foreach \i in {1,2,...,6}{
     
            \draw[thin,black!70] (\i)--(\i+0);

   }
   
   \foreach \i in {1,2,...,4}{
     
            \node[shape=circle, draw, fill = black, scale= 0.4, font=\footnotesize,label={-10:\scriptsize{$e_{12}^{\i}$}}] ({e12\i}) at (1,-5.5-\i){};
   }

  \draw[thin,black!30] (e121)--(e122)--(e123)--(e124);
 
  \draw[thin,black!60,bend right] (1+0) to ({e121});
  \draw[thin,black!60,bend right] (1+0) to ({e124});
  
   \draw[thin,black!60,bend left] (2+0) to ({e122});
  \draw[thin,black!60,bend left] (2+0) to ({e123});

   \foreach \i in {1,2,...,4}{
     
            \node[shape=circle, draw, fill = black, scale= 0.4, font=\footnotesize,label={-10:\scriptsize{$e_{ij}^{\i}$}}] ({e45\i}) at (4,-5.5-\i){};
   }

  \draw[thin,black!30] (e451)--(e452)--(e453)--(e454);
 
  \draw[thin,black!60,bend right] (4+0) to ({e451});
  \draw[thin,black!60,bend right] (4+0) to ({e454});
  
   \draw[thin,black!60,bend left] (5+0) to ({e452});
  \draw[thin,black!60,bend left] (5+0) to ({e453});

\end{tikzpicture}
\caption{Reduction from CLIQUE to PMS, vertex set L and R form cliques. A vertex $v'_{k+i}$ is connected to a vertex $v''_j\in R$ if $v_j$ is not connected to $v_{k+i}$ in input graph $G$.}
\end{figure}

One can observe that $\text{PMS} \in \text{NP}$, as the solution certificate $(A,B)$ can be easily verified in polynomial time. Further, CLIQUE (an NP-complete problem) does not admit a polynomial kernel when parameterized by the size of a vertex cover unless $\text{NP}\subseteq \text{coNP/poly}$ \cite{Bodlaender_cross}.
Thus, it will suffice to obtain a polynomial time parameter transformation from CLIQUE to PMS with parameter vertex cover number. 

To this end, let $(G,l,V_c)$ be an instance of CLIQUE, where we need to decide if the input graph $G$ with vertex cover $V_c$ contains a clique of size $l$, here the parameter is the size of vertex cover $|V_c|\leq k$. Note that we are considering a vertex cover of $G$ of size $k$ to be a part of the input. However we are not dependent on this assumption, as there exists a well known polynomial time algorithm that finds a $2$-approximation of a minimum vertex cover.

For notational simplicity, let $V=\{v_1,v_2,...,v_n\}$ be the vertex set of $G$ and let $V_c=\{v_1,v_2,...,v_k\}$ be a vertex cover of $G$. The rest of the vertices of $G$ are $\{v_{k+1},v_{k+2},...,v_n\}$. Let $\bar{E}$ be the set of all the non-edges in $G$, i.e. $\bar{E}=\{v_iv_j\ |\ v_i\neq v_j \land v_iv_j\not \in E(G)\}$. Further, let $\bar{E}(V_c)$ be all the non-edges with both the endpoints in $V_c$.
We now construct a graph $G'$ as follows.
\vskip 0.1cm
\begin{itemize}
    \item Create a vertex set $L=\{v'_i|\ v_i\in V_c\}\cup \{a_1,a_2\}$, and  for every distinct $x,y\in L$, connect $x$ and $y$ to each other, this way $L$ forms a clique.
    \item Create a vertex set $R=\{v''_i|\ v_i\in V_c\}\cup \{b_1,b_2,u\}$, and  for every distinct $x,y\in R$, connect $x$ and $y$ to each other, this way $R$ forms a clique.
    \item Connect $a_1$ to $b_1$, $a_2$ to $b_2$, and $v'_{i}$ to $v''_{i}$ for every $i\in [k]$.
     \item Create a vertex set $F= \{v'_{k+i}|\ v_{k+i}\in V(G)\setminus V_c\}$, and connect every vertex of $F$ to $u$.
     \vskip 0.1cm
     \item Connect a vertex $v'_{k+i}\in F$ to $v''_j\in R$ if $v_{k+i}$ and ${v_j}$ are not connected in $G$.
     \vskip 0.1cm
     \item Create a vertex set $X_{\bar{e}}= \{e_{ij}^1,e_{ij}^2,e_{ij}^3,e_{ij}^4|\ v_iv_j\in \bar{E}(V_c) \land (i<j)\}$. Further, for every non-edge $v_iv_j\in \bar{E}(V_c)$ where $i<j$, connect $e_{ij}^1$ to $e_{ij}^2$, $e_{ij}^2$ to $e_{ij}^3$, and $e_{ij}^3$ to $e_{ij}^4$, that is we create a path on 4 vertices, and connect $v''_{i}$ to $e_{ij}^1$ and $e_{ij}^4$, and connect $v''_j$ to $e_{ij}^2$ and $e_{ij}^3$.
\end{itemize}

Observe that the construction can be achieved in time polynomial in $|V|$. Further, the vertex set $V(G')\setminus F$ is a vertex cover of $G'$, which is of size at most $|L|$ + $|R|$ + $4\cdot |\bar{E}(V_c)|$ that is $O(k^2)$. In the following proposition, we state and then prove the correctness of the reduction.

\begin{proposition}\label{popo:kernel-correctness}
$G$ has a clique of size $l$ if and only if $G'$ has a pair of perfectly matched sets of size $2+2\cdot |\bar{E}(V_c)|+l$.
\end{proposition}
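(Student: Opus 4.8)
The plan is to prove both directions of Proposition~\ref{popo:kernel-correctness} by exhibiting the correspondence between a clique $C$ in $G$ and a ``canonical'' perfectly matched set in $G'$ in which every selected vertex of $L$ lies in $A$ and every selected vertex of $R$ lies in $B$. The target size decomposes as $2+2|\bar{E}(V_c)|+l$, where the $2$ comes from the forced pairs $a_1b_1$ and $a_2b_2$, the $2|\bar{E}(V_c)|$ comes from two matched pairs inside each four-vertex path gadget, and the $l$ comes from one matched pair per clique vertex.

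For the forward direction, given a clique $C$ of size $l$, I would first use the fact that $V_c$ is a vertex cover, so $V(G)\setminus V_c$ is independent and hence $|C\setminus V_c|\le 1$. I would then put $a_1,a_2$ and $\{v'_i : v_i\in C\cap V_c\}$ into $A$, put $b_1,b_2$ and $\{v''_i : v_i\in C\cap V_c\}$ into $B$, and, if $C$ contains the single outside vertex $v_{k+m}$, put $v'_{k+m}\in A$ and $u\in B$, matched along the edge $v'_{k+m}u$. Each gadget of a non-edge $v_iv_j$ contributes its two matched pairs using one of two mirror patterns ($\{e_{ij}^1,e_{ij}^4\}$ versus $\{e_{ij}^2,e_{ij}^3\}$) chosen according to which of $v''_i,v''_j$, if any, is already selected, so that an already-matched endpoint keeps its unique neighbour $v'_i$ (resp.\ $v'_j$). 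The verification is then routine: since $C$ is a clique, any two selected $v''$-vertices are adjacent in $G$ and hence non-adjacent, in $G'$, to each other's gadget partners and to $v'_{k+m}$, so every selected vertex has exactly one neighbour on the opposite side.

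For the reverse direction I would start from a \emph{clique lemma}: if $K$ is a clique and both $A\cap K$ and $B\cap K$ are non-empty, then $|A\cap K|=|B\cap K|=1$ and those two vertices are matched to each other (hence to nothing outside $K$). Applying this to $L$ and to $R$ lets me reduce, possibly after swapping $A$ and $B$, to the canonical orientation in which selected $L$-vertices lie in $A$ and selected $R$-vertices lie in $B$; the degenerate alternative forces all but two vertices of $L$ (resp.\ $R$) to be unselected and can be shown not to reach the target size. In this orientation, $v'_i\in A$ forces $v''_i\in B$ matched along $v'_iv''_i$ (its only $R$-neighbour), every selected vertex of $F$ must lie in $A$ (since its neighbours all lie in $R\subseteq B$), and the gadget property becomes exploitable: if for a non-edge $v_iv_j$ both $v''_i,v''_j$ are matched to $v'_i,v'_j$, then none of $e_{ij}^1,\dots,e_{ij}^4$ can lie in $A$, and since their only remaining neighbours $v''_i,v''_j$ lie in $B$ none of them can lie in $B$ either, so that gadget contributes $0$ vertices.

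The final step is a counting argument: the pairs $a_1b_1,a_2b_2$ contribute at most $2$, each gadget contributes at most $2$ matched vertices per side, and every remaining matched pair has the form $v'_iv''_i$ or $v'_{k+m}u$ and so corresponds to a distinct vertex of the extracted set $C$. The gadget-blocking property forces any two such $V_c$-vertices to be adjacent in $G$, and the edges between $F$ and $R$ (present exactly for non-edges) force the outside vertex to be adjacent to every selected $V_c$-vertex, so $C$ is a clique; attaining size $k'=2+2|\bar{E}(V_c)|+l$ then forces $|C|\ge l$. I expect the main obstacle to be this reverse direction, specifically making the orientation reduction and the ``at most $2$ per gadget'' bound simultaneously tight enough to conclude $|C|\ge l$, while correctly accounting for matched edges that waste an $R$-vertex on an $F$-vertex or on a gadget; ruling these out (or absorbing them into the count) is where the careful bookkeeping lies.
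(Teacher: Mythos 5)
Your forward direction matches the paper's: the same explicit assignment ($a_1,a_2$ and the primed clique vertices to $A$; $b_1,b_2$, the double-primed copies, and possibly $u$ to $B$; mirror patterns on each non-edge gadget chosen by whether $v_i\in C$), and the verification is indeed routine. Your reverse direction also follows the paper's overall strategy (normalize to the orientation $L\cap B=\emptyset$, $R\cap A=\emptyset$, then exploit the gadget-blocking property and count), and your clique lemma is a clean way to package the paper's modifications M1--M3.

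However, there is a genuine gap in the reverse direction, and it sits exactly where you say the ``careful bookkeeping lies'': a vertex $v''_i\in R\cap B$ need not be matched to $v'_i$ --- it can be matched to a gadget vertex $e^1_{ji}$ or $e^4_{ji}$, or to an $F$-vertex $v'_{k+m}$ with $v_{k+m}v_i\notin E(G)$. Your final count asserts that every pair outside the $a_rb_r$ pairs and the gadget-internal pairs has the form $v'_iv''_i$ or $v'_{k+m}u$, which is false before this is fixed; moreover the gadget-blocking argument itself only fires when $v''_i$ and $v''_j$ are matched to $v'_i$ and $v'_j$ respectively, so without the fix you cannot conclude that the extracted set is a clique. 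The paper closes this with an explicit re-matching step (its modification M4): for each $v''_i\in(R\cap B)\setminus\{u\}$ whose unique $A$-neighbor is some $x\neq v'_i$, replace $x$ by $v'_i$ in $A$. This is safe because $v'_i$'s only neighbor outside $L$ is $v''_i$ and $L\cap B=\emptyset$, and because $x$'s unique $B$-neighbor was necessarily $v''_i$, so no other $B$-vertex loses its match; the size of $E(A,B)$ is preserved. Only after this normalization (followed by the paper's M5, which swaps a conflicting pair $v'_iv''_i$, $v'_jv''_j$ for the two gadget pairs of the non-edge $v_iv_j$) does the counting you describe go through. You would need to add this re-matching argument, or an equivalent charging scheme that absorbs pairs of the form $(e^p_{ji},v''_i)$ and $(v'_{k+m},v''_i)$ into the gadget and $F$ budgets without losing the adjacency information needed for cliqueness; as written, the proof is incomplete at its decisive step.
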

\begin{proof}
For the forward direction, let $C$ be a clique of size $l$ in $G$. Recall that the size of a clique in $G$ can be at most $1$ more than the size of a vertex cover of $G$, thus $l$ can be at most $k+1$. We construct two sets $A$ and $B$ as follows, for every vertex $v_i$ in $C$ which also belongs to $V_c$, we put $v'_i$ in $A$ and $v''_i$ in $B$, if there is a vertex $v_{k+j}$ in $C$ which is not in $V_c$, we put $v'_{k+j}$ in $A$ and $u$ in $B$, observe that at most $1$ such vertex can be in $C$. Further, for every non-edge $v_iv_j\in \bar{E}(V_c)$ such that $i<j$, if $v_i\in C$ (then certainly $v_j\not \in C$), then put $e_{ij}^1$ and $e_{ij}^4$ in $B$ and $e_{ij}^2$ and $e_{ij}^3$ in $A$, else if $v_i\not \in C$, then put $e_{ij}^1$ and $e_{ij}^4$ in $A$, and  $e_{ij}^2$ and $e_{ij}^3$ in $B$. Lastly we put $a_1$ and $a_2$ in $A$ and $b_1$ and $b_2$ in $B$. A direct check can verify that $(A,B)$ is a pair of perfectly matched sets in $G'$ and $|E(A,B)|= 2+2\cdot |\bar{E}(V_c)|+l$.

For the other direction, let $(A,B)$ be a pair of perfectly matched sets in $G'$ such that $|E(A,B)|= 2+2\cdot |\bar{E}(V_c)|+l$. For the proof of this direction, we will modify the set $A$ and/or $B$ to obtain desired vertices in each set while maintaining that $(A,B)$ remains a pair of perfectly matched sets and that $|E(A,B)|$ does not decrease.

For a pair $(A,B)$ of perfectly matched sets, we say a vertex $x\in A$ (resp. $B$) is \textit{matched} to $y\in B$ (resp. $A$) if $x$ and $y$ are neighbors. We will sequentially apply the following modifications, after verifying if the modification is applicable. Precedence of every modification is in the same order in which they are described.

\begin{itemize}
    \item \textbf{M1:} If there exist two distinct vertices $x,y\in L$ such that $x$ is in $A$ and $y$ is in $B$, as well as there exist two distinct vertices $x',y'\in R$ such that $x'$ is in $A$ and $y'$ is in $B$. Then we set $A=(A\setminus\{x,x'\}) \cup \{a_1,a_2\}$ and $B= (B\setminus \{y,y'\}) \cup \{b_1,b_2\}$.
    Observe that $(A,B)$ remains a pair of perfectly matched sets and $|E(A,B)|$ remains unchanged.
     This is due to the fact that both $L$ and $R$ form cliques and thus no other vertices from $L$ except $x,y$ and no other vertices from $R$ except $x',y'$ could belong to $A$ or $B$.
    
     \item  \textbf{M2:} If there exist two distinct vertices $x,y\in L$ such that $x$ is in $A$ and $y$ is in $B$, and either $ A\cap R = \emptyset$ or $B\cap R = \emptyset$. In this case, neither $A\setminus\{x\}$ nor $B\setminus\{y\}$ contains any vertex from $L$ (as $L$ is a clique). Further, if $A\cap R=\emptyset$, then $b_1$ cannot belong to $B$, as it cannot be matched to any vertex in $A$ ($x$ and $y$ matched to each other). In this case, we set $A=(A\setminus\{x\}) \cup \{a_1\}$ and $B=(B\setminus \{y\}) \cup \{b_1\}$. Similarly, if  $B\cap R=\emptyset$, then $b_1$ cannot belong to $A$ as it cannot be matched to any vertex in $B$. In this case, we set $A=(A\setminus\{x\}) \cup \{b_1\}$ and $B=(B\setminus \{y\}) \cup \{a_1\}$. In both the cases, $(A,B)$ remains a pair of perfectly matched sets and $|E(A,B)|$ remains unchanged.
     
      \item   \textbf{M3:} If there exist two distinct vertices $x,y\in R$ such that $x$ is in $A$ and $y$ is in $B$, and either $ A\cap L = \emptyset$ or $B\cap L = \emptyset$. In this case, neither $A\setminus\{x\}$ nor $B\setminus\{y\}$ contains any vertex from $R$ (as $R$ is a clique). Further, if $A\cap L=\emptyset$, then $a_1$ cannot belong to $B$ as it cannot be matched to any vertex in $L$ ($x$ and $y$ matched to each other). In this case, we set $A= (A\setminus\{x\}) \cup \{b_1\}$ and $B= (B\setminus \{y\}) \cup \{a_1\}$. Similarly, if $B\cap L=\emptyset$, then $a_1$ cannot belong to $A$ as it cannot be matched to any vertex in $B$. In this case, we set $A= (A\setminus\{x\}) \cup \{a_1\}$ and $B= (B\setminus \{y\}) \cup \{b_1\}$.
      In both the cases, $(A,B)$ remains a pair of perfectly matched sets and $|E(A,B)|$ remains unchanged.
\end{itemize}

After applying the above modifications, we may assume that for a pair of perfectly matched sets  $(A,B)$, either $L\cap B = \emptyset$ and $R\cap A = \emptyset$, or $R\cap B = \emptyset$ and $L\cap A = \emptyset$. For the simplicity of arguments, we assume that $L\cap B = \emptyset$ and $R\cap A = \emptyset$ and proceed as follows:
\begin{itemize}
    \item  \textbf{M4:} For every vertex $v''_i\in (R\cap B)\setminus \{u\}$, if the only neighbor of $v''_i$ in $A$ is $x$ and $x\neq v'_i$, then we remove $x$ from $A$ and add $v'_i$ to $A$. Observe that it is safe to do so as $v'_i$ is adjacent to only $v''_i$ outside $L$ and $L$ is disjoint from $B$. Also this modification does not change the size of $E(A,B)$.
\end{itemize}

To this end, after applying the above modification exhaustively, we may also assume that in $(A,B)$, every vertex of $(R\cap B) \setminus \{u\}$ is matched by a vertex in $L$. Observe that if two distinct vertices $v''_i,v''_j\in B$ are such that $v_i$ and $v_j$ is not connected in $G$ and $v''_i$ is matched to $v'_i$ and $v''_j$ is matched to $v'_j$ in $(A,B)$, then none of the vertices from $\{e^1_{ij},e^2_{ij},e^3_{ij},e^4_{ij}\}$ can belong to $A$ without violating property of perfectly matched sets, and they can not be matched by $v''_i$ or $v''_j$, and hence none of them belongs to either $A$ or $B$. Thus, we modify $(A,B)$ as follows:

\begin{itemize}
    \item \textbf{M5:} If two distinct vertices $v''_i,v''_j\in B$ are such that $v''_i$ is matched to $v'_i$ and $v''_j$ is matched to $v'_j$ in $(A,B)$ and $v_i$ and $v_j$ are not connected in $G$, then we remove  $v''_i,v''_j$ from $B$ and remove $v'_i,v'_j$ from $A$, we then put $e^2_{ij},e^3_{ij}$ in $B$ and $e^1_{ij},e^4_{ij}$ in $A$. Observe that this modification maintains that $(A,B)$ remains a pair of perfectly matched sets and $|E(A,B)|$ remains unchanged.
\end{itemize}

Recall that $|E(A,B)|$ is $2+2\cdot \bar{E}(V_c)+l$. Applying the above modification exhaustively, we also ensure that none of the vertices in $R\cap B$ is matched to a vertex from $X_{\bar{e}}$. Thus, every vertex of $X_{\bar{e}}\cap (A\cup B)$ must be matched to a vertex of $X_{\bar{e}}\cap (A\cup B)$ in $(A,B)$.  Since there are at most $4\cdot |\bar{E}(V_c)|$ such vertices, they contribute at most $2\cdot |\bar{E}(V_c)|$ to  $|E(A,B)|$. If we consider $a_1,a_2$ and $b_1,b_2$ to be part of $A$ and $B$ respectively, this leaves us with remaining $l$ edges in $E(A,B)$, the endpoints of these edges which belong to set $B$ must be from $R\setminus \{b_1,b_2\}$, let $C$ be the set of these $l$ vertices. If $C$ contains $u$, then let $v'_{k+p}$ be the only neighbor of $u$ in $A$. We define $C'=\{ v_i | \ v''_i\in C\}\cup \{v_{k+p}\}$ if $C$ contains $u$, otherwise $C'=\{ v_i | \ v''_i\in C\}$. Observe that $|C'|=l$. We claim that $C'$ is a clique in $G$. To prove the claim, recall modification M5, which ensures that every distinct $v_i,v_j\in C'\cap V_c$ are connected, and if there is a vertex $v_{k+p}$ outside $V_c$ in $C'$, $v_{k+P}$ is connected to every $V_c\cap C'$ in $G$, otherwise recalling construction of $G'$, $v'_{k+P}$ would be connected to at least one vertex in $C\setminus \{u\}$ and violate the property of perfectly matched sets $(A,B)$. This finishes the proof.
\end{proof}

The above proposition conclude that the reduction is a polynomial time parameter transformation. This finishes the proof of Theorem \ref{theorem-kernel}.
\section{Parameterized Algorithms}
\subsection{Parameterized by clique-width}\label{clique_width}

We refer to \cite{courcelle_engelfriet_2012, CourcelleLinear} for the details on $MSOL_1$ and linear-$EMSOL_{1}$. We recall that $MSOL_1$ is a type of $MSOL$ formula without quantifiers over edge sets, and linear-$EMSOL_{1}$ is an extension of $MSOL_1$ which allows for the search of an optimal solution with respect to some linear evaluation function.

We refer to \cite{CourcelleLinear,COURCELLE2000DAM} for the details on clique-width and $k$-expression. Courcelle, Makowsky and Rotics \cite{CourcelleLinear} showed that every decision problem expressible in $MSOL_{1}$  and every optimization problem expressible in linear-$EMSOL_{1}$ are fixed parameter tractable when parameterized by clique-width $k$ if the $k$-expression of the graph is given. Moreover, there exists an algorithm, which for a fixed $k$, and for the input $n$-vertex graph $G$, in time $O(n^9 \log n)$ outputs either a $(2^{3k+2}$ $-1)$-expression of $G$ or reports that $G$ has clique-width at least $k+1$ \cite{oumclique}. Combining the above known results, to show that $PMS$ is fixed parameter tractable when parameterized by clique-width, it will suffice to show that $PMS$ is expressible in linear-$EMSOL_{1}$.

 We construct the following linear-$EMSOL_1$ formula for PMS, which is similar to the $MSOL_1$ formula for MATCHING CUT given in \cite{Bonsma},

We first try to build an $MSOL_1$ formula $\psi(A,B)$ with free variables $A$ and $B$ over vertex sets to express that $(A,B)$ is a pair of perfectly matched sets, and then extend it to linear-$EMSOL_1$, which allows for the search of optimal $A$ and $B$ with respect to the cardinality of these sets.

\begin{align*}
    \psi(A,B)= (A\cap B = \emptyset)\wedge \forall v\in  A(matched(v,B)) \wedge \forall u\in B(matched(u,A))
\end{align*}
where we can write 
\begin{align*}
    (A\cap B = \emptyset) \Longleftrightarrow  \neg
   ( \exists v\in A(v\in B)).
\end{align*}
\begin{align*}
    matched(v,B) \Longleftrightarrow \exists u\in B(edge(u,v)) \wedge \neg (\exists u,x \in B (\neg(u=x)\wedge edge(u,v)\wedge edge(x,v))).
\end{align*}

An extension of $\psi(A,B)$ to linear-$MSOL_1$ is as follows,
\begin{align*}
    max\{|f(A)|: \langle G,f \rangle \  \models \psi(A,B)\}.
\end{align*}

where $f$ is an assignment of vertices to sets $A$ and $B$ such that $\psi(A,B)$ is satisfied. Note that we are only maximizing the size of $A$, this is sufficient for our purposes as the definition of PMS ensure that both $A$ and $B$ are of equal size.

\subsection{Parameterized by Distance to Cluster }\label{sec_cluster}
For a graph $G=(V,E)$, a vertex set $X\subseteq V$ is a cluster deletion set if $G-X$ is a cluster graph. The smallest size of a cluster deletion set of $G$ is called the
distance to cluster of $G$.
In this section, we consider PMS parameterized by distance to cluster of the input graph.
We note that for an $n$ vertex graph, a cluster deletion set of size at most $k$ can be computed in time $O(1.92^k\cdot n^2)$ \cite{DBLP:journals/mst/BoralCKP16}. Thus, we may assume that for the input graph, we have a cluster deletion set of size at most $k$.

In the remaining part of this section, we prove the following theorem.

\begin{theorem}\label{dctheorem}
There exists an algorithm that runs in time $O(3^{k}\cdot k^k\cdot poly(n))$ and solves PMS for the input $n$ vertex graph with the distance to cluster at most $k$.
\end{theorem}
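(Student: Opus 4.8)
## Proof Proposal for Theorem~\ref{dctheorem}

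The plan is to design a branching-plus-dynamic-programming algorithm that exploits the structure of the cluster deletion set. Let $X$ be the given cluster deletion set with $|X|\le k$, so that $G-X$ is a disjoint union of cliques $C_1,C_2,\ldots,C_t$. I will search for an optimal pair of perfectly matched sets $(A,B)$ by first \emph{guessing the behavior of $X$} and then completing the solution over the cluster part in polynomial time. The central observation is that within any single clique $C_i$, the sets $A$ and $B$ can interact very restrictively: since every vertex of $C_i\cap A$ sees every other vertex of $C_i$, at most one vertex of $C_i$ can lie in $A\cup B$ \emph{unless} the matches for those vertices go outside $C_i$ (into $X$). More precisely, if two vertices of the same clique both lie in $A$, or one in $A$ and one in $B$, the perfect-matching condition is immediately threatened by the clique edges; so each clique contributes at most one ``internally matched'' pair, and all other participating vertices of the clique must be matched across to $X$.

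The first step is to branch over the role of each vertex of $X$: I would enumerate, for each $x\in X$, whether $x\in A$, $x\in B$, or $x\notin A\cup B$. This gives $3^{k}$ branches, matching the $3^k$ factor in the running time. Within a fixed branch I know exactly which vertices of $X$ are in $A$ and which in $B$, and I must verify the perfect-matching condition \emph{among the $X$-vertices themselves} and set up the demands they place on the clusters. For each $x\in A\cap X$ that is not yet matched inside $X$, I must assign it a unique private neighbor in $B$ lying in some cluster (and symmetrically for $B\cap X$); and each such $x$ must have \emph{no other} neighbor in the opposite set. This is where the second combinatorial factor enters: I would further guess, for each $X$-vertex needing an external match, which cluster supplies its partner, contributing the $k^{k}$ factor (each of the $\le k$ vertices of $X$ chooses among $\le k$ clusters, or is handled by a bounded assignment/matching subroutine).

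Having fixed the roles in $X$ and the cluster-assignments of the cross demands, the remaining task is, for each cluster $C_i$ \emph{independently}, to select its contribution to $A$ and $B$ so as to maximize $|A|=|B|$ subject to: (i) the internal clique edges do not create extra matches, and (ii) the vertices of $C_i$ placed in $A$ (resp.\ $B$) have exactly one neighbor in $B$ (resp.\ $A$), counting both their one permitted cluster-partner and their edges to the already-fixed $X$-vertices. Because the clusters are pairwise nonadjacent in $G-X$ and interact only through $X$ (whose state is now frozen), these per-cluster optimizations decouple and each reduces to a small local check solvable in polynomial time: for each cluster I decide whether it hosts an internal matched pair or instead donates vertices to satisfy $X$-demands, and I count accordingly. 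Summing the best cluster contributions and adding the $X$-internal edges gives the optimum for that branch; the global optimum is the maximum over all $3^k\cdot k^k$ branches, yielding the claimed $O(3^k\cdot k^k\cdot \mathrm{poly}(n))$ bound.

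The hard part will be the bookkeeping in step two: correctly coupling the cross-matching demands of $X$ with the limited matching capacity of each cluster while enforcing the \emph{exactly one} neighbor condition on both sides. In particular I must ensure that when a cluster vertex is matched to some $x\in X$, it has no other neighbor in the opposite set (its clique-mates that also participate must be placed consistently), and that no $X$-vertex accidentally acquires a second neighbor from a cluster it is not assigned to. Verifying that the branching over cluster-assignments, together with a local feasibility/maximization routine per cluster, captures \emph{every} valid solution without double-counting edges through $X$ is the key correctness obstacle; the running-time analysis is then a routine product of the branching factors and the polynomial per-branch work.
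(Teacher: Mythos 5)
Your overall skeleton matches the paper's: enumerate the $3^k$ assignments of $X$ to $\{A,B,\text{neither}\}$, observe that each clique of $G-X$ either places all of its solution vertices on one side (each matched across to $X$) or contributes exactly one internally matched pair, and then complete the solution over the cliques. However, there is a genuine gap in how you obtain the $k^k$ factor. You propose to guess, for each unmatched $X$-vertex, \emph{which cluster} supplies its partner, and you assert this costs $k^k$ because ``each of the $\le k$ vertices of $X$ chooses among $\le k$ clusters.'' But the number of clusters $l$ in $G-X$ is not bounded by $k$; it can be as large as $n-k$. A direct guess of a cluster for each unmatched $X$-vertex therefore costs $l^{k}=n^{O(k)}$, which gives only an XP algorithm, not the claimed FPT bound. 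The paper avoids this by enumerating, instead, the \emph{partitions} of the unmatched $A$-side (resp.\ $B$-side) vertices of $X$ into groups that will be served by a common (unnamed) clique --- there are at most $k^k$ such partition pairs --- and then assigning groups to concrete cliques in polynomial time via a maximum-weight bipartite matching on an auxiliary graph.

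A second, related issue is your claim that once the roles in $X$ are fixed the per-cluster optimizations ``decouple'' into independent local checks. They do not fully decouple: each clique can serve at most one group of $X$-demands, different cliques can serve different subsets of groups, and a clique that is capable of serving a group might alternatively be used to host an internal matched pair (adding $1$ to the objective). Choosing which cliques serve demands and which host internal pairs is a global optimization. The paper resolves both the assignment and this trade-off simultaneously with one maximum-weight matching, giving weight $n+1$ to ``clique serves group'' edges (so all demands are met whenever possible) and weight $1$ to ``clique hosts an internal pair'' edges (so the count of internal pairs is maximized subject to that). Without some such global step, your local per-cluster routine can fail to find the optimum, and without the partition trick your enumeration does not achieve the stated running time.
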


For the input graph $G=(V,E)$, let $X \subseteq V$ be a cluster deletion set of size at most $k$. Further, let $\cal{C}=$ $\{C_1,C_2,..,C_l\}$ be the set of maximal cliques in $G-X$. 

For a vertex set $Y\subseteq V$, let $f: Y\to \{a,b,d\}$ be an assignment. We say that $f$ is \textit{valid} if every vertex in $f^{-1}(a)$ (resp. $f^{-1}(b)$) has at most one neighbor in $f^{-1}(b)$ (resp. $f^{-1}(a)$).
Given a valid assignment $f$, we say that a vertex $v \in f^{-1}(a)$ (resp. $v\in f^{-1}(b)$) is \textit{matched} in $f$ if $v$ has a neighbor $w$ in $f^{-1}(b)$ (resp. in $f^{-1}(a)$). In this case, we say $v$ is matched to $w$ in $f$, and both $v$ and $w$ are called \textit{matching neighbors} of each other.
We say that a non-empty set $W$ is matched in $f$ if every vertex in $W$ is matched. We say that a valid assignment $f: Y\to \{a,b,d\}$ is a \textit{feasible solution} if $Y=V(G)$ and $(f^{-1}(a),f^{-1}(b))$ is a pair of perfectly matched sets.

For an assignment $f$, we define $size(f)$ to be $|f^{-1}(a)|$ if $f$ is a feasible solution and $0$ otherwise. Our goal is to find an assignment $f$ with maximum $size(f)$; let $f_{opt}$ be such an assignment.
To find $size(f_{opt})$, we shall first guess its values on $X$ and then extend it to $V$. For this, we need the following observation.
\begin{observation}\label{obs:clique-types}
For a feasible solution $f$, every clique $C_i\in \cal C$ must be of one of the following types.
\begin{itemize}
    \item Type 1:  $V(C_i)\cap f^{-1}(a) \neq \emptyset$ and $V(C_i)\cap f^{-1}(b) = \emptyset$.
    \item Type 2:  $V(C_i)\cap f^{-1}(a) = \emptyset$ and $V(C_i)\cap f^{-1}(b) \neq \emptyset$.
    \item Type 3: There exist two distinct vertices $u,v\in V(C_i)$, such that $V(C_i)\cap f^{-1}(a) =\{u\}$ and $V(C_i)\cap f^{-1}(b) =\{v\}$. 
    \item Type 4: $V(C_i)\cap f^{-1}(a) = \emptyset$ and $V(C_i)\cap f^{-1}(b) = \emptyset$. 
\end{itemize}
\end{observation}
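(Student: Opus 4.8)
The plan is to run a short case analysis on the two sets $A_i = V(C_i)\cap f^{-1}(a)$ and $B_i = V(C_i)\cap f^{-1}(b)$, driven entirely by the fact that $C_i$ is a clique. First I would observe that the four listed types are exhaustive with respect to the emptiness pattern of the ordered pair $(A_i,B_i)$: the subcases $(A_i=\emptyset,\,B_i=\emptyset)$, $(A_i\neq\emptyset,\,B_i=\emptyset)$, and $(A_i=\emptyset,\,B_i\neq\emptyset)$ are literally Types~4,~1, and~2 respectively, with no further condition attached. Hence the only subcase that requires any work is $A_i\neq\emptyset$ and $B_i\neq\emptyset$, and there I must show that both sets are singletons so that $C_i$ qualifies as Type~3.

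For that subcase I would combine the clique structure with the perfectly matched condition. Because $C_i$ is a clique, every vertex of $A_i$ is adjacent to every vertex of $B_i$. Assume for contradiction that $|A_i|\geq 2$, and fix any $v\in B_i$, which exists since $B_i\neq\emptyset$. Then $v\in f^{-1}(b)$ is adjacent to at least two vertices lying in $A_i\subseteq f^{-1}(a)$, so $v$ has at least two neighbors in $f^{-1}(a)$; this contradicts feasibility of $f$, since in a pair of perfectly matched sets every vertex of $f^{-1}(b)$ has exactly one neighbor in $f^{-1}(a)$. Thus $|A_i|=1$, and the symmetric argument (exchanging the roles of $A_i$ and $B_i$, and of $f^{-1}(a)$ and $f^{-1}(b)$) gives $|B_i|=1$. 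This is exactly Type~3, completing the analysis.

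I do not expect any genuine obstacle here: the statement is an immediate structural consequence of the clique property together with the ``exactly one neighbor'' requirement. The single point I would be careful to state explicitly is that the split by emptiness of $(A_i,B_i)$ is genuinely complete, so that the four types leave no gap; beyond that, each nontrivial subcase collapses to a one-line contradiction.
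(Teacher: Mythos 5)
Your proof is correct; the paper states this as an observation without proof, and your case analysis on the emptiness pattern of $(A_i,B_i)$ together with the ``exactly one neighbor'' contradiction in the doubly nonempty case is exactly the intended (omitted) justification. No gaps: the disjointness of $f^{-1}(a)$ and $f^{-1}(b)$ guarantees the two singleton vertices in Type~3 are distinct, which you implicitly use and which holds since $f$ is a function.
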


Let $f_x: X\to \{a,b,d\}$ be a valid assignment, an assignment $f: V(G)\to \{a,b,d\}$ is an \textit{extension} of $f_x$ if for every $ v\in X,\ f_x(v) = f(v)$. The intuition behind our algorithm is to consider  every valid $f_x$ and find its extension $f$ with maximum $size(f)$. We define $\footnotesize{\text{OPT}}(f_x)$ by
\begin{equation}
    \footnotesize{\text{OPT}}(f_x)= \max \{ size(f) \mid f \text{ is an extension of } f_x \}.
\end{equation}

We note that the $size(f_{opt})$ will be equal to the maximum $\footnotesize{\text{OPT}}(f_x)$ over all possible $f_x$.\\

Given $f_x: X\to \{a,b,d\}$, let $A_x = \{ v \mid  v\in X \land f_x(v)=a \land v \text{ is not matched in } f_x\}$ and $B_x = \{ v\mid v\in X \land f_x(v)=b \land v \text{ is not matched in }f_x\}$. Let $P_a=\{A_{x,1},\ldots,A_{x,r}\}$ be a partition of $A_x$ into non-empty sets ($P_a=\emptyset$ if $A_x$ is empty). Similarly, let $P_b=\{B_{x,1},\ldots,B_{x,s}\}$ be a partition of $B_x$ into non-empty sets ($P_b=\emptyset$ if $B_x$ is empty). 

We say that $f: V(G)\to \{a,b,d\}$ is a $(P_a,P_b)$-restricted extension of $f_x$ if the following hold.
\begin{itemize}
    \item $f$ is an extension of $f_x$.
    \item For every $A_{x,i}\in P_a$, exactly one of the following is true: no vertex of $A_{x,i}$ is matched in $f$ (OR) there exists a $p \in [l]$ such that $(A_{x,i},C_{p} \cap f^{-1}(b))$ is a pair of perfectly matched sets and $N(C_{p} \cap f^{-1}(b)) \cap f^{-1}(a)=A_{x,i}$.
    \item For every $B_{x,j}\in P_b$, exactly one of the following is true: no vertex of $B_{x,j}$ is matched in $f$ (OR) there exists a $q \in [l]$ such that $(C_q \cap f^{-1}(a),B_{x,j})$ is a pair of perfectly matched sets and $N(C_q \cap f^{-1}(a)) \cap f^{-1}(b)=B_{x,j}$.
\end{itemize}

We define $\footnotesize{\text{OPT}}(f_x,P_a,P_b)$ by
\begin{equation}
    \footnotesize{\text{OPT}}(f_x,P_a,P_b)=\max \{ size(f) \mid f \text{ is a } (P_a,P_b)\text{-restricted extensions of } f_x\}.
\end{equation}

 $\footnotesize{\text{OPT}}(f_x)$ will be equal to the maximum $\footnotesize{\text{OPT}}(f_x,P_a,P_b)$ over all possible partitions $(P_a,P_b)$. Formally, we make the following claim.
\begin{claim*}
 Given $f_x: X\to \{a,b,d\}$, it holds that:
\begin{equation}\label{eqn:optfx}
     \footnotesize{\text{OPT}}(f_x)=\max \{ \footnotesize{\text{OPT}}(f_x,P_a,P_b) \mid P_a,P_b \text{ are partitions of } A_x,B_x \text{ respectively } \}.
\end{equation} 
\end{claim*}
\begin{claimproof}
It is straightforward to see that L.H.S $\geq$ R.H.S in (\ref{eqn:optfx}). If $\footnotesize{\text{OPT}}(f_x)=0$, then L.H.S $\leq$ R.H.S in (\ref{eqn:optfx}). Else, if $\footnotesize{\text{OPT}}(f_x)\geq 1$, then let $\hat{f}$ be a extension of $f_x$ such that $size(\hat{f})= \footnotesize{\text{OPT}}(f_x)$,  In this case, $\hat{f}$ is a feasible solution. We will show that there exist partitions $P^*_{a},P^*_{b}$ of $A_x,B_x$ respectively such that $\hat{f}$ is a $(P^*_{a},P^*_{b})$-restricted extension of $f_x$. Recalling observation \ref{obs:clique-types}, let $T_2=\{C_{p1},C_{p2},...,C_{pr}\}$ be the set of all the cliques in $\cal C$ which are of type 2 in $\hat{f}$. Since $(\hat{f}^{-1}(a),\hat{f}^{-1}(b))$ is a pair perfectly matched sets, for every $C_{pi}\in T_2$, every vertex in $V(C_{pi})\cap \hat{f}^{-1}(b)$ has exactly one neighbor in $\hat{f}^{-1}(a)$ and that neighbor must belong to $A_x$.
In this case, if $T_2$ is empty then let $P^*_{a}=\emptyset$, otherwise let  $P^*_{a}=\{A_{x,1},A_{x,2},...,A_{x,r}\}$ such that $A_{x,i}= N(V(C_{pi})\cap \hat{f}^{-1}(b)) \cap \hat{f}^{-1}(a)$ for every $i\in [r]$. Since every vertex in $A_x$ has exactly one neighbor in $\hat{f}^{-1}(b)$ and that neighbor must belong to a clique in $T_2$, we have that $A_x= \bigcup_{A_{x,i}\in P^*_a} A_{x,i}$ and sets in $P^*_{a}$ are pairwise disjoint. Thus, we can conclude that $P^*_{a}$ is a partition of $A_x$. Further, let $T_1=\{C_{q1},C_{q2},...,C_{qs}\}$ be the set of all the cliques in $\cal C$ which are of type 1 in $\hat{f}$. If $T_1$ is empty then let $P^*_b =\emptyset$, otherwise let $P^*_{b}=\{B_{x,1},B_{x,2},...,B_{x,s}\}$ such that $B_{x,j}= N(V(C_{qj})\cap \hat{f}^{-1}(a)) \cap \hat{f}^{-1}(b)$ for every $j\in [s]$. Similar to $P^*_a$, arguments can be made to show that  $P^*_b$ is a partition of $B_x$. Thus, we have that for every $A_{x,i}\in P^*_a$, $(A_{x,i},C_{pi} \cap f^{-1}(b))$ is a pair of perfectly matched sets and $ N(V(C_{pi})\cap \hat{f}^{-1}(b)) \cap \hat{f}^{-1}(a)=A_{x,i}$, and for every $B_{x,j}\in P^*_b$, $(C_{qj} \cap f^{-1}(a),B_{x,j})$ is a pair of perfectly matched sets and $N(V(C_{qj})\cap \hat{f}^{-1}(a)) \cap \hat{f}^{-1}(b)= B_{x,j}$. We conclude that $\hat{f}$ is a $(P^*_{a},P^*_{b})$-restricted extension of $f_x$, and L.H.S $\leq$ R.H.S in (\ref{eqn:optfx}). This finishes the proof. 
\end{claimproof}

The number of valid assignments $f_x$ is at most $3^k$, and for a given valid assignment $f_x$, the number of partition pairs $(P_a,P_b)$ of $A_x,B_x$, is at most $k^k$.
Thus, to bound the running time of the algorithm, it suffices to find $\footnotesize{\text{OPT}}(f_x,P_a,P_b)$ in time $poly(n)$ for every fixed triple $(f_x,P_a,P_b)$, where $f_x$ is a valid assignment.
\par
Given a triple $(f_x,P_a,P_b)$, we say that a clique $C\in \cal C$ can perfectly match a set $A_{x,i} \in P_a$ if there exists $S\subseteq V(C)$ such that $(A_{x,i},S)$ is a pair of perfectly matched sets and $N(S)\cap f^{-1}_x(a) = A_{x,i}$. Similarly, we say that a clique $C\in \cal C$ can perfectly match a set $B_{x,j} \in P_b$ if there exists $T\subseteq V(C)$ such that $(T,B_{x,j})$ is a pair of perfectly matched sets and $N(T)\cap f^{-1}_x(b) = B_{x,j}$. We can verify if a clique $C \in \cal C$ can perfectly match $A_{x,i}$ in $poly(n)$ time by verifying if for every $v\in A_{x,i}$, there exists a vertex $u\in V(C)$ such that $N(u) \cap f^{-1}_x(a) = \{v\}$. We can similarly verify if $C$ can perfectly match $B_{x,j}$ as well. Further, we say that a $C$ can be of type 3 in an extension $f$ of $f_x$ if there exist $u,v \in V(C)$ such that $u\neq v$ and $N(u)\cap f^{-1}_x(b) = \emptyset$ and $N(v)\cap f^{-1}_x(a) = \emptyset$, this makes it possible for $u$ and $v$ to be the matching neighbors of each other in $f$ by setting $f(u)=a$, $f(v)=b$, and $f(w)=d$ for every $w\in V(C)\setminus \{u,v\}$.

\begin{figure}[h]

    \centering
    \begin{tikzpicture}


    \foreach \i in {1,2}{
     
            \node[shape=circle, draw, fill = black, scale= 0.3, font=\footnotesize,label={\scriptsize{$u_{\i}$}}] ({u\i}) at (-1+\i/2,-0.2){};
   }
   \foreach \i in {3,4,5}{
     
            \node[shape=circle, draw, fill = black, scale= 0.1, font=\footnotesize,] ({u\i}) at (-1+\i/2.1,-0.2){};
   }
   \foreach \i in {6}{
     
            \node[shape=circle, draw, fill = black, scale= 0.3, font=\footnotesize,label={\scriptsize{$u_{r}$}}] ({u\i}) at (-1.2+\i/2,-0.2){};
   }

     \foreach \i in {1,2}{
     
            \node[shape=circle, draw, fill = black, scale= 0.3, font=\footnotesize,label={\scriptsize{$v_{\i}$}}] ({v\i}) at (3+\i/2,-0.2){};
   }
   \foreach \i in {3,4,5}{
     
            \node[shape=circle, draw, fill = black, scale= 0.1, font=\footnotesize,] ({v\i}) at (3+\i/2.1,-0.2){};
   }
   \foreach \i in {6}{
     
            \node[shape=circle, draw, fill = black, scale= 0.3, font=\footnotesize,label={\scriptsize{$v_{s}$}}] ({v\i}) at (2.8+\i/2,-0.2){};
   }
   
    \foreach \i in {1,2}{
     
            \node[shape=circle, draw, fill = black, scale= 0.3, font=\footnotesize,label={\scriptsize{$y_{\i}$}}] ({y\i}) at (7+\i/2,-0.2){};
   }
   \foreach \i in {3,4,5}{
     
            \node[shape=circle, draw, fill = black, scale= 0.1, font=\footnotesize,] ({y\i}) at (7+\i/2.1,-0.2){};
   }
   \foreach \i in {6}{
     
            \node[shape=circle, draw, fill = black, scale= 0.3, font=\footnotesize,label={\scriptsize{$y_{l}$}}] ({y\i}) at (6.8+\i/2,-0.2){};
   }
   
    \foreach \i in {1}{
     
            \node[shape=circle, draw, fill = black, scale= 0.3, font=\footnotesize,label={-90:\scriptsize{$z_{\i}$}}] ({z\i}) at (2+\i/1.2,-3){};
   }
   \foreach \i in {2}{
     
            \node[shape=circle, draw, fill = black, scale= 0.3, font=\footnotesize,label={-90:\scriptsize{$z_{\i}$}}] ({z\i}) at (2+\i/1.2,-3){};
   }
   \foreach \i in {3,4,5}{
     
            \node[shape=circle, draw, fill = black, scale= 0.1, font=\footnotesize,] ({z\i}) at (3+\i/2.2,-3){};
   }
    \foreach \i in {6}{
     
            \node[shape=circle, draw, fill = black, scale= 0.3, font=\footnotesize,label={-90:\scriptsize{$z_{l}$}}] ({z\i}) at (1+\i/1.2,-3){};
   }


    \foreach \i in {1,3}{
            \draw[thin,black!60] (u\i)--(z1);

   }
    \foreach \i in {2,4}{
            \draw[thin,black!60] (u\i)--(z2);

   }
    \foreach \i in {4,5}{
            \draw[thin,black!60] (u\i)--(z3);

   }
    \foreach \i in {1,6}{
            \draw[thin,black!60] (u\i)--(z4);

   }
    \foreach \i in {2,5}{
            \draw[thin,black!60] (u\i)--(z5);
            \draw[thin,black!60] (u\i)--(z6);

   }
   
   \foreach \i in {1,3}{
            \draw[thin,black!60] (v\i)--(z1);

   }
    \foreach \i in {2,4}{
            \draw[thin,black!60] (v\i)--(z2);

   }
    \foreach \i in {4,5}{
            \draw[thin,black!60] (v\i)--(z3);

   }
    \foreach \i in {1,6}{
            \draw[thin,black!60] (v\i)--(z4);

   }
    \foreach \i in {2,5}{
            \draw[thin,black!60] (v\i)--(z5);
            \draw[thin,black!60] (v\i)--(z6);

   }
   \foreach \i in {1,2,4,5}{
            \draw[thin,black!20] (y\i)--(z\i);
            

   }

  

\end{tikzpicture}
\caption{An example of auxiliary graph $H$, vertex $u_i$ corresponds to $A_{x,i}\in P_a$, vertex $v_i$ corresponds to $B_{x,i}\in P_b$, vertices $y_j$ and $z_j$ corresponds to $C_j\in \cal{C}$. Darkened edges represent the edges which have $u_i$ or $v_j$ as an endpoint, and they have weight $n+1$ each. The other edges have weight $1$. }\label{fig:auxbi}
\end{figure}
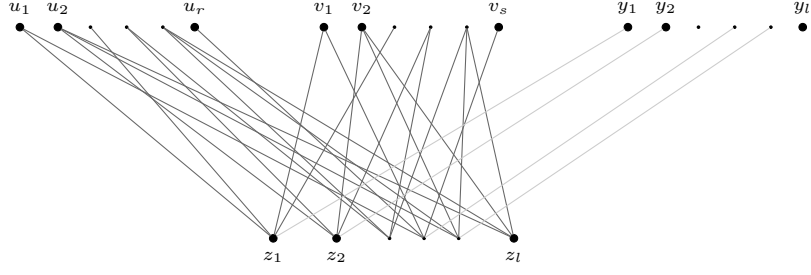

We now move on to find $\footnotesize{\text{OPT}}(f_x,P_a,P_b)$, for which we construct a weighted auxiliary bipartite graph $H=(L\cup R,E_H)$ such that
\begin{align*}
&L = \{u_i \mid A_{x,i}\in P_a\} \cup  \{v_i \mid B_{x,i}\in P_b\} \cup \{ y_j \mid C_j \in \cal{C}\},\\
&R= \{ z_j \mid C_j \in \cal{C} \}.
\end{align*}

and with the edge set 
\begin{align*}
    E_H = 
    &\{ u_iz_j\mid A_{x,i}\in P_a\land C_j\in {\cal{C}}\land C_j \text{ can perfectly match } A_{x,i}\} \\
    \cup
    &\{ v_iz_j\mid B_{x,i}\in P_b\land C_j\in {\cal{C}}\land C_j \text{ can perfectly match } B_{x,i}\} \\
    \cup 
    &\{ y_jz_j \mid  C_j\in {\cal{C}}\land C_j \text{ can be of type 3 }\}.
\end{align*}

The weight function $w: E_H \to \{n+1, 1\}$ is defined as: $w(e)=n+1$ if one of the endpoints of $e$ is $u_i$ or $v_j$ where $i\in[|P_a|]$ and $j\in[|P_b|]$, and $w(e)=1$ otherwise.

After constructing $H$, we find a maximum weight matching $M$ in $H$. 
Using $M$, we now extend $f_x$ to $f_m:V(G)\to \{a,b,d\}$ as follows.
For each $e\in M$:
\begin{enumerate}[ \text{Case} 1:]
    \item If $e=u_iz_j$, then let set $S\subseteq V(C_j)$ be such that $(A_{x,i},S)$ are perfectly matched sets and $N(S)\cap f^{-1}_x(a) = A_{x,i}$. If there are more than one such sets, we arbitrarily pick one; then $\forall u \in S$, we set $f_m(u)= b$, and $\forall u \in V(C_j)\setminus S$, we set $f_m(u)= d$.
   \item If $e=v_iz_j$, then let $S\subseteq V(C_j)$ be such that $(B_{x,i},S)$ are perfectly matched sets and $N(S)\cap f^{-1}_x(b) = B_{x,i}$. If there are more than one such sets, we arbitrarily pick one, then $\forall u \in S$, we set $f_m(u)= a$, and $\forall u \in V(C_j)\setminus S$, we set $f_m(u)= d$.
    \item If $e=y_jz_j$, then let $u,v\in V(C_j)$ be a pair of vertices such that $u\neq v$ and $N(u)\cap f^{-1}_x(b) = \emptyset$ and $N(v)\cap f^{-1}_x(a) = \emptyset$. If there are more than one such pair, we arbitrarily pick one of them. We then set $f_m(u)=a$ and $f_m(v)=b$ and $\forall w\in V(C_j)\setminus \{u,v\}$, $f_m(w)=d$.
\end{enumerate}
We assign $d$ to all the remaining vertices in $f_m$. \\

Observe that $f_m$ is a $(P_a, P_b)$-restricted extension of $f_x$. Further, we show that it is a valid assignment.
\begin{proposition}\label{fm_valid}
$f_m$ is a valid assignment.
\end{proposition}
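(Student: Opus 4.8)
The plan is to verify that $f_m$ is \emph{valid}, i.e. that every vertex in $f_m^{-1}(a)$ has at most one neighbor in $f_m^{-1}(b)$ and vice versa. The natural strategy is to classify the vertices of $G$ according to how they receive their label in the construction, and then argue locally for each class. The labelled vertices are of three kinds: the vertices of $X$ (fixed by $f_x$), and the vertices inside the cliques $C_j$ that get set to $a$ or $b$ through one of the three cases of the matching $M$; all remaining vertices are assigned $d$ and play no role. Since $M$ is a matching in $H$, each right-vertex $z_j$ is incident to at most one edge of $M$, so each clique $C_j$ is \emph{touched} by at most one of Case~1, Case~2, or Case~3. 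This is the crucial structural fact that prevents conflicts from accumulating inside a single clique.

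First I would treat the two sources of neighbors separately for a vertex $v$ that is labelled $a$ (the argument for $b$ is symmetric). Either $v\in X$ or $v$ lies in some clique $C_j$. If $v\in X$, then $f_x(v)=a$; I would split into the case where $v$ was already matched in $f_x$ (so by validity of $f_x$ it has exactly its one matching neighbor, and I must check the construction adds no new $b$-neighbor to it) and the case where $v$ is unmatched in $f_x$, i.e. $v$ lies in some $A_{x,i}\in P_a$. In the latter case $A_{x,i}$ is matched by the definition of a $(P_a,P_b)$-restricted extension to exactly one clique's chosen set $S$ via a Case~1 edge $u_i z_j$, and the condition $N(S)\cap f_x^{-1}(a)=A_{x,i}$ together with $(A_{x,i},S)$ being perfectly matched guarantees $v$ has exactly one neighbor labelled $b$. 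If instead $v$ lies in a clique $C_j$ and was set to $a$, this happened through a Case~2 or Case~3 edge on $z_j$; here I would use the explicit defining conditions ($N(S)\cap f_x^{-1}(b)=B_{x,i}$ in Case~2, or $N(u)\cap f_x^{-1}(b)=\emptyset$ in Case~3) to bound the number of $b$-neighbors of $v$ by one.

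The main obstacle I anticipate is handling edges that run \emph{between} two different cliques and between a clique and $X$ simultaneously, since such cross-edges are exactly the ones not controlled by the ``one clique per matching edge'' principle. Concretely, a vertex set $S\subseteq V(C_j)$ chosen in Case~1 is labelled $b$, and I must ensure that these newly created $b$-vertices do not acquire extra $a$-neighbors from some \emph{other} clique $C_{j'}$ that was also activated, or from vertices of $X$ outside $A_{x,i}$. Here the key leverage is again that $M$ is a matching, so the sets $S$ (and the Case~3 vertices) chosen across different cliques are all disjoint and each clique contributes at most two labelled vertices; combined with the neighborhood-restriction clauses baked into the definitions of ``$C_j$ can perfectly match $A_{x,i}$'' and ``$C_j$ can be of type~3,'' every potential extra edge is ruled out. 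I would organize this as a short case analysis keyed to which of Case~1/2/3 produced each endpoint, and conclude that no vertex violates the at-most-one-neighbor condition, so $f_m$ is valid.
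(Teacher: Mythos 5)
Your proposal is correct and follows essentially the same route as the paper: a case analysis on whether the vertex lies in $X$ (matched or unmatched in $f_x$) or in some clique $C_j$, using that $M$ is a matching together with the neighborhood-restriction conditions in the definitions of ``can perfectly match'' and ``can be of type 3.'' The one anticipated obstacle you raise, edges running between two different cliques, is vacuous since the cliques are the connected components of the cluster graph $G-X$ and hence pairwise non-adjacent, so your argument only simplifies.
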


\begin{proof}
Let $v$ be an arbitrary vertex in  $f_m^{-1}(a)$. We show that it has at most one neighbor in $f_m^{-1}(b)$. Similar arguments hold if $v \in f_m^{-1}(b)$.

\begin{enumerate}[ \text{Case} I:]
    \item $v \in X$ and $v$ is matched in $f_x$. In this case, $v$ has a neighbor in $f_x^{-1}(b) \subseteq f_m^{-1}(b)$ and by construction, no new neighbor of $v$ is set to $b$ in $f_m$.
    \item $v \in X$ and $v \in A_{x,i}$ for an $A_{x,i}\in P_a$. In this case, $v$ has no neighbor in $f_x^{-1}(b)$.  And if $u_iz_j \in M$ for a $j\in[l]$, then $v$ has exactly one neighbor in $f_m^{-1}(b)$ which by construction belongs to $C_j$.
    \item $v \in C_j$ for a $j\in[l]$. In this case, if $v$ is set to $a$ in $f_m$ due to the edge $y_jz_j\in M$, then $v$ has exactly one neighbor in $f_m^{-1}(b)$ which belongs to $C_j$ itself. Otherwise, if $v$ is set to $a$ in $f_m$ due to an edge $v_iz_j\in M$ for an $i\in[|P_b|]$, then $v$ has exactly one neighbor in $f_m^{-1}(b)$ which must belong to $B_{x,i}$.
    \end{enumerate}
\end{proof}

We now show that $f_m$ has maximum size among all $(P_a,P_b)$-restricted extensions of $f_x$.
\begin{proposition}\label{sizefm}
$size(f_m) = \footnotesize{\text{OPT}}(f_x,P_a,P_b)$.
\end{proposition}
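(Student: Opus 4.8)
The plan is to show both inequalities: that $size(f_m) \leq \footnotesize{\text{OPT}}(f_x,P_a,P_b)$ and that $size(f_m) \geq \footnotesize{\text{OPT}}(f_x,P_a,P_b)$. The first inequality is immediate: by Proposition~\ref{fm_valid} together with the observation preceding it, $f_m$ is a valid $(P_a,P_b)$-restricted extension of $f_x$, and one checks that it is in fact a feasible solution (every vertex in $f_m^{-1}(a)$ and $f_m^{-1}(b)$ is genuinely matched, using that matched vertices of $f_x$ retain their match and that every $A_{x,i}, B_{x,j}$ chosen by the matching $M$ gets perfectly matched by the corresponding clique). Hence $size(f_m)$ is one of the values over which $\footnotesize{\text{OPT}}(f_x,P_a,P_b)$ takes a maximum, giving $size(f_m)\leq \footnotesize{\text{OPT}}(f_x,P_a,P_b)$.

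The substance is the reverse inequality. The key idea I would use is to establish a correspondence between $(P_a,P_b)$-restricted feasible extensions $f$ of $f_x$ and matchings in the auxiliary graph $H$, so that $size(f)$ is recovered (up to an additive constant depending only on $f_x$) from the weight of the associated matching. First I would fix an arbitrary $(P_a,P_b)$-restricted feasible extension $\hat f$ with $size(\hat f)=\footnotesize{\text{OPT}}(f_x,P_a,P_b)$, and read off a matching $M_{\hat f}$ in $H$: for each $A_{x,i}\in P_a$ that is matched in $\hat f$, the restricted-extension definition supplies a clique $C_p$ with $(A_{x,i}, C_p\cap \hat f^{-1}(b))$ perfectly matched and $N(C_p\cap \hat f^{-1}(b))\cap \hat f^{-1}(a)=A_{x,i}$, so $C_p$ can perfectly match $A_{x,i}$ and we include $u_iz_p$; symmetrically for $P_b$ we include edges $v_iz_q$; and for every clique that is of type~$3$ in $\hat f$ we include $y_jz_j$. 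I would check this edge set is a genuine matching by arguing each clique vertex $z_j$ is used at most once (a clique is of exactly one of the four types of Observation~\ref{obs:clique-types}, and type~$1$, type~$2$, type~$3$ cliques each force $z_j$ into at most one role).

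The crucial bookkeeping step, and the main obstacle, is to verify that the weight scheme $w$ exactly tracks $size(f)=|f^{-1}(a)|$. Here the large weight $n+1$ on edges incident to $u_i$ or $v_i$ is a forcing device: because each such edge outweighs all the unit-weight $y_jz_j$ edges combined, a maximum-weight matching $M$ must saturate as many $u_i,v_i$ vertices as possible, i.e. it matches every $A_{x,i}$ and $B_{x,j}$ that can be matched, thereby maximizing the number of vertices placed into $f_m^{-1}(a)$ coming from the clique sides and from the forced $X$-vertices. I would argue that contributions to $|f^{-1}(a)|$ split into a part fixed by $f_x$ (the matched $a$-vertices and the members of the $A_{x,i}$) and a part governed by $M$ (type~$3$ cliques each add one $a$-vertex via a $y_jz_j$ edge, and each $v_iz_q$ edge contributes the $a$-side set $C_q\cap f^{-1}(a)$ matching $B_{x,j}$). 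The care needed is that a clique matched to a $u_i$ through edge $u_iz_j$ contributes only to $f^{-1}(b)$, not to $f^{-1}(a)$, so these edges carry weight $n+1$ solely to force saturation rather than to count $a$-vertices; I would therefore show that among maximum-weight matchings the realized $size(f_m)$ equals the optimum by arguing that any feasible $\hat f$ yields a matching of weight at most $w(M)$ and, conversely, that $f_m$ built from $M$ achieves $size(f_m)$ equal to the value encoded by $w(M)$. Assembling these two directions gives $size(f_m)\geq size(\hat f)=\footnotesize{\text{OPT}}(f_x,P_a,P_b)$, completing the proof.
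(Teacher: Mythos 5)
Your proposal follows essentially the same route as the paper's proof: extract a matching $\hat M$ in $H$ from an optimal $(P_a,P_b)$-restricted extension $\hat f$, use the weight-$(n+1)$ edges to force any maximum-weight matching to saturate every $u_i$ and $v_j$ (which also gives feasibility of $f_m$ when $\text{OPT}(f_x,P_a,P_b)\geq 1$), and reduce the size comparison to counting type-3 cliques, i.e.\ the unit-weight edges. The only point to tighten is your parenthetical claim that $f_m$ is always feasible --- it need not be when no feasible restricted extension exists --- but since $size$ is defined to be $0$ for infeasible assignments, the inequality $size(f_m)\leq \text{OPT}(f_x,P_a,P_b)$ holds regardless, exactly as in the paper.
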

\begin{proof}
Since $f_m$ is a $(P_a,P_b)$-restricted extension of $f_x$, and if $\footnotesize{\text{OPT}}(f_x,P_a,P_b)=0$, then $size(f_m) = \footnotesize{\text{OPT}}(f_x,P_a,P_b)$ due to the maximality of $\footnotesize{\text{OPT}}(f_x,P_a,P_b)$. Thus, we move on to assume that $\footnotesize{\text{OPT}}(f_x,P_a,P_b)\geq 1$. Let $\hat{f}$ be a $(P_a,P_b)$-restricted extension of $f_x$ such that $size(\hat{f})=\footnotesize{\text{OPT}}(f_x,P_a,P_b)$. In this case, $\hat{f}$ is a feasible solution. For every $A_{x,i}\in P_a$, let $p_i \in [l]$ be such that $(A_{x,i},C_{p_i}\cap \hat{f}^{-1}(b))$ is a pair of perfectly matched sets and $N(C_{p_i} \cap \hat{f}^{-1}(b)) \cap \hat{f}^{-1}(a)=A_{x,i}$. Similarly, for every $B_{x,j}\in P_b$, let $q_j \in [l]$ be such that $(C_{q_j}\cap \hat{f}^{-1}(a),B_{x,j})$ is a pair of perfectly matched sets and $N(C_{q_j} \cap \hat{f}^{-1}(b)) \cap \hat{f}^{-1}(a)=B_{x,j}$. This implies that for every $i\in[|P_a|]$ and  $j\in [|P_b|]$, $u_iz_{p_i}$ and $v_jz_{q_j}$ are edges in $H$. Also observe that every $C_{p_i}$ is of type 2, and every $C_{q_j}$ is of type 1 in $\hat{f}$. Further, let $T_3$ be the set of all cliques of type 3 in $\hat{f}$. To this end, let $\hat{M}=\{u_iz_{p_i}\mid i\in[|P_a|]\}\cup \{v_jz_{q_j}\mid j\in[|P_b|]\} \cup \{y_iz_{i}\mid C_i\in T_3\}$.  We have that for every distinct $i,j\in[|P_a|]$, $A_{x,i}\neq A_{x,j}$ and thus $C_{p_i}\neq C_{p_{j}}$. Similarly, for every distinct $i,j\in[|P_b|]$, $B_{x,i}\neq B_{x,j}$ and thus $C_{q_i}\neq C_{q_{j}}$. Thus, we observe that $\hat{M}$ is a matching in $H$; further $\hat{M}$ saturates every $u_i$ and every $v_j$, and the edges incident on these vertices have weight $n+1$ each; the other edges have weight $1$ each, and they are at most $l\leq n$. Thus, any maximum weight matching in $H$, must saturate all $u_i$ and $v_j$ vertices. Thus, $M$ (used for construction of $f_m$) also saturates every $u_i$ and every $v_j$. Case 1 and 2 of construction of $f_m$ ensures that every $A_{x,i}$ and every $B_{x,j}$ is matched in $f_m$, and Case 3 adds matching neighbors; this along with proposition \ref{fm_valid} ensures that
$(f_{m}^{-1}(a),f_{m}^{-1}(b))$ is a pair of perfectly matched sets and hence $f_m$ is a feasible solution.

 Since both $f_m$ and $\hat{f}$ are extensions of $f_x$, the number of vertices in $f_m^{-1}(a)\cap X$  and $\hat{f}^{-1}(a)\cap X$ are equal. Every vertex of $f_m^{-1}(a)\setminus X$ (resp.  $\hat{f}^{-1}(a) \setminus X$), must belong to a clique of type 1 or to a clique of type 3 in $f_m$ (resp.  $\hat{f}$), 
 Every vertex of $B_x$ must be matched to a vertex of a clique of type 1 in $f_m$ (resp. $\hat{f}$), and every vertex of a clique of type 1 which is in $f_m^{-1}(a)$ (resp. in $\hat{f}^{-1}(a)$) must be matched to a vertex of $B_x$, this concludes that the vertices of $f_m^{-1}(a)$ (resp. $\hat{f}^{-1}(a)$) which belong to a clique of type 1 in $f_m$ (resp. $\hat{f}$) are equal to $|B_{x}|$. Since every type 3 clique in $f_m$ (resp. $\hat{f}$) adds exactly one vertex in $f_m^{-1}(a)$ (resp. $\hat{f}^{-1}(a)$), if $|\hat{f}^{-1}(a)|>|f_m^{-1}(a)|$ then $\hat{f}$ must have more number of type 3 cliques than $f_m$. Further, since edges with weight $n+1$ are equal in both $\hat{M}$ and $M$, and each edge with weight 1 (edge $y_iz_i$ where $i\in[l]$) corresponds to a type 3 clique, thus $size(\hat{f})>size(f_m)$ implies that $w(\hat{M})>w(M)$, but since $M$ is a maximum weight matching in $H$, $w(\hat{M})\leq w(M)$, hence we have  $size(\hat{f})\leq size(f_m)$ (by contraposition). And due to the maximality of $\hat{f}$, we conclude that $size(\hat{f})= size(f_m)=\footnotesize{\text{OPT}}(f_x,P_a,P_b)$. This finishes the proof.

\end{proof}

For a given triple $(f_x,P_a,P_b)$, the construction of $f_m$ takes time $poly(n)$, thus we find $\footnotesize{\text{OPT}}(f_x,P_a,P_b)$ in time $poly(n)$. As discussed, this allows us to find $\footnotesize{\text{OPT}}(f_x)$ in time $O(k^{k}\cdot poly(n))$, which further allows us to find $size(f_{opt})$ in time $O(3^k\cdot k^{k}\cdot poly(n))$.

\subsection{Parameterized by Distance to Co-Cluster }\label{sec_co-cluster}

For a graph $G=(V,E)$, a vertex set $X\subseteq V$ is a co-cluster deletion set if $G-X$ is a co-cluster graph. The smallest size of a co-cluster deletion set of $G$ is called the
distance to co-cluster of $G$.
In this section, we consider PMS parameterized by distance to co-cluster of the input graph.
We note that for an $n$ vertex graph, a co-cluster deletion set of size at most $k$ can be computed in time $O(1.92^k\cdot n^2)$ \cite{DBLP:journals/mst/BoralCKP16}. Thus, we may assume that for the input graph, we have a co-cluster deletion set of size at most $k$.

In the remaining part of this section, we prove the following theorem.

\begin{theorem}\label{dcctheorem}
There exist an algorithm which runs in time $O(3^k \cdot poly(n))$ and solves PMS for the input $n$ vertex graph with the distance to co-cluster at most $k$.
\end{theorem}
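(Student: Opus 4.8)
The plan is to mirror the distance-to-cluster algorithm of Theorem~\ref{dctheorem}, but exploit the extra structure of a co-cluster graph to reduce the branching. Recall that $G-X$ is a complete multipartite graph, i.e. its vertex set partitions into independent sets (the parts) $P_1,\dots,P_t$ with every pair of vertices from different parts adjacent. As before, the algorithm guesses the restriction $f_x : X \to \{a,b,d\}$ of a feasible solution, which costs a factor of $3^k$. The goal is then to show that for each fixed valid $f_x$ we can compute $\footnotesize{\text{OPT}}(f_x)$ in polynomial time, with \emph{no} additional $k^k$ factor; this is precisely what gives the improved $O(3^k \cdot poly(n))$ bound over the cluster case.

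First I would analyze how a feasible solution $(f^{-1}(a),f^{-1}(b))$ can interact with the co-cluster part of $G$. The crucial observation is that the completeness between distinct parts severely constrains which vertices outside $X$ may lie in $A=f^{-1}(a)$ or $B=f^{-1}(b)$. Since every vertex of $A$ has \emph{exactly} one neighbor in $B$, and any two vertices in different parts are adjacent, we cannot place more than one ``active'' part on each side without forcing a vertex to have too many neighbors across parts; more carefully, I would enumerate the possible patterns of how $A\setminus X$ and $B\setminus X$ distribute among the parts $P_1,\dots,P_t$ (vertices inside a single part are mutually non-adjacent, so an independent set can safely be placed on one side, whereas two parts both contributing to $B$ would give a vertex of $A$ in a third part several $B$-neighbors). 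This case analysis should reveal that, up to the choice of a bounded number of special parts, the assignment on $V\setminus X$ is forced or can be chosen greedily, so only $poly(n)$ work remains after $f_x$ is fixed.

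Concretely, after fixing $f_x$ I would first determine, for each unmatched vertex of $X$ on side $a$ (resp. $b$), which parts can supply its unique matching neighbor, respecting that the chosen neighbor must not create extra cross-edges into $A$ or $B$. As in the cluster case I expect to reduce the extension problem to a weighted matching (or a simple flow) in an auxiliary bipartite graph whose one side corresponds to unmatched vertices / clusters of $X$ and whose other side corresponds to the parts $P_1,\dots,P_t$, using large weights $n+1$ to force saturation of the mandatory demands and unit weights to maximize $|A|$ among the remaining free choices. The correctness argument would parallel Propositions~\ref{fm_valid} and~\ref{sizefm}: show the matching induces a valid assignment $f_m$, and show that any optimal feasible extension yields a matching of no greater weight, so the maximum-weight matching recovers $\footnotesize{\text{OPT}}(f_x)$.

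The main obstacle I anticipate is handling the many-to-one interaction between parts and the two sides $A,B$ caused by complete bipartiteness between parts. In the cluster case each clique could independently be of one of four types and cliques did not interfere with one another; here, because distinct parts are fully joined, a single vertex placed in $A$ inside one part ``sees'' every vertex of $B$ living in any other part, so the types are globally coupled rather than local. The heart of the proof is therefore a clean structural lemma bounding how many parts can simultaneously be ``active'' on each side in any feasible solution (I expect at most a constant, or at most one per side together with parts that only interact with $X$), thereby decoupling the global constraint into the local, matchable conditions that the $poly(n)$ subroutine can test. Getting this structural claim exactly right, and verifying that the greedy/matching extension never over-counts edges between two active parts, is where the real care is needed.
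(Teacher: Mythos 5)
Your high-level plan coincides with the paper's: branch over the $3^k$ assignments $f_x$ of $X$, then argue that the complete multipartite structure of $G-X$ forces the extension so tightly that it can be completed in polynomial time for each $f_x$. But the proposal stops exactly where the proof has to start: the ``clean structural lemma'' you defer to is the entire content of the argument, and the version you conjecture is not correct as stated. You expect at most one active part per side. That fails: when $A_x=B_x=\emptyset$, a feasible solution can place $u_1\in I_i$ and $u_2\in I_j$ (with $i\neq j$) both in $A$, matched respectively to $v_1\in I_j$ and $v_2\in I_i$ in $B$, so two distinct parts each contribute to \emph{both} sides. The paper's actual structural claims are split by cases on $A_x$ and $B_x$: if both are non-empty, then all vertices of $V\setminus X$ assigned $a$ or $b$ lie in a single part $I_i$, which is guessed in $n$ ways; if $B_x=\emptyset$, then at most one vertex of $V\setminus X$ is assigned $a$ when $A_x\neq\emptyset$ and at most two when $A_x=\emptyset$, so one guesses that set of at most two vertices in $O(n^2)$ ways and completes greedily. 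Your lemma has to be replaced by this (or an equivalent) case analysis, each case with its own short adjacency argument; nothing in your sketch supplies it.

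Second, the auxiliary bipartite graph you propose -- demands on one side, the \emph{parts} $P_1,\dots,P_t$ on the other, with a weighted matching forcing each demand to be served by a distinct part -- imports the wrong picture from the cluster case. In the co-cluster setting a single part must serve \emph{all} of $A_x\cup B_x$ simultaneously (that is precisely what the both-nonempty case of the structural lemma says), so a matching between demands and parts would insist on distinct parts for distinct demands and reject feasible instances. The matching that actually works is built \emph{after} guessing the unique active part $I_i$: its sides are the individual vertices of $I_i$ and the individual vertices of $A_x\cup B_x$, with an edge from $u\in I_i$ to $v\in A_x$ (resp.\ $B_x$) exactly when $N(u)\cap f_x^{-1}(a)=\{v\}$ (resp.\ $N(u)\cap f_x^{-1}(b)=\{v\}$), and one asks for an unweighted matching saturating $A_x\cup B_x$; no $(n+1)$-versus-$1$ weighting is needed because the size of the solution is already determined by $f_x$ once feasibility holds. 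So both the structural lemma and the polynomial-time subroutine need to be reworked before this becomes a proof.
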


For the input graph $G=(V,E)$, let $X \subseteq V$ be a co-cluster deletion set of size at most $k$. Further, let ${\cal {I}} = \{I_1,I_2,..,I_l\}$ be the set of maximal independent sets in $G-X$.

Recalling from Section \ref{sec_cluster}, for a vertex set $Y\subseteq V$, we say $f: Y\to \{a,b,d\}$ is an assignment. We refer Section \ref{sec_cluster} for the terminologies of \textit{valid assignment, extension of an assignment $f$, feasible solution,  $size(f)$} (size of an assignment $f$), and \textit{matched vertices} in an assignment $f$.
Similar to Section \ref{sec_cluster}, our goal is to find an assignment $f:V\to \{a,b,d\}$ such that $size(f)$ is maximized; let $f_{opt}$ be that assignment.

Let $f_x: X\to \{a,b,d\}$ be a valid assignment, we recall that an assignment $f: V(G)\to \{a,b,d\}$ is an \textit{extension} of $f_x$ if for every $ v\in X,\ f_x(v) = f(v)$. The intuition behind our algorithm is to consider every valid $f_x$ and find its extension $f$ with maximum $size(f)$. We define $\footnotesize{\text{OPT}}(f_x)$ by
\begin{equation}
    \footnotesize{\text{OPT}}(f_x)= \max \{ size(f) \mid f \text{ is an extension of } f_x \}.
\end{equation}

We note that the $size(f_{opt})$ will be equal to the maximum $\footnotesize{\text{OPT}}(f_x)$ over all possible $f_x$.
Since the number of assignments $f_x$ is at most $3^k$, to bound the running time of the algorithm, it suffices to find $\footnotesize{\text{OPT}}(f_x)$ in $poly(n)$.
 For a given valid assignment $f_x:X \to \{a,b,d\}$, let $A_x = \{ v\ \mid v\in X \land f_x(v)=a  \land v \text{ is not matched in } f_x\}$ and $B_x = \{ v\ \mid v\in X \land f_x(v)=b \land v \text{ is not matched in }f_x\}$. Let $\hat{f}$ be an extension of $f_x$ such that $size(\hat{f})=\footnotesize{\text{OPT}}(f_x)$. To find $\footnotesize{\text{OPT}}(f_x)$ consider the following cases.\\
\begin{enumerate}[ \text{Case} 1:]    \item \textbf{ Both $A_x$ and $B_x$ are non-empty.\\}
    If $size(\hat{f})\geq 1$, then $\hat{f}$ is a feasible solution, and all the vertices of $A_x$ and $B_x$ must be matched to vertices of $V\setminus X$ in $\hat{f}$. Further, we make the following claim.
   
      \begin{claim*}
        If $size(\hat{f})\geq 1$, then all the vertices of $V\setminus X$ which are set to either $a$ or $b$ in $\hat{f}$ belong to a single maximal independent set $I_i\in \cal I$.
      \end{claim*}
      \begin{claimproof}
      Since both $A_x$ and $B_x$ are non-empty and they must be matched to vertices of $V\setminus X$ in $\hat{f}$, there must be two vertices $u,v\in V\setminus X$ such that $\hat{f}(u)=a$ and $\hat{f}(v)=b$, and $u,v$ matched to a vertex of $B_x$ and $A_x$ respectively. In this case, $u$ and $v$ must belong to the same $I_i\in \cal I$, otherwise they will be neighbors in opposite sets and $\hat{f}$ will not be a feasible solution. Similarly, any other vertex $w\in V\setminus X$ if assigned a value $a$ or $b$ in $\hat{f}$ should be from $I_i$, otherwise it will be a neighbor of either $u$ or $v$ in an opposite set.
       \end{claimproof}

Using the above claim, we construct an extension $f$ of $f_x$ as follows. We guess a set $I_i\in \cal I$ who's vertices may be set to $a$ or $b$ in $f$, and set vertices of all other $I_j\in \cal I$ to $d$. 
Since $I_i$ is an independent set, a vertex $u\in I_i$ if set to $a$ or $b$ must be matched to a vertex of $B_x$ or $A_x$ respectively. We say that a vertex $u\in I_i$ can perfectly match a vertex $v\in A_x$ (resp. $v\in B_x$) if $N(u)\cap f_x^{-1}(a)=\{v\}$ (resp. $N(u)\cap f_x^{-1}(b)=\{v\}$). Consider an auxiliary bipartite graph $H$ with vertex sets $I_i\cup A_x\cup B_x$, we connect $u\in I_i$ to $v\in (A_x\cup B_x)$ in $H$ if $u$ can perfectly match $v$. We find a maximum matching $M$ in $H$. For each edge $e\in M$, let $u$ be the endpoint which belong to $I_i$ and $v$ belong to $A_x\cup B_x$, we set $f(u)=a$ if $f(v)=b$ else we set $f(u)=b$. Once we process all the edges of $M$, we set $f(x)=d$ for every remaining unassigned vertex $x\in I_i$. We can observe that if $\hat{f}$ is a feasible solution and we correctly guessed $I_i\in \cal {I}$, then $M$ saturates $A_x\cup B_x$ and $f$ is a feasible solution with $size(f)=size(\hat{f})$. Construction of $f$ takes time $poly(n)$, and there are at most $n$ distinct sets to guess from $\cal{I}$. Further, if for no $I_i\in \cal{I}$ the constructed $f$ is a feasible solution, then we conclude that $size(\hat{f})= \footnotesize{\text{OPT}}(f_x)=0$.
\\

\item \textbf{ $A_x$ or $B_x$ is empty.\\}
We assume that $B_x$ is empty and $A_x$ may be non-empty, the case when $A_x$ is empty and $B_x$ may be non-empty will be similar. If $size(\hat{f})\geq 1$, then $\hat{f}$ is a feasible solution, and if $A_x$ is non-empty then all the vertices of $A_x$ must be matched to vertices of $V\setminus X$ in $\hat{f}$. Further, we make the following claim.
  
      \begin{claim*}
        If $size(\hat{f})\geq 1$, then the following holds.
        \begin{itemize}
            \item If $A_x$ is non-empty, then at most one vertex from $V\setminus X$ is assigned $a$ in $ \hat{f}$.
            \item If $A_x$ is empty, then at most two vertices from $ V\setminus X$ are assigned $a$ in $ \hat{f}$.
        \end{itemize}
      \end{claim*} 
      \begin{claimproof}
    Let there be two distinct vertices $u_1,u_2\in (V\setminus X)$ such that $\hat{f}(u_1)$=$\hat{f}(u_2)$=$a$. Since $B_x$ is empty and $\hat{f}$ is a feasible solution, $u_1,u_2$ must be matched to vertices of $V\setminus X$, in this case let $u_1,u_2$ be matched to $v_1,v_2\in (V\setminus X) \cap \hat{f}^{-1}(b)$ respectively. Observe that $u_1$ and $v_2$ should belong to the same $I_i\in \cal I$ and $u_2$ and $v_1$ should belong to the same $I_j\in \cal I$ where $I_i\neq I_j$, otherwise they will be neighbors of each other. This implies that both $\hat{f}^{-1}(a)$ and $\hat{f}^{-1}(b)$ contain vertices from two distinct $I_i$ and $I_j$. First, if $A_x$ is non-empty, then there must be a vertex $x\in V\setminus X$ such that $\hat{f}(x)=b$ and $x$ is matched to a vertex in $A_x$ (as $\hat{f}$ is a feasible solution), since $u_1$, $u_2$ belong to distinct maximal independent sets, at least one of them is a neighbor of $x$ in opposite set, this contradicts that both $u_1$ and $u_2$ belong to $\hat{f}^{-1}(a)$.
    Second, if $A_x$ is empty, then assigning $a$ to any other vertex $z\in V\setminus X$ will make it a neighbor of either $v_1$ or $v_2$ (which are already matched in $\hat{f})$, this finishes the proof.
      \end{claimproof}

Using above claim, we construct an extensions $f$ of $f_x$ as follows. We guess a set $S\subseteq V\setminus X$ of size at most one (if $A_x$ is non-empty) or at most two (if $A_x$ is empty). Note that $S$ can be empty and it is our guess on the vertices of $V\setminus X$ which are assigned $a$ in $\hat{f}$. We set $f(u)=a$ for every $u\in S$. Since $B_x$ is empty, every vertex in $f_x^{-1}(b)$ is already matched in $f_x$,
and it is necessary that every vertex in $S$ has no neighbor in $f_x^{-1}(b)$. If a vertex in $S$ has a neighbor in $f_x^{-1}(b)$, then we guess another $S$. Now, for every $u\in A_x\cup S$ we check if there exists a vertex $v \in V\setminus (X\cup S)$ such that $N(v)\cap (f_x^{-1}(a)\cup S) = \{u\}$, if there are more than 1 such vertices, we arbitrarily pick one such vertex $v$ and set $f(v)=b$. In the end, we assign $f(x)=d$ to every remaining unassigned vertex  $x\in V\setminus (X\cup S)$. We can observe that if $\hat{f}$ is a feasible solution and we correctly guessed $S$, then $f$ is a feasible solution and $size(f)=size(\hat{f})$. Construction of $f$ takes time $poly(n)$, and there are at most $O(n^2)$ distinct sets $S$ to guess from $V\setminus X$. The maximum size of a constructed $f$ among all the guesses of $S$ will be equal to $size(\hat{f})=\footnotesize{\text{OPT}}(f_x)$.
Further, if for no $S$ the constructed $f$ is a feasible solution, then we conclude that $size(\hat{f})=\footnotesize{\text{OPT}}(f_x)=0$.
\\

\end{enumerate}

As discussed, there are at most $3^k$ assignments $f_x$, and finding $\footnotesize{\text{OPT}}(f_x)$ for each $f_x$ takes time $poly(n)$. Thus, it takes time $O(3^k\cdot poly(n))$ to find $size(f_{opt})$. This finishes the proof.
\subsection{Parameterized by Treewidth}\label{section-tw}

We refer to \cite{DBLP:books/sp/CyganFKLMPPS15,planartreewidth} for the details on tree decomposition and treewidth. We recall here basic definitions of tree decomposition and treewidth from \cite{DBLP:books/sp/CyganFKLMPPS15}. A \textit{tree decomposition} of a graph $G$ is a pair $\mathcal{T}= (T,\{\beta_t\}_{t\in V(T)})$, where $T$ is a tree and every node in $V(T)$ assigned a vertex subset $\beta_t \subseteq V(G)$, also called the bag of $t$ such that the following holds. (i) For every $v\in V(G)$, there exists a node $t\in V(T)$ such that $v\in \beta_t$;
(ii) for every edge $e\in E(G)$, there exists a node $t\in V(T)$ such that $V(e)\subseteq \beta_t$;
(iii) For every $v\in V(G)$, let $T_v = \{t\mid t\in V(T)\wedge v\in \beta_t\}$, that is $T_v$ is the set of all the nodes of $T$ that contain $v$ in their bags, then $T[T_v]$ induces a connected subgraph of $T$. Further, a tree decomposition $\mathcal{T}= (T,\{\beta_t\}_{t\in V(T)})$ is said to be a rooted tree decomposition if $T$ is rooted at some node $r\in V(T)$.

The width of the tree decomposition $\mathcal{T}$ is $\max\{ |\beta_t|-1 \ |\ t\in V(T)\}$. The treewidth of a graph $G$ denoted by $tw(G)$ is minimum width over all the possible tree decompositions of $G$.

A useful property of tree decomposition is the existence of \textit{nice tree decomposition}, we refer to \cite{DBLP:books/sp/Kloks94,DBLP:books/sp/CyganFKLMPPS15} for the details on nice tree decomposition, and recall a definition of nice tree decomposition here.

A rooted tree decomposition $\mathcal{T}= (T,\{\beta_t\}_{t\in V(T)})$ is nice if $T$ is rooted at a node $r$ with empty bag, i.e. $\beta_r = \emptyset$, and every node of $T$ belongs to one of the following type.
\begin{itemize}
    \item \textit{Leaf node}: a leaf $l$ of $T$ and $\beta_t = \emptyset$.
    \item \textit{Introduce node}: a node $t$ of $T$ with exactly one child $c$ such that $\beta_c\subseteq \beta_t $ and  $\beta_t \setminus \beta_c = \{v\} $ for some vertex $v\in V(G)$, vertex $v$ is said to be introduced at $t$.
    \item \textit{Forget node}: a node $t$ of $T$ with exactly one child $c$ such that $\beta_t\subseteq \beta_c $ and  $\beta_c \setminus \beta_t = \{v\} $ for some vertex $v\in V(G)$, vertex $v$ is said to be forgot at $t$.
    \item \textit{Join node}: a node $t$ of $T$ with exactly two children $c_1$ and $c_2$ such that $\beta_t= \beta_{c_1} = \beta_{c_2}$.
\end{itemize}

It is also known that a tree decomposition with width at most $tw$ can be converted into a nice tree decomposition with width at most $tw$ and number of nodes at most $O(tw\cdot n)$ in time $poly(n)$ \cite{DBLP:books/sp/CyganFKLMPPS15}.

Further, in a nice tree decomposition $\mathcal{T}= (T,\{\beta_t\}_{t\in V(T)})$, for a $t\in T$, we denote $\gamma_t$ to be the union of bags of all the nodes which belong to the subtree rooted at $t$, and $G_t= G[\gamma_t]$.

Now that we discussed preliminaries for this section, in the remaining part of this section, we prove the following theorem.
\begin{theorem}\label{twtheorem}
Given a nice tree decomposition of width at most $tw$ for input $n$-vertex graph $G$, there exists an algorithm that runs in time $O(10^{tw}\cdot poly(n))$ and solves PMS for $G$.  The algorithm can work even if such a tree decomposition is not given, at an expense of higher constant at the base of the exponent.
\end{theorem}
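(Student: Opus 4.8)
The plan is to solve the optimization version by bottom-up dynamic programming over the given nice tree decomposition, keeping at each node a table indexed by the \emph{status} of the vertices in its bag. In the spirit of the assignments $f:V\to\{a,b,d\}$ used in the previous sections, I would give each bag vertex one of five states: $d$ (assigned $d$), $a_0$ (in $A$ but with no neighbour in $B$ seen so far), $a_1$ (in $A$ with exactly one neighbour in $B$ seen so far), and symmetrically $b_0,b_1$. A vertex whose count of opposite-side neighbours would reach $2$ is immediately discarded, so only these five states survive. For a node $t$ and a state assignment $s:\beta_t\to\{d,a_0,a_1,b_0,b_1\}$, the entry $dp[t][s]$ stores the maximum of $|f^{-1}(a)|$ over all assignments $f$ of $\gamma_t$ that restrict to $s$ on $\beta_t$, such that every already-forgotten vertex of $A\cup B$ has exactly one neighbour on the opposite side and every bag vertex agrees with its recorded partial opposite-degree; the entry is $-\infty$ when no such assignment exists. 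Each bag thus contributes at most $5^{tw+1}$ entries.

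Next I would specify the transitions. Leaf nodes are trivial. At an introduce-vertex node I branch the new vertex over $d$, $a_0$, and $b_0$ (a freshly introduced vertex has seen none of its neighbours yet); edges are accounted for exactly once, e.g.\ by working with the introduce-edge variant of nice tree decompositions, where at an introduce-edge node $uv$ one increments the opposite-side counter of each endpoint lying in the set opposite to the other, discarding the state if a counter would exceed one and otherwise promoting $a_0\to a_1$ (resp.\ $b_0\to b_1$). At a forget-vertex node for $v$ I keep only those child states in which $v$ is $d$ or already matched (state $a_1$ or $b_1$), since after forgetting, $v$ can gain no further neighbours and must end with exactly one. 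The size counter is incremented precisely when a vertex is placed in $A$, taking care that a vertex appearing in several bags is counted only once.

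The crucial and most delicate step is the join node $t$ with children $c_1,c_2$ sharing the bag $\beta_t$. Here the forgotten parts of $\gamma_{c_1}$ and $\gamma_{c_2}$ are disjoint, so for each bag vertex $v$ its opposite-side degree in $G_t$ is the sum of its degrees in the two subtrees. Consequently $d$ combines only as $(d,d)$, an unmatched vertex only as $(a_0,a_0)$ or $(b_0,b_0)$, while a matched vertex must receive its unique neighbour from exactly one of the two sides, leaving the two possibilities $(a_1,a_0)$ and $(a_0,a_1)$ (and symmetrically for $b$). For each target state $s$ I would therefore set
\[
dp[t][s]=\max_{(s_1,s_2)\ \mathrm{compatible\ with}\ s}\bigl(dp[c_1][s_1]+dp[c_2][s_2]\bigr)-\bigl|\{v\in\beta_t: s(v)\in\{a_0,a_1\}\}\bigr|,
\]
where the subtracted term removes the over-count of the bag vertices of $A$, which are included in both children; this term depends only on $s$ since $s,s_1,s_2$ agree on membership. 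The main obstacle is exactly this bookkeeping at the join: attributing each matching edge to a single subtree so that no bag vertex acquires two partners, and ensuring the contribution of bag vertices to $|A|$ is not double counted. Since each matched bag vertex contributes a branching factor of $2$ and there are $5^{tw+1}$ target states, the work per join node, and hence the whole algorithm, is $O(5^{tw}\cdot 2^{tw}\cdot poly(n))=O(10^{tw}\cdot poly(n))$. The optimum is read off from $dp[r][\emptyset]$ at the empty-bag root, and a solution of size $k$ exists iff this value is at least $k$; when no decomposition is supplied, I would first compute one of width $O(tw)$ with a standard fixed-parameter algorithm, which only raises the constant in the exponent.
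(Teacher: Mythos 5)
Your proposal is correct and follows essentially the same route as the paper: a bottom-up DP over the nice tree decomposition with five states per bag vertex (outside, in $A$ or $B$ with the opposite-side neighbour either already present or still pending), a forget-time check that each vertex of $A\cup B$ ends matched exactly once, and a join step that splits the single allowed opposite-side neighbour between the two subtrees, giving $5^{tw}\cdot 2^{tw}=10^{tw}$ work per join node. The only cosmetic differences are that you track edges via introduce-edge nodes and store the maximum $|A|$ as the table value, whereas the paper accounts for edges at forget nodes and indexes boolean entries by the sizes $n_A,n_B$.
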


As expected, we will perform a bottom up dynamic programming on a rooted nice tree decomposition $(T,\beta)$ of the input graph $G$. 
To this end, we first define the structure of the sub-problem and corresponding memory table entry.

At each node $t\in V(T)$, we denote an entry of our table as $m_t(A_t,B_t, p, n_A, n_B)\in \{0,1\}$, where $A_t\subseteq \beta_t$, $B_t\subseteq \beta_t$, $A_t\cap B_t = \emptyset$, $p:(A_t\cup B_t)\to \{0,1\}$, $n_A,n_B\in \{0,1,...,n\}$. Precisely, $m_t(A_t,B_t, p, n_A, n_B)=1$ indicates if there exist two disjoint sets $A,B\subseteq V(G_t)$ such that the following holds,

\begin{itemize}
    \item $A\cap \beta_t= A_t$ and $B\cap \beta_t= B_t$,
    \item $A\setminus \beta_t$ has exactly one neighbor in $B$ and $B\setminus \beta_t$ has exactly one neighbor in $A$,
    \item every $v\in A_t$ has exactly $p(v)$ neighbors in $B\setminus \beta_t$ and similarly every $v\in B_t$ has exactly $p(v)$ neighbors in $A\setminus \beta_t$,
    \item $|A|=n_A$ and $|B|=n_B$.
\end{itemize}

Observe that $G$ has PMS of size $k$ if and only if $m_r(\emptyset,\emptyset,p,k,k)$ is $1$, where $r$ is the root node of $T$ and $p$ is empty.
Observe that the number of entries at each node are bounded by $\sum_{i\in[tw+1]}{tw+1 \choose i}\cdot 2^i\cdot 2^i \cdot poly(n)$, which is equal to $(1+4)^{tw+1}\cdot poly(n)$. Thus, there are at most $O(5^{tw}\cdot poly(n))$ entries $m_t(.)$ for each node $t$.

We now give recursive formulas for every type of node while assuming that the entries for its children have already been computed.\\

\begin{itemize}

    \item \textbf{Leaf node $t$} : $\beta_t= \emptyset$. We set $m_t(\emptyset,\emptyset,p \text{ = empty},0,0) = 1$, and set all other entries to $0$.
\\
\item\textbf{Introduce node}: vertex $v$ introduced in $\beta_t$ and $c$ is the child of $t$ in $T$.
\\
For the calculation of an entry $m_t(A_t,B_t,p,n_A,n_B)$, let $p_c: ((A_t\cup B_t)\setminus \{v\}) \to \{0,1\}$ such that $p_c(x)=p(x)$ for every vertex $x\in ((A_t\cup B_t)\setminus \{v\})$.\\

        \[m_t(A_t,B_t,p,n_A,n_B)=
        \begin{cases}
            m_c(A_t\setminus \{v\},B_t,p_c,n_A-1,n_B), & if\ v\in A_t\wedge p(v)=0, \\
            0,               & if\ v\in A_t\wedge p(v)=1.\\
            m_c(A_t,B_t\setminus \{v\},p_c,n_A,n_B-1), & if\ v\in B_t\wedge p(v)=0, \\
            0,               & if\ v\in B_t\wedge  p(v)=1.\\
            m_c(A_t,B_t,p_c,n_A,n_B), & if \ v  \not \in B_t \wedge v  \not \in A_t.
        \end{cases}\]
        \\
  
  To observe the correctness of the above formula, first consider the case when $v\in A_t$, if $p(v)=1$, then $v$ is bound to have a matching neighbor in $B\setminus \beta(t)$. By the definition of tree decomposition, $v$ has no neighbor in $V(G_t)\setminus \beta(t)$, thus we set this entry to $0$, and if $p(v)=0$ then the value $m_c(A_t\setminus \{v\},B_t,p_c,n_A-1,n_B)$ must also hold for $m_t(A_t,B_t,p,n_A,n_B)$ as the size of $A_t\setminus \{v\}$ is $n_A-1$. The case when $v\in B_t$ is similar to the above. Further, calculation of every entry as per above formula takes $O(1)$ time to compute. 
\\
\item\textbf{Forget node}: vertex $v$ forgot from $\beta_t$ and $c$ is child of $t$ in $T$.\\
For the calculation of $m_t(A_t,B_t,p,n_A,n_B)$ at forget node, we search if of two disjoint sets $A,B\subseteq V(G_t)$ exist which satisfy $m_t(A_t,B_t,p,n_A,n_B)=1$. For this purpose, we need to consider all the cases where $v$ may belong to $A$, $B$ or to none. We then choose the best possible outcome.

\begin{enumerate}[\text{Case} 1:]
\item  $v$ belongs to $A$. In this case $v$ must have exactly one neighbor in $B$, let $N_{B_t} = N(v)\cap B_t$.
    Let $p_0:(A_t\cup B_t\cup \{v\}) \to \{0,1\}$ such that $p_0(x)= p(x)$ for every $x\in (A_t\cup B_t\cup \{v\})\setminus (N_{B_t}\cup \{v\})$ and $p_0(u)= 0$ for every $u \in (N_{B_t}\cup \{v\})$. Further, let $p_1:(A_t\cup B_t\cup \{v\}) \to \{0,1\}$ such that $p_1(x)= p(x)$ for every $x\in (A_t\cup B_t\cup \{v\})\setminus \{v\}$ and $p_1(v)= 1$.
    \[ M_A=
    \begin{cases}
            m_c(A_t\cup \{v\},B_t,p_0,n_A,n_B), & \text{if } |N_{B_t}|=1,\\
            m_c(A_t\cup \{v\},B_t,p_1,n_A,n_B), & \text{if } |N_{B_t}|=0,\\
            0, & \text{if } |N_{B_t}|\geq 2.
    \end{cases}\]
    
    $M_A$ captures the best possible value when $v$ belongs to $A$, and for the correctness of the above formula, observe that if $v$ belongs to $A$, then it must have exactly one neighbor $u$ in $B$. In such a case, if $u$ belongs to $B_t$, then $v$ cannot have any other neighbor in $B\setminus B_t$ and $u$ cannot have any other neighbor in $A\setminus (A_t\cup \{v\})$, hence $p_0(v)$ and $p_0(u)$ should be $0$. Else if $u$ belongs to $B\setminus B_t$ then $v$ cannot have a neighbor in $B_t$ and $p_1(v)$ should be $1$, finally if $v$ has two or more neighbors in $B_t$ then no $A$ containing $v$ can satisfy required properties.
    \\
    \item $v$ belongs to $B$.
In this case $v$ must have exactly one neighbor in $A$, let $N_{A_t} = N(v)\cap A_t$.
Let $p_0:(A_t\cup B_t\cup \{v\}) \to \{0,1\}$ such that $p_0(x)= p(x)$ for every $x\in (A_t\cup B_t\cup \{v\})\setminus (N_{A_t}\cup \{v\})$ and $p_0(u)= 0$ for every $u \in (N_{A_t}\cup \{v\})$. Further, let $p_1:(A_t\cup B_t\cup \{v\}) \to \{0,1\}$ such that $p_1(x)= p(x)$ for every $x\in (A_t\cup B_t\cup \{v\})\setminus \{v\}$ and $p_1(v)= 1$.
\[M_B=
    \begin{cases}
            m_c(A_t,B_t\cup \{v\},p_0,n_A,n_B), & \text{if } |N_{A_t}|=1,\\
            m_c(A_t,B_t\cup \{v\},p_1,n_A,n_B), & \text{if } |N_{A_t}|=0,\\
            0, & \text{if } |N_{A_t}|\geq 2,\\
    \end{cases}
    \]
    
    $M_B$ captures the best possible value when $v$ belongs to $B$ and correctness follows from similar arguments to the previous case.
    \\
    \item  $v$ is deleted.
    \begin{align*}
        M_{\emptyset}=m_c(A_t,B_t,p,n_A,n_B).
    \end{align*}
\end{enumerate}

The recursive formula for forget node is as follows.
\begin{align*}
    m_t(A_t,B_t,p,n_A,n_B)= max\{M_A,M_B, M_{\emptyset}\}.
\end{align*}

We observe that the calculation of each entry as per the above formula takes time $poly(n)$.
\\
\item\textbf{Join node}: $c_1$ and $c_2$ are children of $t$.

Let a pair $(p_{c_1},p_{c_2})$ be such that $p(v)= p_{c_1}(v)+p_{c_2}(v)$ for every $v\in (A_t\cup B_t)$. Let $\cal F$ be the family of all possible pairs for a given $p$, observe that the size of $\cal F$ is bounded by $2^{tw}$.

\begin{multline*}
    m_t(A_t,B_t,p,n_A,n_B)= 
    \max \{m_{c_1}(A_t,B_t,p_{c_1},|A_t|+i,|B_t|+j)\cdot m_{c_2}(A_t,B_t,p_{c_2},n_A-i,n_B-j)\ |\ \\
    (p_{c_1},p_{c_2})\in \calf \land 0\leq i \leq (n_A-|A_t|) \land  0\leq j \leq (n_B-|B_t|) \} .
\end{multline*}

For the correctness, observe that $p(v)= p_{c_1}(v)+p_{c_2}(v)$ enforces every vertex $v$ in $A_t$ (resp. $B_t$) has at most $p(v)$ neighbor in $B\setminus \beta(t)$ (resp $B\setminus \beta(t)$), only one of the $G_{c_1}$ or $G_{c_2}$ can have that neighbor (if $p(v)=1$ otherwise none). Similarly, $i$ and $j$ denote the distribution of $A\setminus A_t$ and $B\setminus B_t$ vertices in the sub graphs $G_{c_1}$ and $G_{c_2}$ to ensure that the number of distinct vertices in $A$ and $B$ remain $n_A$ and $n_B$ respectively, and we don't double count any  of the vertex.

Calculation of a single entry at a join node takes $2^{tw}\cdot poly(n)$ time. There can be at most  $5^{tw}\cdot poly(n)$ entries at each node, thus at a join node calculation of all the entries takes time $10^{tw}\cdot poly(n)$, and  this is the bottleneck for the running time of the algorithm.

\end{itemize}
\section{Exact Algorithm}\label{sec_exact}

In this section, we prove the following result.
\begin{theorem}\label{thm-exact-PMS}
There is an algorithm that accepts a graph $G$ on $n$ vertices and finds a largest pair of perfectly matched sets of $G$ in time $O^*(1.966^n)$ time.
\end{theorem}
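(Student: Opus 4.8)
The plan is to build on the equivalence recorded in the introduction: a pair of perfectly matched sets $(A,B)$ with $A\cup B=S$ is precisely a perfect matching cut of the induced subgraph $G[S]$. Hence the optimization version of PMS is exactly the problem of finding a maximum-size vertex set $S$ such that $G[S]$ admits a perfect matching cut, and I would phrase a solution as a partition of $V$ into three parts: $A$, $B$, and a discarded set $D=V\setminus S$, subject to every vertex of $A$ (resp. $B$) having exactly one neighbor in $B$ (resp. $A$), with edges to $D$ simply ignored.

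First I would record a clean baseline of $O^*(2^n)$, which also isolates where the speed-up must come from. Enumerate the set $A\subseteq V$; I claim a partner set $B\subseteq V\setminus A$ with $|B|=|A|$ making $(A,B)$ perfectly matched exists if and only if every $a\in A$ has a \emph{private selector}, i.e. a vertex $v\in V\setminus A$ with $N(v)\cap A=\{a\}$. Indeed, picking one such $v$ per $a$ yields pairwise distinct vertices (a vertex with two $A$-neighbors cannot be a selector), and the resulting $B$ satisfies both exactly-one-neighbor conditions automatically; conversely, the partner of any $a$ in a valid $B$ is such a selector. This test runs in polynomial time for each $A$, giving $O^*(2^n)$ overall.

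To break the $2^n$ barrier I would invoke the perfect matching cut algorithm of Le and Telle \cite{LT2020}, which runs in time $O^*(1.2721^n)$. The naive black-box combination---enumerate the discard set $D$ and test whether $G-D$ has a perfect matching cut---is \emph{no} help, since it costs $\sum_{d}\binom{n}{d}1.2721^{n-d}=2.2721^n$, which is worse than brute force; the obstruction is that discarding a vertex deletes its incident edges, an operation a single perfect matching cut instance cannot express, so one cannot reduce PMS to one PMC instance without unbounded blow-up. Instead I would open up the branching underlying \cite{LT2020} and run it with an added third \emph{discard} state per vertex, re-deriving the reduction and branching rules so that a vertex placed in $D$ correctly drops the ``exactly one cross-neighbor'' constraints it would otherwise impose, and re-solving the resulting worst-case recurrences by a measure-and-conquer analysis; optimizing this trade-off between the two active states and the discard state is what I expect to produce the base $1.966$.

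The hard part will be this last step: establishing branching rules for the three-state problem whose measure drop is large enough in every case---in particular for vertices adjacent to already-committed $A$- or $B$-vertices, where the remaining cross-neighbor budget interacts delicately with the option to discard---and then verifying that all cases are covered so that the recurrences provably solve to $1.966^n$ rather than degrading toward the trivial $2^n$ or the naive $2.2721^n$. Throughout, carrying the sizes $|A|=|B|$ through the search lets the same procedure return a \emph{largest} pair, so it solves the optimization version as claimed.
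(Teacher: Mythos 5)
Your proposal correctly assembles the two ingredients the paper uses --- the direct enumeration of the side $A$ with a private-selector check, and the $O^*(1.2721^n)$ PERFECT MATCHING CUT algorithm of Le and Telle --- but the step that is supposed to deliver the bound $1.966^n$ is exactly the step you leave undone. ``Open up the branching of \cite{LT2020}, add a discard state, and re-solve the recurrences by measure and conquer'' is not an argument: you give no branching rules, no measure, no recurrences, and no reason why the optimum of that (unspecified) analysis would land at base $1.966$ rather than anywhere else. As written, the proposal establishes your $O^*(2^n)$ baseline and then a conjecture.

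Moreover, you dismissed the black-box route one step too early, and that route is precisely the paper's proof. You observe that enumerating the discard set $D$ and testing $G-D$ for a perfect matching cut costs $\sum_{d}\binom{n}{d}\,1.2721^{\,n-d}=2.2721^n$ --- true, but only if you sum over \emph{all} $d$. The paper instead splits by the solution size $k$: for $k\le \varepsilon n$ it enumerates $A$ directly (cost $O^*(\binom{n}{k})$, i.e.\ your baseline restricted to small $k$), and for $k>\varepsilon n$ it enumerates the $2k$-element set $S=A\cup B$ and runs the PMC algorithm on $G[S]$, costing $O^*(\binom{n}{2k}\,1.2721^{2k})$ per value of $k$; since $2k$ is large in this regime, $\binom{n}{2k}$ is small. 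Bounding the two regimes by $2^{nH(\varepsilon)}$ and $2^{nH(1-2\varepsilon)}\,1.2721^{2\varepsilon n}$ respectively and equating them yields $\varepsilon\approx 0.4072$ and base $2^{H(\varepsilon)}\approx 1.9657$. No modification of the Le--Telle algorithm is needed, and the constant $1.966$ is an artifact of this binomial balancing rather than of any three-state measure-and-conquer analysis --- a further sign that your route, even if it could be completed, is not the intended proof with details filled in.
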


We shall prove Theorem \ref{thm-exact-PMS} by making use of the algorithms in the following two lemmas.
\begin{lemma}\label{PMS-size-lemma-one}
There's an $O^*\left(\dbinom{n}{k}\right)$ algorithm to test if $G$ has a pair of perfectly matched sets of size $k$, and find one such pair if it exists.
\end{lemma}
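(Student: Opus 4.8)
The plan is to fix one side of the solution by brute force and then complete it greedily. Observe first that if $(A,B)$ is a pair of perfectly matched sets of size $k$, then the bipartite graph formed by the edges $E(A,B)$ has every vertex of degree exactly one, i.e.\ it is a perfect matching between $A$ and $B$; in particular $|A|=|B|=k$ and $A,B$ are disjoint. The key idea is that once $A$ is fixed, testing whether a suitable $B$ exists is a polynomial-time task, so it suffices to iterate over all $\binom{n}{k}$ candidate sets $A$ (the roles of $A$ and $B$ being symmetric, enumerating one side is enough).

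First I would, for a fixed $k$-subset $A \subseteq V$, identify the vertices eligible to lie in $B$. Since every $b \in B$ must have exactly one neighbour in $A$, the only candidates are the vertices $w \in V \setminus A$ with $|N(w)\cap A| = 1$; for such a $w$ write $\mu(w)$ for its unique neighbour in $A$. Grouping the candidates by $\mu$ partitions them into pairwise disjoint classes $C_a = \{\, w \in V\setminus A : N(w)\cap A = \{a\}\,\}$, one per $a \in A$. I claim that a valid $B$ exists if and only if every class $C_a$ is non-empty, in which case choosing one vertex from each $C_a$ yields a solution.

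The verification of this claim is where the small amount of care is needed. For the backward direction, let $B$ pick exactly one $w_a \in C_a$ for each $a$. The $w_a$ are distinct because the classes are disjoint, each $w_a$ has $a$ as its unique neighbour in $A$ by construction, and each $a$ is adjacent within $B$ only to $w_a$ (if $a$ were adjacent to $w_{a'}$ then $a \in N(w_{a'})\cap A = \{a'\}$, forcing $a=a'$); hence both degree conditions hold and $(A,B)$ is a pair of perfectly matched sets. For the forward direction, given any valid $B$, every $b\in B$ lies in some $C_{\mu(b)}$, and the requirement that each $a \in A$ has exactly one neighbour in $B$ says precisely that exactly one element of $B$ maps to $a$ under $\mu$; thus $B$ selects one vertex from each class and all classes are non-empty. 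This reduces the per-$A$ test to checking non-emptiness of $k$ classes, clearly doable in polynomial time.

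Putting this together, the algorithm enumerates all $\binom{n}{k}$ subsets $A$, runs the polynomial-time test on each, and outputs the first $A$ together with its completion $B$ whenever one is found; otherwise it reports that no solution of size $k$ exists. The total running time is $O^*\!\left(\binom{n}{k}\right)$. The only conceptual obstacle, and the reason the test is so cheap, is the observation that the candidate classes $C_a$ are pairwise disjoint, so no bipartite matching computation is required: independent non-emptiness of the classes already guarantees a system of distinct representatives.
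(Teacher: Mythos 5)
Your proof is correct and follows essentially the same approach as the paper: enumerate all $\binom{n}{k}$ candidate sets $A$, restrict attention to vertices outside $A$ with exactly one neighbour in $A$, and complete greedily by picking one such vertex per element of $A$. Your classes $C_a$ are just an explicit partition of the paper's set $C(A)$, and your disjointness observation is the same reason the paper's greedy choice of $f(v)$ works without any matching computation.
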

\begin{proof}
The algorithm is the following: for each subset $A$ of size $k$, first find $C(A)$ which we define to be the set of vertices in $V \setminus A$ with exactly one neighbor in $A$.
If every $v \in A$ has at least one neighbor in $C(A)$, then mark one such neighbor for each $v \in A$ as $f(v)$ and let $B=\{f(v)|v \in A\}$. Then $(A,B)$ forms a pair of perfectly matched sets. If for no $A$ can we find a corresponding $B$, then $G$ has no pair of perfectly matched sets of size $k$.
\end{proof}

\begin{lemma}\label{PMS-size-lemma-two}
If there's an $O^*(\alpha^n)$ algorithm to solve PERFECT MATCHING CUT on graphs of size $n$, then there's an $O^*\left(\dbinom{n}{2k}{\alpha}^{2k}\right)$ algorithm to test if $G$ has a pair of perfectly matched sets of size $k$, and find one if it exists. In particular, by the result of \cite{LT2020}, there's an $O^*\left(\dbinom{n}{2k}{1.2721}^{2k}\right)$ algorithm to do so.
\end{lemma}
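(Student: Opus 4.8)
The plan is to exploit the equivalence noted earlier in the excerpt: a pair of perfectly matched sets $(A,B)$ of size $k$ is exactly a vertex set $S=A\cup B$ of size $2k$ whose induced subgraph $G[S]$ admits a perfect matching cut with parts $A$ and $B$. So I would reduce the size-$k$ PMS test to a batch of PERFECT MATCHING CUT tests, one on each induced subgraph of size $2k$, and invoke the assumed $O^*(\alpha^n)$ algorithm as a black box.

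First I would make the correspondence precise in both directions. If $(A,B)$ is a pair of perfectly matched sets of size $k$ and $S=A\cup B$, then $|S|=2k$ and $(A,B)$ is a perfect matching cut of $G[S]$: the parts partition $S$, and every vertex of $A$ (resp.\ $B$) has exactly one neighbor in $B$ (resp.\ $A$). Conversely, if $G[S]$ with $|S|=2k$ has a perfect matching cut $(A,B)$, then the cut edges form a perfect matching between $A$ and $B$, which forces $|A|=|B|$; together with $|A|+|B|=2k$ this yields $|A|=|B|=k$, so $(A,B)$ is a pair of perfectly matched sets of size $k$ in $G$.

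Given the equivalence, the algorithm enumerates all $\dbinom{n}{2k}$ subsets $S\subseteq V$ with $|S|=2k$, and for each invokes the assumed PERFECT MATCHING CUT algorithm on $G[S]$, which is a graph on $2k$ vertices and hence costs $O^*(\alpha^{2k})$ per call. The algorithm outputs YES, returning the witnessing partition as the pair $(A,B)$, the moment some $G[S]$ is found to have a perfect matching cut, and outputs NO if every $S$ fails. Summing over the $\dbinom{n}{2k}$ candidate sets gives total running time $O^*\!\left(\dbinom{n}{2k}\alpha^{2k}\right)$, and substituting the $O^*(1.2721^n)$ bound of \cite{LT2020} yields the claimed $O^*\!\left(\dbinom{n}{2k}1.2721^{2k}\right)$ bound.

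There is essentially no algorithmic obstacle here, since the PERFECT MATCHING CUT solver is used as a black box; the only point requiring care is the correctness argument, specifically the observation that the definition of a perfect matching cut demands both that the two parts partition the vertex set of $G[S]$ and that the cut edges form a perfect matching. These two conditions are exactly what guarantee $|A|=|B|=k$ rather than merely $|A|+|B|=2k$, so I would state that step explicitly rather than leave it implicit.
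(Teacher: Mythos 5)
Your proposal is correct and matches the paper's proof, which is exactly the same brute-force enumeration of all $\binom{n}{2k}$ subsets $S$ with a black-box PERFECT MATCHING CUT call on each $G[S]$; the paper states this in one line and relies on the equivalence it already noted in the related-work section. Your added care in verifying that a perfect matching cut of $G[S]$ forces $|A|=|B|=k$ is a reasonable elaboration but not a different route.
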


\begin{proof}
The algorithm consists of picking every subset $S$ of size $2k$ and checking if $G[S]$ has a perfect matching cut.
\end{proof}

\begin{proof}{\bf of Theorem \ref{thm-exact-PMS}}
The main idea is to play off the bounds of Lemma \ref{PMS-size-lemma-one} and Lemma \ref{PMS-size-lemma-two}. When $k$ is close to $n/2$, the algorithm of Lemma \ref{PMS-size-lemma-two} is faster while for slightly smaller $k$, the algorithm of Lemma \ref{PMS-size-lemma-one} is faster.

Let $\varepsilon <1/2$, which we will fix shortly.
For $1 \leq k \leq \varepsilon n$, we run the algorithm of Lemma \ref{PMS-size-lemma-one} and for $\varepsilon n <k \leq n$, we run the algorithm of Lemma \ref{PMS-size-lemma-two}.
Thus we can find a largest pair of perfectly matched sets in time $O^*(T(n))$, for the following value of $T(n)$.
\begin{align*}
T(n)&=\sum_{k=0}^{\varepsilon n} \dbinom{n}{k} + \sum_{k=\varepsilon n}^{n/2} \dbinom{n}{2k}{\alpha}^{2k} \\
&\leq 2^{nH(\varepsilon)} + n\dbinom{n}{2 \varepsilon n}{\alpha}^{2\varepsilon n} \\
& \leq 2^{nH(\varepsilon)}+2^{nH(1-2\varepsilon)}{\alpha}^{2\varepsilon n}.
\end{align*}

The second line follows from the fact that the terms of the second sum are decreasing with $k$ for 
$\varepsilon \geq \dfrac{1}{2+\alpha}$, and we also use the upper bound of $2^{H(tn)}$ on $\sum_{k=1}^{tn}\dbinom{n}{k}$ for $t \leq 1/2$.

The optimal value of $\varepsilon$ that minimizes the RHS is found by equating the two terms; i.e. by solving the equation $H(\varepsilon)=H(1-2\varepsilon)+ (2\log_2 \alpha) \varepsilon$.
For $\alpha=1.2721$, this yields $\varepsilon \sim 0.4072$ and $2^{H(\varepsilon)} \sim 1.96565$.
Thus, we obtain $T(n)=O(1.966^n)$.
\end{proof}

\section{PMS for Planar Graphs }\label{section-planar-hardness}

In this section, we prove the following theorem.
\begin{theorem}\label{planarNP_hardness}
PMS is NP-hard for planar graphs.
\end{theorem}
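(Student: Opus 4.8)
The plan is to prove NP-hardness of PMS on planar graphs by reducing from a known NP-hard problem on planar graphs. The most natural candidate is PLANAR PERFECT MATCHING CUT or a planar variant of a constraint-satisfaction/covering problem, but since the excerpt already observes (in the Related Work) that PMS on $(G,k)$ is equivalent to deciding whether $G$ has an induced subgraph of size $2k$ with a perfect matching cut, and that PERFECT MATCHING CUT is NP-complete even on bipartite graphs of maximum degree $3$ with large girth (by Le and Telle \cite{LT2020}), the cleanest route is a reduction \emph{from} a planar NP-hard source problem into the PMS structure. I would look for an NP-hard planar problem with a local/gadget-friendly structure --- for instance PLANAR 3-SAT, PLANAR VERTEX COVER, or a planar version of a 3-regular matching-type problem (recall that the original TR-matching of \cite{SV82} is NP-complete on 3-regular graphs). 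Given that the source of hardness in \cite{SV82} is already a regular-graph construction, the most economical approach is to revisit that construction and \emph{planarize} it, or to start from a problem that is already planar.

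Concretely, first I would fix the source problem to be PLANAR 3-SAT (or a restricted 3-regular planar problem), whose variable--clause incidence graph admits a planar rectilinear layout (Knuth--Raghunathan style). Second, I would design two gadget families: a \emph{variable gadget} that can participate in a pair $(A,B)$ in exactly two ``states'' (corresponding to the two truth assignments), enforced by the exactly-one-neighbor matching constraint of PMS, and a \emph{clause gadget} that can be consistently matched into $(A,B)$ if and only if at least one literal gadget feeding it is in the satisfying state. The degree-one matching condition of PMS is the lever here: each gadget vertex placed in $A$ must have precisely one neighbor in $B$, so by controlling how many potential partners a vertex sees I can force binary choices and propagate them along wires. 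Third, I would lay all gadgets out in the plane following the planar embedding of the 3-SAT incidence graph, routing the connecting ``wires'' (paths whose PMS-state alternates) along the planar edges so that no two wires cross; crossing avoidance is automatic because the incidence graph is planar and each gadget occupies a bounded planar region.

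The correctness argument then splits into the two standard directions. For the forward direction, given a satisfying assignment, I would exhibit an explicit pair $(A,B)$ of the target size by selecting, in each variable gadget, the vertices dictated by the truth value, and in each clause gadget the vertices that can be matched because a satisfied literal supplies a partner; a direct local check confirms every chosen vertex has exactly one matching neighbor. For the reverse direction, given a pair $(A,B)$ of the prescribed size $k$, I would argue that the size budget forces each gadget to be ``saturated,'' that the exactly-one-neighbor condition forces each variable gadget into one of its two legal states (ruling out inconsistent or partial configurations), and that no clause gadget can be saturated unless one of its incident literal wires carries the satisfying state; reading off the variable states then yields a satisfying assignment.

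The main obstacle I anticipate is designing the gadgets so that (i) they are planar and combine into a globally planar graph, and simultaneously (ii) the perfectly-matched-set constraint --- which is quite rigid, since it demands \emph{exactly} one cross-neighbor for every selected vertex, not ``at most one'' --- transmits a clean Boolean signal without leaking spurious matchings or allowing a vertex to be matched ``the wrong way.'' Balancing rigidity against planarity is delicate: adding edges to force the exactly-one condition tends to raise degrees and create crossings, while keeping the graph sparse (as in the degree-$3$ hardness of \cite{LT2020}) may make it hard to pin down the matching partner uniquely. I would therefore expect the bulk of the work to lie in verifying the reverse direction of the gadget analysis, i.e.\ proving that every legal PMS of the target size must respect the intended variable/clause semantics, and in certifying planarity of the assembled instance.
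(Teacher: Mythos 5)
Your write-up is a plan rather than a proof: the entire technical burden of the theorem --- the variable gadget, the clause gadget, the wires, the planarity of the assembled instance, and both directions of correctness --- is deferred (``I would design two gadget families\dots'', ``The main obstacle I anticipate is designing the gadgets\dots''). The exactly-one-cross-neighbor constraint of PMS is, as you yourself note, quite rigid, and it is precisely the construction of gadgets that transmit a Boolean signal under that constraint without leaking spurious matchings that constitutes the proof. Since none of these gadgets are exhibited and no size budget $k$ is specified, there is nothing here that can be checked, so this must be counted as a genuine gap rather than an alternative argument.

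It is also worth noting that the paper avoids the 3-SAT machinery entirely and the reduction is far lighter than what you are proposing. It reduces from INDEPENDENT SET on planar graphs: take a planar $G$, keep a copy $v_i'$ of each vertex with the original edges, attach a pendant vertex $v_i''$ to each $v_i'$, and for each edge $v_iv_j$ attach a path $e_{ij}^1e_{ij}^2e_{ij}^3e_{ij}^4$ with $v_i'$ adjacent to $e_{ij}^1,e_{ij}^4$ and $v_j'$ adjacent to $e_{ij}^2,e_{ij}^3$; all gadgets sit inside faces of the embedding, so planarity is immediate. Then $G$ has an independent set of size $k$ iff $G'$ has a pair of perfectly matched sets of size $k+2|E(G)|$: the pendant vertices let independent vertices be matched privately, the $P_4$ edge gadgets absorb exactly two matched pairs per edge regardless of the assignment, and a sequence of local exchange arguments shows any optimal solution can be normalized so that the $V'$-vertices it uses form an independent set. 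If you want a complete proof along your lines you would need to supply and verify the gadgets; alternatively, adopting a source problem whose solution structure already matches the ``select an induced substructure'' flavor of PMS (as the paper does with INDEPENDENT SET) removes most of the difficulty you anticipate.
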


It is known that INDEPENDENT SET is NP-hard on planar graphs. We give a polynomial time reduction from INDEPENDENT SET to PMS.

Let $(G,k)$ be an instance of INDEPENDENT SET, where $G$ is a planar graph and $k\in \mathbb{N}$, and we need to decide if $G$ contains an independent set of size $k$. Let $V=\{v_1,v_2,...,v_n\}$ be vertices of $G$, we construct a graph $G'$ as follows.
\begin{figure}[H]
    \centering
    
    \begin{tikzpicture}[]
\node [shape=circle, draw, fill=black,scale =0.6, label={120:\scriptsize{$v_i$}}] (v) at (0,1) {};
\node [shape=circle, draw, fill=black,scale =0.6] (u) at (-1,-0.5) {};
\node [shape=circle, draw, fill=black,scale =0.6,label={180:\scriptsize{$v_j$}}] (x) at (1,-0.5) {};
\node [shape=circle, draw, fill=black,scale =0.6] (y) at (0,2.5) {};
\draw (v)--(u) (v)--(x) (v)--(y);

\node [shape=circle, fill=blue!50,scale =0.3, label={270:\scriptsize{\textcolor{blue!65}{$v''_i$}}}] (vi') at (5,0.5) {};
\node [shape=circle, draw, fill=black,scale =0.6, label={120:\scriptsize{$v'_i$}}] (vi) at (5,1) {};
\node [shape=circle, draw, fill=black,scale =0.6] (u) at (4,-0.5) {};
\node [shape=circle, draw, fill=black,scale =0.6,label={180:\scriptsize{$v'_j$}}] (vj) at (6,-0.5) {};
\node [shape=circle, fill=blue!50,scale =0.3, label={270:\scriptsize{\textcolor{blue!65}{$v''_j$}}}] (vj') at (6,-1) {};

\node [shape=circle, draw, fill=black,scale =0.6] (y) at (5,2.5) {};
\draw (vi)--(u) (vi)--(vj) (vi)--(y);

\node [shape=circle, fill=blue!50,scale =0.4, label={180:\scriptsize{\textcolor{blue!65}{$e_{ij}^1$}}}] (e1) at (6.5,0.5) {};

\node [shape=circle, fill=blue!50,scale =0.4, label={120:\scriptsize{\textcolor{blue!65}{$e_{ij}^2$}}}] (e2) at (7.25,0.5) {};
\node [shape=circle, fill=blue!50,scale =0.4, label={120:\scriptsize{\textcolor{blue!65}{$e_{ij}^3$}}}] (e3) at (8,0.5) {};
\node [shape=circle, fill=blue!50,scale =0.4, label={20:\scriptsize{\textcolor{blue!65}{$e_{ij}^4$}}}] (e4) at (8.75,0.5) {};

\draw[thin,blue!40]  (e1) -- (e2)--(e3)--(e4);

\draw [->,>=stealth] (2,0.5) -- (3,0.5);

\draw[thin,blue!40,bend left] (vi) to ({e1});
\draw[thin,blue!40,bend left] (vi) to ({e4});
\draw[thin,blue!40,bend right] (vj) to ({e2});
\draw[thin,blue!40,bend right] (vj) to ({e3});

\draw[thin,blue!40] (vi) to ({vi'});
\draw[thin,blue!40] (vj) to (vj');

\end{tikzpicture}

    \caption{Reduction from INDEPENDENT SET to PMS where input and constructed graphs are planar.   }
    \label{fig:planar_reduction1}
\end{figure}
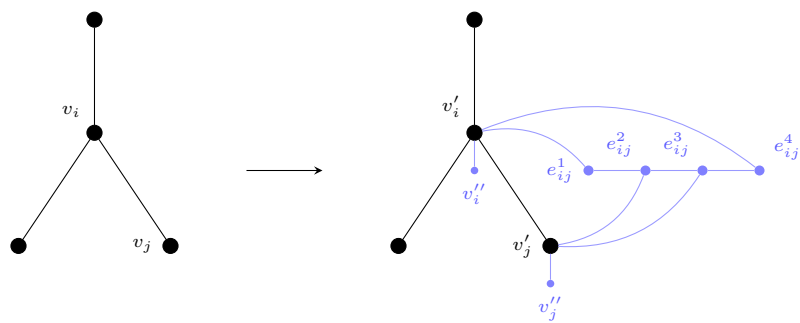

\begin{itemize}
    \item Create a vertex set $V'=\{v'_i\mid v_i\in V\}$, and connect every $v'_i$ to $v'_j$ if $v_iv_j\in E(G)$.
    \item Create a vertex set $V''=\{v''_i\mid v_i\in V\}$, and connect $v'_i$ to $v''_i$ for every $i\in[n]$.
     \item Create a vertex set $X_{e}= \{e_{ij}^1,e_{ij}^2,e_{ij}^3,e_{ij}^4\mid v_iv_j\in E(G) \land (i<j)\}$. Further, for every edge $v_iv_j\in E(G)$ where $i<j$, connect $e_{ij}^1$ to $e_{ij}^2$, $e_{ij}^2$ to $e_{ij}^3$, and $e_{ij}^3$ to $e_{ij}^4$, i.e. create a path on 4 vertices. Further, connect $v'_{i}$ to $e_{ij}^1$ and $e_{ij}^4$, and connect $v'_j$ to $e_{ij}^2$ and $e_{ij}^3$.
\end{itemize}

Observe that none of the introduced edges in $G'$ is a crossing edge if $G$ was embedded in a plane without crossing edges, and hence $G'$ is a planar graph. Further, the above construction takes time polynomial in size of $G$. Following proposition proves the correctness of the reduction.

\begin{proposition}
 $G$ has an independent set of size $k$ if and only if $G'$ has a pair of perfectly matched sets of size $k+2\cdot |E(G)|$.
\end{proposition}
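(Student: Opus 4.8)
The plan is to establish both directions of the equivalence by exploiting the gadget structure, mirroring the argument already used in Proposition \ref{popo:kernel-correctness} but now without the clique constraints, since here $V'$ is an independent-set-respecting copy rather than a clique. The key structural observation is that for each edge $v_iv_j \in E(G)$ the four vertices $e_{ij}^1,e_{ij}^2,e_{ij}^3,e_{ij}^4$ form a $P_4$ attached to $v'_i$ (at positions $1,4$) and $v'_j$ (at positions $2,3$), and this path gadget can always contribute exactly one matched pair to any solution, namely $\{e_{ij}^1,e_{ij}^4\}$ on one side matched to $\{e_{ij}^2,e_{ij}^3\}$ on the other. Thus each of the $|E(G)|$ gadgets is designed to force a contribution of $2$ to the solution size, accounting for the $2\cdot|E(G)|$ term, while the independent set itself accounts for the remaining $k$ via the pendant vertices $v''_i$.

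For the forward direction I would take an independent set $S$ of size $k$ in $G$ and build a pair $(A,B)$ as follows: for each $v_i \in S$ put $v'_i$ into $A$ and its private pendant $v''_i$ into $B$; the pendant $v''_i$ has $v'_i$ as its only neighbor, so it is perfectly matched. The crucial point is that since $S$ is independent, no two selected vertices $v'_i, v'_j$ are adjacent, so placing them all in $A$ creates no conflicting edges within $A$. Then, for every edge $v_iv_j$ with $i<j$, I assign the path gadget so as to contribute one matched pair while respecting whether $v_i$ or $v_j$ (or neither) lies in $S$: concretely, orient the gadget's $P_4$ so that $e_{ij}^1,e_{ij}^4$ and $e_{ij}^2,e_{ij}^3$ split across $A$ and $B$, choosing the orientation (following the same case analysis as M5 in the kernel proof) so that the endpoints attached to a selected $v'$ do not receive an extra neighbor in the opposite side. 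A direct check then confirms $(A,B)$ is a pair of perfectly matched sets of size $k + 2\cdot|E(G)|$.

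For the reverse direction I would start from a pair $(A,B)$ of size $k + 2\cdot|E(G)|$ and argue, via normalization steps analogous to M4 and M5, that we may assume each path gadget contributes exactly $2$ to the solution (it can contribute at most $2$, since a $P_4$ matched against its attachment vertices admits at most two matched edges, and the counting forces equality), and that each selected $V'$-vertex $v'_i$ is matched to its own pendant $v''_i$ rather than into a gadget. After this normalization, the set $S^* = \{v_i \mid v'_i \in A \text{ and } v'_i \text{ is matched to } v''_i\}$ has size exactly $k$, and I would show $S^*$ is independent: if $v_i, v_j \in S^*$ were adjacent in $G$, then $v'_i, v'_j$ are adjacent in $G'$ and both lie in $A$, which is fine, but the associated gadget $e_{ij}^*$ forces one of them to acquire a second neighbor in the opposite part (since $v'_i$ is attached to $e_{ij}^1,e_{ij}^4$ and those must be placed to maintain the gadget's contribution), contradicting the perfectly-matched condition exactly as in the clique reduction.

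The main obstacle I anticipate is the reverse-direction normalization: arguing cleanly that the $2\cdot|E(G)|$ edges are \emph{forced} to come from the gadgets and the remaining $k$ from genuine pendant matches, rather than from some mixed assignment where gadget vertices substitute for pendants or vice versa. Handling this requires a careful exchange argument showing that any deviation can be rewritten without decreasing $|E(A,B)|$, so that the extremal solution has the clean structure described; pinning down that the independence of $S^*$ follows precisely when both endpoints of an edge are selected — and that this is impossible without violating the single-neighbor condition through the $P_4$ gadget — is the delicate combinatorial heart of the proof.
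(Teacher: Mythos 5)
Your proposal follows essentially the same route as the paper: the forward direction uses the identical explicit construction (selected $v'_i$ with pendant $v''_i$, each $P_4$ gadget oriented according to which endpoint is selected), and the reverse direction uses the same normalization-then-count strategy (the paper's modifications M1--M3 play the role of your M4/M5-style exchanges, after which gadgets account for at most $2\cdot|E(G)|$ matched pairs and the remaining $k$ must be $v'_i$--$v''_i$ pairs with no two adjacent $v'_i$ surviving). One small correction: after normalization a matched $v'_i$ may lie in either $A$ or $B$, so the extracted independent set should be $\{v_i \mid v'_i \in A \cup B\}$ rather than only those $v'_i \in A$, as otherwise you cannot guarantee size $k$; with that adjustment your argument coincides with the paper's.
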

\begin{proof}
For the forward direction, let $I$ be an independent set of size $k$ in $G$. We construct $(A,B)$ as follows. 
\begin{enumerate}
    \item For every $v_i\in I$, put $v'_i$ in $A$ and $v''_i$ in $B$.
    \item For every edge $v_iv_j \in E(G)$ such that $i<j$, if $v_i$ is in $I$ (then certainly $v_j$ is not in $I$), then add $e^1_{ij}, e^4_{ij}$ to $A$ and $e^2_{ij}, e^3_{ij}$ to $B$, else if $v_i$ is not in $I$, then add $e^1_{ij}, e^4_{ij}$ to $B$ and $e^2_{ij}, e^3_{ij}$ to $A$.
\end{enumerate}
 We claim that $(A,B)$ is a pair of perfectly matched sets and $|E(A,B)|= k+2\cdot |E(G)|$. For the proof of the claim, consider the following arguments. 
 \begin{itemize}
     \item Consider the vertices of $V'$, in the construction, they can only be added to $A$. For a vertex $v'_i$ in $V'$, if $v_iv_j$ is an edge in $E(G)$ where $i<j$, then $v'_i$ has neighbors $e^1_{ij}, e^4_{ij}$, and if $v'_i$ is added to $A$, then $e^1_{ij}, e^4_{ij}$ are also added to $A$ (step 2). Further, if $v_jv_i$ is an edge in $E(G)$ where $j<i$, then $v'_i$ has neighbors $e^2_{ji}, e^3_{ji}$, and if $v'_i$ is added to $A$, then its neighbor $v'_j$ is not added to $A$ (step 1), and in step 2 we added $e^2_{ji}, e^3_{ji}$ to $A$.
     Further, if a vertex $v'_i$ is added to $A$, then we also added its neighbor $v''_i$ to $B$, and every vertex $v'_i$ in $V'$ has only one neighbor $v''_i$ in $V''$. Thus, if a vertex of $V'$ is added to $A$, then it has exactly one neighbor in $B$.
     \item Consider the vertices of $V''$, in the construction, they can only be added to $B$. We added $v''_i$ to $B$ whenever we added its neighbor $v'_i$ to $A$. All the vertices of $V''$ has degree $1$. Thus, if a vertex of $V''$ is added to $B$, it has exactly one neighbor in $A$.
     \item Consider the vertices of $X_e$, no vertex in $X_e$ is connected to a vertex in $V''$. For every $v_iv_j$  in $E(G)$ where $i<j$, vertices $e^1_{ij}$ and $e^4_{ij}$ are connected to only $v'_i$ in $V'$ and $e^2_{ij}$ and $e^3_{ij}$ are connected to only $v'_j$ in set $V'$. If $v'_i$ was added to $A$ ($v'_j$ was not added to $A$),  then we added both $e^1_{ij}$ and $e^4_{ij}$ to $A$ (step 2), and $e^2_{ij}$ and $e^3_{ij}$ added to $B$. Else, if $v'_i$ was not added to $A$ ($v'_j$ may be added to $A$), then we added $e^1_{ij}$ and $e^4_{ij}$ to $B$, and added $e^2_{ij}$ and $e^3_{ij}$ to $A$. This way no vertex of $X_e$ has a neighbor from $V'$ in its opposite set in $(A,B)$. Further, for every edge $v_iv_j\in E(G)$, when $e^1_{ij}$,$e^4_{ij}$ are added to $A$ (resp. $B$), then $e^2_{ij}$,$ e^3_{ij}$ are added to $B$ (resp. $A$). Thus, $e^1_{ij}$ and $e^2_{ij}$ are the only neighbors of each other in opposite sets, similarly $e^3_{ij}$ and $e^4_{ij}$ are the only neighbors of each other in opposite sets. Thus, every vertex in $X_e$ has exactly one neighbor in its opposite set in $(A,B)$.
 \end{itemize}
 
 The above conclude that every vertex in $A$ (resp. $B$) has exactly one neighbor in $B$ (resp. $A$), and $(A,B)$ is a pair of perfectly matched sets. To bound $|E(A,B)|$, it will suffice to bound $|A|$ as $(A,B)$ is a pair of perfectly matched sets. Observe that we are adding $|I|=k$ vertices from $V'$ in $A$ (step 1), and for every edge $v_iv_j$ in $E(G)$, we are adding two vertices from $X_e$ in $A$ (step 2). Thus, $|A|= k+2\cdot |E(G)|$.

For the other direction, let $(A,B)$ be a pair of perfectly matched sets in $G'$ such that $|E(A,B)|= k+2\cdot |E(G)|$. For the proof of this direction, we will modify the set $A$ and/or $B$ while maintaining that $(A,B)$ remains a pair of perfectly matched sets and that $|E(A,B)|$ does not decrease. For a pair $(A,B)$ of perfectly matched sets, we say a vertex $x\in A$ (resp. $B$) is \textit{matched} to $y\in B$ (resp. $A$) if $x$ and $y$ are neighbors. Consider the following modifications which we will apply in the same order in which they are described, at each step, a modification is applied exhaustively.

\begin{itemize}
    \item \textbf{M1:} If there exists a vertex $v'_i\in V'\cap A$ which is matched to a vertex $v'_j\in V'\cap B$, then remove $v'_j$ from $B$ and add $v''_i$ to $B$. Observe that it is safe to do so, since $v''_i$ is connected to only $v'_i$ in $G'$ and $|E(A,B)|$ remains unchanged.
    \item \textbf{M2:} If there exists a vertex $v'_i\in V'\cap A$ (resp. $V'\cap B$) which is matched to a vertex $e^p_{ij}\in X_e \cap B$ (resp. $X_e \cap A$) where $p\in [4]$, then remove $e^p_{ij}$ from $B$ (resp. $A$) and add $v''_i$ to $B$ (resp. $A$). Observe that it is safe to do so, since $v''_i$ is connected to only $v'_i$ in $G'$ and $|E(A,B)|$ remains unchanged.
\end{itemize}

We apply the above modifications exhaustively and due to which, for every vertex $v'_i\in V'$, if $v'_i$ is in $(A\cup B)$, then $v'_i$ is matched to $v''_i$, and thus no two neighbors in $V'$ can belong to opposite sets in $(A,B)$. Further, if two distinct vertices $v'_i,v'_j\in V'\cap B$ (resp. $V'\cap A$) are neighbors in $G'$, then none of the vertices from $\{e^1_{ij},e^2_{ij},e^3_{ij},e^4_{ij}\}$ can belong to $A$ (resp. $B$) without violating any property of perfectly matched sets, and they can not be matched by $v'_i$ or $v'_j$, and hence none of them belongs to either $A$ or $B$. Thus, we further modify $(A,B)$ as follows:

\begin{itemize}
    \item \textbf{M3:} If two distinct vertices $v'_i,v'_j\in V'\cap A$ (resp $V'\cap B)$ are neighbors, then we remove  $v'_i,v'_j$ from $A$ (resp. $B$) and remove $v''_i,v''_j$ from $B$ (resp. $A$), we then put $e^2_{ij},e^3_{ij}$ in $B$ and $e^1_{ij},e^4_{ij}$ in $A$. Observe that this modification maintains that $(A,B)$ remains perfectly matched sets and $|E(A,B)|$ remains unchanged.
\end{itemize}

Exhaustive application of the above modification ensures that no two neighbors in $V'$ belong to $A\cup B$. Recalling that $|E(A,B)|=k+2\cdot |E(G)|$, and that any vertex from $X_e$ if belongs to $A\cup B$ is matched to a vertex from $X_e$. Since there are at most $4\cdot |E(G)|$ vertices in $X_e$, they contribute at most $2\cdot |E(G)|$ to the $|E(A,B)|$. Thus, there must be at least $k$ vertices from $V'$ in $A\cup B$ as $V''$ vertices can only be matched to vertices of $V'$. As discussed earlier, no two neighbors in $V'$ belong to $A\cup B$. Thus, let $I^*= \{v_i |\ v'_i \in V'\cap (A\cup B)\}$. Recalling that $v'_i$ and $v'_j$ are adjacent in $G'$ if and only if $v_i$ and $v_j$ are adjacent in $G$, this ensures $I^*$ is an independent set in $G$ and $|I^*|=k$. This finishes the proof.
\end{proof}
\section{Conclusions}
We showed PMS to be $W[1]$-hard with respect to solution size, while obtaining FPT algorithms with respect to several structural parameters. We leave open the problem of obtaining an exponential time algorithm with time complexity significantly lower than than our bound of $O^*(1.966^n)$.
\newline
\newline
\textbf{Acknowledgements.}
We thank anonymous reviewers for their feedback and useful suggestions.
\bibliographystyle{plainurl}
\bibliography{ms}

\end{document}